\documentclass[aps, prx, reprint, amsmath,amssymb,showpacs,floatfix,longbibliography, onecolumn, superscriptaddress,nofootinbib]{revtex4-2}
\usepackage{times} 
\usepackage{graphicx}
\usepackage{dcolumn}
\usepackage{bm}
\usepackage{bbm}
\usepackage{color}
\usepackage{xcolor}
\usepackage{hyperref}
\usepackage{enumitem}
\usepackage{cancel}

\usepackage{CJKutf8}
\usepackage{tikz-cd}
\usepackage{amsthm}
\usepackage{multirow}
\usepackage{tikz}
\usetikzlibrary{arrows.meta,calc}

\usepackage{comment}
\hypersetup{colorlinks=true,citecolor=blue,linkcolor=blue, urlcolor=blue}
\hypersetup{linktocpage}
\newcolumntype{M}[1]{>{\centering\arraybackslash}m{#1}}
\newcolumntype{N}{@{}m{0pt}@{}}
\usepackage{environ}
\usepackage{MnSymbol}

\bibpunct{[}{]}{,}{n}{}{}

\usepackage{amsfonts,amssymb,amsmath}
\usepackage[T1]{fontenc}
\usepackage{tikz}
\newcommand{\bra}[1]{\langle {#1} |}
\newcommand{\ket}[1]{ | {#1} \rangle}
\let\originalleft\left
\let\originalright\right
\renewcommand{\left}{\mathopen{}\mathclose\bgroup\originalleft}
\renewcommand{\right}{\aftergroup\egroup\originalright}

\newtheorem{definition}{Definition}[section]
\newtheorem{theorem}{Theorem}[section]
\newtheorem{corollary}{Corollary}[theorem]
\newtheorem{lemma}[theorem]{Lemma}

\newtheorem*{remark}{Remark}

\newcommand{\F}{\mathbb F_2}
\newcommand{\G}{\mathcal G}
\newcommand{\St}{\mathcal S}

\newcommand{\Span}{\operatorname{span}}

\newcommand{\de}{\mathrm{de}}
\newcommand{\co}{\mathrm{code}}
\newcommand{\st}{\mathrm{st}}

\newtheorem{proposition}{Proposition}

\NewEnviron{eqs}{%
\begin{equation}\begin{split}
    \BODY
\end{split}\end{equation}
}

\usepackage{amsmath}

\DeclareMathOperator*{\argmin}{arg\,min}

\begin{document}
\begin{CJK*}{UTF8}{gbsn}
\title{
Theory of spacetime quantum fault tolerance}
\author{Yijia Xu (许逸葭)}
\email{yijia@terpmail.umd.edu}

\affiliation{Joint Center for Quantum Information and Computer Science, University of Maryland, College Park,
Maryland 20742, USA}

\author{Yixu Wang}
\email{wangyixu@simis.cn}
\affiliation{Shanghai Institute for Mathematics and Interdisciplinary Sciences (SIMIS), Shanghai 200433, China}

\author{Zi-Wen Liu}
\email{zwliu0@tsinghua.edu.cn}
\affiliation{Yau Mathematical Sciences Center, Tsinghua University, Beijing, 100084, China }

\date{\today}

\begin{abstract}
We present a unified algebraic theory of Clifford spacetime quantum fault tolerance, which underpins universal fault-tolerant quantum computation. 
We construct a spacetime chain complex from a tensor network representation of a circuit that encodes fault propagation, detector syndromes, and logical actions through the associated gauge and stabilizer structure.
We show that this theory (1) unifies various formulations of quantum fault tolerance including spacetime codes, Gottesman's gadget framework, fault complexes, and ZX calculus; (2) connects spacetime distances to the gadget-level correctness properties in Gottesman's formalism; (3) naturally accommodates a range of fault models, in particular phenomenological and circuit-level models, bringing these often separately formulated descriptions into a common algebraic framework through different tensor decompositions and fault assignments; and (4) recovers many known fault tolerance constructions and results as special cases, including repeated syndrome extraction, correlated detectors for transversal and fold-transversal gates, Steane- and Knill-type syndrome extraction gadgets, and single-shot error correction.
\end{abstract}

\maketitle
\end{CJK*}



\section{Introduction}

Fault-tolerant quantum computation requires quantum information to remain protected as it is stored, manipulated, and measured using noisy circuit components, making it an inherently spacetime dynamical problem. Extensive research on fault-tolerant quantum computation has focused on the construction and analysis of quantum error-correcting codes. However, static code properties alone are not sufficient: the circuits implementing logical operations must also be robust to physical faults and their propagation throughout the computation. Fault tolerance has been understood from several complementary perspectives. For example, the existence of positive noise thresholds provides a strong asymptotic guarantee, but establishing such a threshold requires probabilistic analysis in addition to structural properties of the circuit. At the level of individual circuits or gadgets, algebraic quantities such as fault distance are often used to characterize robustness to faults and to inform threshold analyses.

Several theoretical formalisms have been developed to characterize fault tolerance in quantum circuits. Gottesman's gadget framework~\cite{gottesman2024surviving} builds circuit-level fault tolerance guarantees from properties of individual gadgets and their composition. More recently, spacetime codes~\cite{bacon2017sparse,delfosse2023spacetime,pesah2025fault},
circuit detector formalisms~\cite{mcewen2023relaxing,derks2025designing},
measurement-based fault tolerance and fault complexes~\cite{logicalblocks,faultcomplex},
and ZX calculus~\cite{bombin2024unifying,rodatz2025fault,rusch2025completeness} have provided frameworks for characterizing different aspects of fault tolerance in quantum circuits. Building on these perspectives, substantial theoretical and experimental progress has been made toward reducing the overhead of fault-tolerant quantum computation, including advances in syndrome extraction protocols, logical gate implementations, and decoding methods~\cite{mcewen2023relaxing,eickbusch2025demonstration,shaw2025morphing,shaw2026optimising,xu2025batched,cowtan2025fast,williamson2026low,correlated_decoding,poor2026ultra}. Nevertheless, a general understanding of when the structural properties
captured by these frameworks suffice to guarantee gadget fault tolerance is still lacking.
At the same time, although these frameworks are often expected to describe closely related structures, their relationships have yet to be systematically clarified.  Establishing concrete translations between them is important both conceptually and practically: it clarifies when constructions, distance bounds, and decoding techniques developed in one framework can be transferred to another, while also providing systematic tools for constructing and verifying fault-tolerant circuits.

In this work,  by combining circuit-to-tensor mappings and homological chain complexes, we develop a unified algebraic theory for spacetime fault tolerance in Clifford circuits that form the skeleton of universal fault-tolerant quantum computation. The microscopic description of our formalism is given by the circuit-to-tensor correspondence through which local tensor relations and circuit boundary conditions define a homological chain complex describing a spacetime subsystem code. The stabilizers associated with individual tensors serve as gauge operators of this spacetime subsystem code. This representation connects spacetime fault tolerance to static error-correcting codes through a common algebraic structure: dynamical elements, including fault propagation, detector constraints, and logical actions, are naturally encoded in this structure and can be directly recovered from it.  In particular, we derive sufficient and necessary criteria for spacetime Clifford circuits, with and without minimum-weight recovery, to satisfy the corresponding gadget correctness conditions. These results bridge circuit-level fault distance and gadget-level correctness, thereby laying the foundation for full fault tolerance analysis of broad classes of quantum circuits and protocols.

Our algebraic framework also offers a unified perspective on the aforementioned fault-tolerance formalisms, clarifying the relationships among them and thereby providing a systematic scheme for constructing and analyzing fault-tolerant circuits.

We apply our formalism to examine a variety of established fault tolerance constructions and results, including repeated syndrome extraction, correlated detectors for transversal and fold-transversal gates, Steane- and Knill-type syndrome extraction gadgets, and single-shot error correction. These examples demonstrate the versatility of our algebraic framework for analyzing fault-tolerant Clifford circuits.

\section{Unified formalism of spacetime quantum fault tolerance}

\subsection{Representing circuit elements as tensors}

We formulate the microscopic description of our framework in terms of tensors. An $l$-leg tensor $T$ is an element of the tensor product of $l$ finite-dimensional vector spaces, $T\in V_1\otimes\cdots\otimes V_l$. The linear nature of quantum mechanics makes tensors a particularly suitable language for describing quantum processes. After choosing bases to identify vector spaces with their duals, the natural isomorphism $V\otimes W^*\simeq\operatorname{Hom}(W,V)$ allows an $l$-leg tensor to be interpreted as a linear map from $V_1\otimes\cdots\otimes V_{l_i}$ to $V_{l_i+1}\otimes\cdots\otimes V_l$, where $l_i+l_o=l$. At the algebraic level, the same tensor can therefore be interpreted with different choices of input and output legs. Quantum computation is commonly described by the circuit model, whose basic ingredients include quantum gates, state initialization, measurements, and feedback based on measurement outcomes. All these ingredients admit tensor representations. For example, an $n$-qubit pure state is an $n$-leg tensor, while an $n$-qubit unitary gate is a $2n$-leg tensor. The tensor description can also accommodate gate faults and correlated noise models.

In universal fault-tolerant quantum computation, Clifford circuits implement logical Clifford operations and underpin central procedures including stabilizer state preparation, syndrome extraction, gate teleportation, and code switching. The stabilizer channel formalism provides a natural setting for analyzing such encoded processes. The Clifford circuits considered here consist of the following ingredients: (1) single-qubit initialization and measurement in the Pauli $X$ or $Z$ basis; (2) Clifford gates generated by $\{\mathrm{CNOT},\mathrm{H},\mathrm{S}\}$; and (3) Pauli gates conditioned on previous measurement outcomes. Developing a systematic understanding of the fault tolerance properties of these circuits is therefore fundamental to fault-tolerant quantum computation and is the central aim of this work.


A useful feature of Clifford circuits is that many of their components can be characterized by an associated gauge group of Pauli operators. For initial states which are Pauli stabilizer states, it is well-known that they can be equivalently described by a set of Pauli stabilizers. The following lemmas make precise how these gauge operators determine circuit components of Clifford unitary gates and Pauli measurements.

\begin{lemma}[Clifford unitary]
Let $U:\mathbb{C}^{2^n}\to\mathbb{C}^{2^n}$ be an $n$-qubit Clifford unitary operator. Then the Choi state $\ket{\mathcal{U}}\equiv (\mathbb{I} \otimes U) \ket{\Omega}$, where $\ket{\Omega}\equiv 2^{-n/2}\sum_{\mathbf{x}\in \mathbb{F}_2^n} |\mathbf{x\rangle } \otimes | \mathbf{x\rangle}$, is a $2n$-qubit Pauli stabilizer state, with gauge generators

\begin{eqs}
    \mathcal{R}=\langle X(j) \otimes U X(j) U^\dagger,  Z(j) \otimes U Z(j) U^\dagger:~ \forall j=1,\ldots,n \rangle.
\end{eqs}

\end{lemma}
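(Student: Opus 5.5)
The plan is to first show that the Bell state $\ket{\Omega}$ is a stabilizer state with generators $X(j)\otimes X(j)$ and $Z(j)\otimes Z(j)$ for $j=1,\ldots,n$, and then transport these stabilizers through $\mathbb{I}\otimes U$. For the first step, note that $\ket{\Omega}$ is a product over $j$ of single-pair Bell states $(\ket{00}+\ket{11})/\sqrt2$. Each pair is stabilized by $XX$ and $ZZ$, which commute with each other. The resulting $2n$ operators are independent, pairwise commuting Hermitian Paulis, and none of them is $-\mathbb{I}$. A stabilizer group on $2n$ qubits with $2n$ independent generators fixes a unique state up to phase, so $\ket{\Omega}$ is exactly the stabilizer state of $\langle X(j)\otimes X(j), Z(j)\otimes Z(j)\rangle$.

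For the second step, for any stabilizer $P\otimes Q$ of $\ket{\Omega}$ we have
\begin{equation}
(P\otimes UQU^\dagger)\ket{\mathcal U}=(\mathbb{I}\otimes U)(P\otimes Q)\ket{\Omega}=\ket{\mathcal U}.
\end{equation}
Conjugation by the unitary $\mathbb{I}\otimes U$ is a group isomorphism. It therefore sends the stabilizer group of $\ket{\Omega}$ onto the stabilizer group of $\ket{\mathcal U}$, preserving commutation, independence, and the number $2n$ of generators. Applying this to the generators gives exactly $X(j)\otimes UX(j)U^\dagger$ and $Z(j)\otimes UZ(j)U^\dagger$.

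It remains to check that these are Pauli operators, which is where the Clifford hypothesis enters. By definition $U$ normalizes the Pauli group, so $UX(j)U^\dagger$ and $UZ(j)U^\dagger$ are Hermitian Paulis, possibly with sign $\pm1$. Hermiticity is preserved by unitary conjugation, so no factors of $\pm i$ appear. Hence $\ket{\mathcal U}$ is a $2n$-qubit Pauli stabilizer state with the stated generators, and $\mathcal R$ is its full stabilizer group.

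The only subtle point is the uniqueness claim: that $2n$ independent commuting generators, none equal to $-\mathbb{I}$, determine a one-dimensional joint $+1$ eigenspace. This is the standard dimension count $2^{2n}/2^{2n}=1$. I will cite it rather than reprove it, noting that $-\mathbb{I}\notin\mathcal R$ because $\ket{\mathcal U}$ is a nonzero vector fixed by every element of $\mathcal R$.
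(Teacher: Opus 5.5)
Your proposal is correct. The paper states this lemma without proof and treats it as standard, so there is no competing argument to compare against; your route is the standard proof. It factors $\ket{\Omega}$ into Bell pairs stabilized by $X(j)\otimes X(j)$ and $Z(j)\otimes Z(j)$, and transports these through $\mathbb{I}\otimes U$ via $(P\otimes UQU^\dagger)(\mathbb{I}\otimes U)=(\mathbb{I}\otimes U)(P\otimes Q)$. The Clifford property then gives Hermitian Pauli images, and the dimension count $2^{2n}/2^{2n}=1$ gives uniqueness. One detail is worth keeping in mind: in general $\ket{\Omega}$ is stabilized by $P\otimes P^{\mathsf{T}}$, not $P\otimes P$ (for example, $Y\otimes Y$ has eigenvalue $-1$ on $(\ket{00}+\ket{11})/\sqrt{2}$). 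Your choice of the real generators $X$ and $Z$ is what makes the stated form $P\otimes UPU^\dagger$ hold with no extra signs.
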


\begin{lemma}[Pauli operator measurement]
Let $M:\mathbb{C}^{2^n}\to\mathbb{C}^{2^n}$ be an $n$-qubit Pauli observable to be measured. Then the density matrix of each post-measurement state $\ket{\psi_M}\bra{\psi_M}$ is stabilized by $M$, $\ket{\psi_M}\bra{\psi_M}=M\ket{\psi_M}\bra{\psi_M}M^\dagger$.
\end{lemma}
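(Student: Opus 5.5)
The plan is to use the standard projective measurement description of a Pauli observable. Write $M$ for an $n$-qubit Pauli operator. Since it is Hermitian and satisfies $M^2=\mathbb{I}$, its spectrum lies in $\{+1,-1\}$, and it decomposes as $M = P_+ - P_-$. Here $P_\pm = (\mathbb{I}\pm M)/2$ are the orthogonal projectors onto the two eigenspaces, and they satisfy $P_++P_-=\mathbb{I}$. Measuring $M$ on an input state $\ket{\psi}$ with outcome $m\in\{\pm1\}$ yields, by the Born rule, the post-measurement state
\begin{eqs}
\ket{\psi_M}=\frac{P_m\ket{\psi}}{\|P_m\ket{\psi}\|},
\end{eqs}
whenever the outcome occurs with nonzero probability.

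\textbf{Eigenvector step.} The key identity is $MP_m = mP_m$, which follows directly from $M^2=\mathbb{I}$:
\begin{eqs}
M(\mathbb{I}+mM)/2=(M+m\,\mathbb{I})/2=m(\mathbb{I}+mM)/2 ,
\end{eqs}
where the middle equality uses $m^2=1$. Hence $M\ket{\psi_M}=m\ket{\psi_M}$, so $\ket{\psi_M}$ is an eigenvector of $M$ with eigenvalue $m$.

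\textbf{Density-matrix statement.} Conjugating the projector gives $M\ket{\psi_M}\bra{\psi_M}M^\dagger = m^2\ket{\psi_M}\bra{\psi_M}=\ket{\psi_M}\bra{\psi_M}$, which is the claim. The sign $m$ cancels, which is exactly why the lemma is phrased for the density matrix: it holds for either outcome. At the level of state vectors, the stabilizer is $mM$, which is where outcome-dependent (Pauli-frame) signs will enter the gauge description.

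\textbf{Mixed inputs.} If the input is mixed, or we wish to work with density matrices throughout, the post-measurement state is $P_m\rho P_m/\mathrm{tr}(P_m\rho)$. The same identity $MP_m=mP_m$, together with $P_mM = mP_m$ (taking adjoints, since $M$ is Hermitian), gives $M\rho' M^\dagger=\rho'$.

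There is no real obstacle here. The only care needed is to exclude zero-probability outcomes, so that the normalization is defined, and to note that the Hermiticity of a measurable Pauli observable requires a real phase $\pm1$ in front of the Pauli string.
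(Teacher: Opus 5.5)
Your argument is correct. The paper states this lemma without proof, treating it as standard, and your projective-measurement derivation via $MP_m=mP_m$ (excluding zero-probability outcomes) supplies exactly the justification it implicitly relies on. Your observation that the outcome sign $m$ cancels under conjugation also matches how the paper uses the lemma: it places the unsigned Pauli $P_o^{(t)}(a)$ in $\mathcal{M}$ and lets its eigenvalue record the measurement outcome.
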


\subsection{Circuit as spacetime tensor network}

A circuit is composed of different components. In terms of tensors, the composition corresponds to contracting tensors into a spacetime tensor network. The contracted legs of the tensors provide information about the spatial location and the sequential order of circuit components.

In terms of the gauge operators of each tensor, the contraction is described as follows. For two tensors with gauge group $\mathcal{R}_1, \mathcal{R}_2$, suppose we want to contract the $i$-th qubit of the first tensor with the $j$-th qubit of the second tensor, we effectively project the contracted two qubits into the Bell state which is stabilized by $X(i,1) X(j,2), Z(i,1) Z(j,2)$. This process can be regarded as a subsystem code whose gauge group is
\begin{eqs}
    \mathcal{R}=\langle \mathcal{R}_1, \mathcal{R}_2 ,  X(i,1) X(j,2), Z(i,1) Z(j,2) \rangle.
\end{eqs}

The single contraction can be directly generalized to the contraction of multiple legs and multiple tensors. The resulting tensor is the network that represents the full circuit. For a general spacetime tensor network, we can describe it by the gauge group 
\begin{eqs}
    \mathcal{G}=\langle \mathcal{S}_{\mathrm{in}},  \mathcal{R}, \mathcal{B} , \mathcal{M} \rangle.
\end{eqs}
Here, for each input initial state or ancilla initialization, we put its stabilizers in the set $\mathcal{S}_{\mathrm{in}}$ and regard them as gauge generators. \textcolor{black}{Note that we do not fix its gauge at this stage, because we would like to describe the fault tolerance property of a family of input states.} For each Clifford gate, we put its gauge operators in the set $\mathcal{R}$. For each contracted pair of legs, say leg $a$ of gate $k$ and leg $b$ of gate $m$,  we put the gauge generators $X(a,k) X(b,m), Z(a,k) Z(b,m)$ in the set $\mathcal{B}$ and call them the bond gauge. For each Pauli measurement, we put the measurement operator in the set $\mathcal{M}$ and call them the measurement gauge.

We call the starting ends of all the circuit wires the input ports, and the finishing ends the output ports of a circuit. Input and output ports carry the quantum boundary degrees of freedom, on which the input gauge group $\mathcal{S}_{\mathrm{in}}$ and the measurement gauge group $\mathcal{M}$ act, respectively.

In general, we describe the circuit-to-tensor mapping as follows:
\begin{enumerate}
    \item \textbf{Time slicing and identity padding.}
Divide the circuit into discrete time steps. Each time step contains exactly one layer of Clifford circuits or state initialization. If a qubit is not acted on by a non-identity gate in a given time step, insert an identity gate acting on this qubit at this step.

\item \textbf{Input boundary.}
    Represent each input port $q$ by a single spacetime-code qubit. The collection of input ports is $Q_{\mathrm{input}}$. If an input state is inserted at time $t$ at ports $q$, put its stabilizer $S_o^{(t)}(q)$ in the input gauge group \(\mathcal S_{\mathrm{in}}\equiv  \langle S_o^{(t)}(q): \forall q \in Q_{\mathrm{input}}\rangle\) which includes ancilla initialization. Equivalently, the spacetime circuit may be viewed as beginning at the output boundary of the preceding gadget.

\item \textbf{Gates or gadgets.}
    Replace each gate or gadget with its Choi tensor. For a gate or gadget $k$ inserted at time $t$, acting on a set of qubits $1,\ldots,n$, its gauge group is 
    \begin{eqs}
    \mathcal{R}_k=\langle X_i^{(t)}(j) \otimes U X_o^{(t)}(j) U^\dagger,  Z_i^{(t)}(j) \otimes U Z_o^{(t)}(j) U^\dagger:~ \forall j=1,\ldots,n  \rangle.
    \end{eqs}
    If the circuit contains $L$ elementary gadgets with local gauge groups $\mathcal R_1,\ldots,\mathcal R_L$, define
    \begin{eqs}
    \mathcal{R}\equiv\langle  \mathcal{R}_1,\ldots, \mathcal{R}_L \rangle.
    \end{eqs}

\item \textbf{Internal circuit connections.}
    For every circuit wire $e$ connecting two consecutive time steps $t$ and $t+1$, retain the two associated tensor-leg coordinates as physical qubits of the spacetime code and introduce the bond gauge subgroup 
    \begin{eqs}
        \mathcal{B}_e=\langle X^{(t)}_{o}(e) X^{(t+1)}_{i}(e), Z^{(t)}_{o}(e) Z^{(t+1)}_{i}(e)\rangle. 
    \end{eqs}
    The  total bond gauge group is $
        \mathcal{B}\equiv \langle \mathcal{B}_e: \forall e\rangle.$

    \item \textbf{Output boundary.}
    The output boundary $Q_{\mathrm{output}}$ is formed by the output ports of all the circuit wires.

    \item \textbf{Measurements.}
    For each Pauli measurement $P$ executed at time $t$ at place $a$, put $P_o^{(t)}(a)$ in the measurement gauge set.
    \begin{eqs}
        \mathcal{M}\equiv\langle P_o^{(t)}(a):  a\in \Lambda_{\mathrm{meas}}\subseteq Q_{\mathrm{output}} \rangle, 
    \end{eqs}
    where $\Lambda_{\mathrm{meas}}$ denote the measurement positions. Different measurements are assumed to be supported on disjoint positions. 
    The eigenvalue of $P_o^{(t)}(a)$ records the corresponding measurement outcome.

\end{enumerate}

To facilitate further discussions, we describe a quantum circuit as follows. We label all the qubits that are ever involved in a given circuit as $q=1,\ldots, n_{\mathrm{tot}}$. At each time step, qubits can be added to the circuit, either in a known state, or in an unknown logical state of a stabilizer code. Each qubit enters the circuit at time $t_{i}(q)$. Afterwards, they exit the circuit at a later time $t_{o}(q)\geq t_{i}(q)$. When exiting, a qubit is either preserved as part of the circuit output, or destructively measured.\footnote{We understand a non-destructive measurement as the measured data qubits coupled to some ancilla and the ancilla being destructively measured. In this sense, every measurement is destructive.} Therefore, each physical qubit introduces spacetime locations $q_{o}^{(t_{i}(q))},q_{i}^{(t_{i}(q)+1)},\ldots, q_{i}^{(t_{o}(q))}, q_{o}^{(t_{o}(q))}$, in total $2(t_{o}(q))-t_{i}(q)))+1$ locations. The oddness of the number of locations for each qubit is crucial to produce the correct commutation relations between spacetime code stabilizers and logical operators, and was first discussed in Ref.~\cite{bacon2017sparse}. The collection of all the locations is denoted as $\mathcal{L}\equiv\{q_i^{(t)}, q_o^{(t)}\}$. The collection of locations $Q_{\mathrm{input}}\equiv\{q_{o}^{(t_{i}(q))}\}$ denotes the input boundary of a circuit and $Q_{\mathrm{output}}\equiv\{ q_{o}^{(t_{o}(q))}\}$ denotes the output boundary respectively. The gauge group $\mathcal S_{\mathrm{in}}$ is supported exclusively on the input boundary, while the measurement gauge group $\mathcal{M}$ is supported exclusively on the output boundary.

Up till now, the gauge group description helps us to translate a quantum circuit into a tensor network. It is natural to use the same gauge group to define a quantum code, the spacetime subsystem code.
\begin{definition}[Spacetime subsystem code] The spacetime subsystem code is defined on the qubits on the spacetime locations $\mathcal{L}$ induced by the circuit-to-tensor mapping. The gauge group of the spacetime subsystem code is the gauge group of the spacetime tensors, 
\begin{eqs}
    \mathcal{G}=\langle \mathcal{S}_{\mathrm{in}},  \mathcal{R}, \mathcal{B} , \mathcal{M} \rangle.
\end{eqs}
The stabilizer group of the spacetime subsystem code is the center of its gauge group,
\begin{eqs}
    \mathcal S\equiv\mathrm{Cen}(\mathcal G)=\mathcal{G}^\perp\cap \mathcal G.
\end{eqs}   
\end{definition}

A crucial difference in our treatment of the tensor network from the previous literature is that, in \cite{Charles2022Lego}, only the dangling legs of a tensor network represent physical degrees of freedom. In the present framework of circuit-to-tensor mapping, each pair of contracted legs and each measured leg in the tensor network still represent a fault position, respectively. Therefore, they carry physical degrees of freedom in the spacetime subsystem code. Specifically, on a pair of contracted legs $e$, the bond gauge group is $\mathcal{B}_e=\langle X_{L,e} X_{R,e},~Z_{L,e} Z_{R,e}\rangle,$ where $L,R$ denote the two qubits in the same bond. Modulo a phase-free 2-qubit Pauli group $\mathcal{P}_2$ by $\mathcal{B}_e$ gives a phase-free single-qubit Pauli group $\mathcal{P}_2/ \mathcal{B}_e \cong \mathcal{P}_1$. Therefore, every two-qubit Pauli fault on a contracted leg is either a bond gauge operator or is gauge equivalent to a Pauli fault acting on only one of the two qubits. In this sense, the two physical qubits of the spacetime code on each contracted leg represent one fault position in the original circuit.

We consider the edge-flip error model for the spacetime subsystem code, where a single-qubit Pauli error channel is applied to each spacetime Pauli location in $\mathcal{L}$. Each tensor represents an ideal gadget, with faults acting only on its external legs. This model is powerful and flexible enough to describe various error models in the literature. This is because we can choose the tensor composition or decomposition rules to assign faultless or faulty locations. This allows different noise models, including phenomenological and circuit-level models, to be represented within the same framework.

For example, the phenomenological error model of syndrome extraction includes Pauli errors on the data qubits and flips of measurement outcomes, while treating syndrome extraction itself as ideal. This error model corresponds to a tensor description in which each syndrome extraction gadget is represented by a composited ideal tensor, with data errors and measurement-outcome errors assigned to the appropriate external legs. Gate faults within the syndrome extraction gadget are therefore excluded. We analyze this example in Section~\ref{sec:foliation}. A finer decomposition of the spacetime tensor into a network of tensors of degree at most three yields a version of the circuit-level noise model. In this description, faults can occur within the larger gadgets that the coarse-grained model treated as ideal.

\begin{figure}[h]
    \centering
    \includegraphics[width=0.7\textwidth]{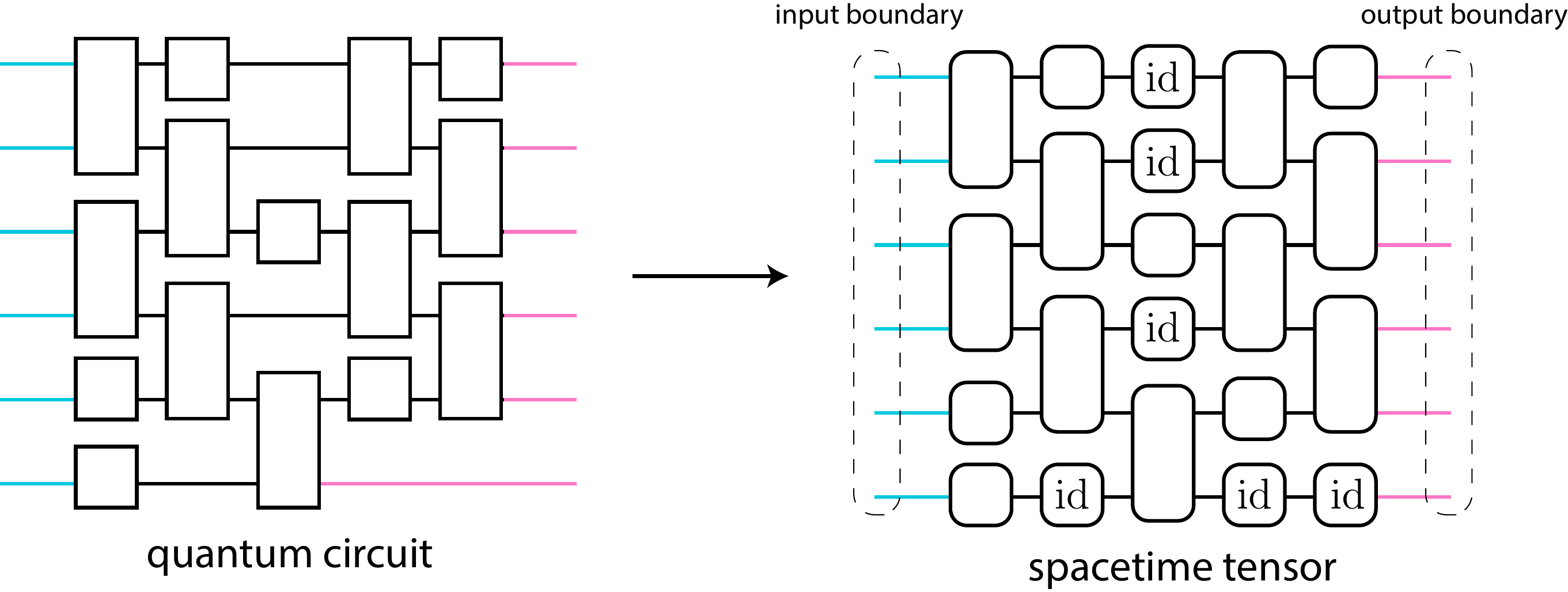}
    \caption{Mapping a quantum circuit to a spacetime tensor. Left: Each rectangle represents a nonidentity unitary gate. Right: The circuit is decomposed into elementary gate tensors depicted in rounded rectangles, with identity tensors inserted on idle qubits to obtain a uniform spacetime structure. The resulting tensor network defines a subsystem code with gauge group $\langle \mathcal{R}, \mathcal{B} ,\mathcal{M},\mathcal{S}_{\mathrm{in}}\rangle$. Each contracted leg represents a fault location in the circuit and is associated with two physical qubits. Each input port represents two physical qubits, whereas each output port represents a single physical qubit. The dashed boxes indicate the input and output boundaries. }
    \label{fig:circuit_to_code}
\end{figure}

\subsection{Structural properties of the spacetime subsystem code}
Following the circuit-to-tensor mapping, we can establish some universal properties of the spacetime subsystem code.

In the following, we consider a spacetime circuit implementing the unitary evolution $ U_{\tau} U_{\tau-1}\cdots U_1$. Each $U_i$ is supported on the qubits that are currently in the circuit. Define the accumulated evolution up to time $t$ as 
    \begin{eqs}
        \mathcal{U}_{t}\equiv U_t U_{t-1} \cdots U_1,\quad \mathcal{U}_{0}\equiv\mathrm{id}.
    \end{eqs}

The following lemma describes the form of spacetime code stabilizer. 

\begin{lemma}\label{lemma:spacetimestabilizer}
The stabilizers of the spacetime subsystem code take the form
 \begin{eqs}\label{eq:spacetimestabilizer}
\prod_{q=1}^{n_{\mathrm{tot}}}\left( P_{o}^{(t_{i}(q))}(e_q) \otimes\prod_{t=t_{i}(q)+1}^{t_{o}(q)} \left(\mathcal{U}_{t-1} P_{i}^{(t)}(e_q)\mathcal{U}_{t-1}^\dagger
\otimes \mathcal{U}_{t} P_{o}^{(t)}(e_q)\mathcal{U}_{t}^\dagger\right)\right),
\end{eqs}
where $P_o^{(t)}(e_q)$ and $P_i^{(t)}(e_q)$ denote a fixed Pauli operator $P$ supported on the specified locations. Further, it satisfies the commutation relations at the input and output boundary,
\begin{eqs}
[\prod_{q=1}^{n_{\mathrm{tot}}} P_{o}^{(t_{i}(q))}(e_q),s]=0,~\forall s\in \mathcal{S}_{\mathrm{in}}, \qquad [\prod_{q=1}^{n_{\mathrm{tot}}}\mathcal{U}_{t} P_{o}^{(t_{o}(q))}(e_q)\mathcal{U}_{t}^\dagger,m]=0,~\forall m\in \mathcal{M},
\end{eqs}
and either $\prod_{q=1}^{n_{\mathrm{tot}}} P_{o}^{(t_{i}(q))}(e_q)\in \mathcal{S}_{\mathrm{in}}$, or $\prod_{q=1}^{n_{\mathrm{tot}}}\mathcal{U}_{t} P_{o}^{(t_{o}(q))}(e_q)\mathcal{U}_{t}^\dagger\in \mathcal{M}$, or both.

\end{lemma}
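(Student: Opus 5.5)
The plan is to compute the center $\mathcal S=\mathcal G\cap\mathcal G^\perp$ in two stages. First I will determine $\mathcal G^\perp$ restricted to the ``bulk'' gauge group $\langle\mathcal R,\mathcal B\rangle$. Then I will impose commutation with the boundary groups $\mathcal S_{\mathrm{in}}$ and $\mathcal M$, and finally intersect with $\mathcal G$. Throughout I will work with phase-free Paulis, i.e.\ symplectic vectors over $\F$, so that products are sums and commutation is the symplectic form.

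\textbf{Step 1 (bond constraint).} Take a Pauli $O$ on $\mathcal L$ commuting with every bond gauge $X_o^{(t)}(e)X_i^{(t+1)}(e)$ and $Z_o^{(t)}(e)Z_i^{(t+1)}(e)$. The two-qubit Paulis commuting with both Bell stabilizers are exactly $\{I I, XX, YY, ZZ\}$. Hence on each contracted pair $O$ acts as $Q\otimes Q$ for some single-qubit Pauli $Q$.

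\textbf{Step 2 (gate constraint).} For each gate tensor $k$ at time $t$, the restriction of $O$ to the legs of $k$ must commute with all of $\mathcal R_k$. By the Clifford unitary lemma, $\mathcal R_k$ is the full stabilizer group of the Choi state of $U_t$, a maximal abelian group. Its commutant therefore equals $\mathcal R_k$ itself, so the restriction lies in $\mathcal R_k$. Such elements have the form $A_i\otimes U_tAU_t^\dagger{}_o$: input-side Pauli $A$ and output-side Pauli $U_tAU_t^\dagger$. Identity gates propagate $A$ unchanged.

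\textbf{Step 3 (chaining).} Combining Steps 1 and 2 along each qubit's worldline yields a consistent propagated Pauli. Introduce the Heisenberg-picture coordinates $P$ by writing the operator at input leg of step $t$ as $\mathcal U_{t-1}P\mathcal U_{t-1}^\dagger$. At the output leg it then becomes $\mathcal U_tP\mathcal U_t^\dagger$, because $U_t\mathcal U_{t-1}=\mathcal U_t$. The bond condition forces the same $P$ across adjacent time steps. A fixed $P$ per qubit is then determined by its value at the input location $P_o^{(t_i(q))}(e_q)$, which gives exactly the form \eqref{eq:spacetimestabilizer}.

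The odd number of locations per qubit matters here, as noted after Bacon et al. The single input location carries the free boundary value, while each time step contributes an (input, output) pair. One must check that a qubit entering mid-circuit starts its chain correctly, and that $\mathcal U_t$ acting only on present qubits makes the conjugations well defined. This bookkeeping, including the indexing at $t_i(q)$ and $t_o(q)$, is the step I expect to be the main obstacle. I would handle it by treating a newly added qubit as having $\mathcal U_{t_i(q)}$ act trivially on it at entry.

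\textbf{Step 4 (boundaries and center).} Commuting with $\mathcal S_{\mathrm{in}}$, which is supported only on input locations, involves only the input restriction $\prod_qP_o^{(t_i(q))}$. Commuting with $\mathcal M$ involves only the output restriction. This gives the two stated commutation relations.

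Finally, $O$ must lie in $\mathcal G$, so $O$ is a product $s\,r\,b\,m$ with $s\in\mathcal S_{\mathrm{in}}$, $r\in\mathcal R$, $b\in\mathcal B$ and $m\in\mathcal M$. The bulk part $rb$ can be absorbed into the chain structure. To see what remains, note that a product of bulk generators restricted to the boundaries leaves an input and an output boundary term. These are related by propagation through the circuit, so $rb$ alone cannot produce a nontrivial boundary pattern on one end without the other. Any element of $\langle\mathcal R,\mathcal B\rangle$ with nontrivial boundary action therefore needs boundary generators to cancel at least one end. I would formalize this by showing $\langle \mathcal R,\mathcal B\rangle\cap\langle\mathcal R,\mathcal B\rangle^\perp$ consists of chains whose two boundary restrictions are both nontrivial, or both trivial only in the zero case. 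Membership in $\mathcal G$ then requires the input end to be in $\mathcal S_{\mathrm{in}}$ or the output end in $\mathcal M$, or both. This yields the final disjunction.
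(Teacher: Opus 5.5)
Your Steps 1--3 coincide with the paper's argument: bond gauges force equal Paulis across each bond, and gate gauges force $A\otimes U_tAU_t^\dagger$ on each gate pair. Your maximal-abelian justification of the gate step is cleaner than the paper's.

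The gap is the membership-in-$\mathcal G$ part of Step 4. Your supporting claim that $rb\in\langle\mathcal R,\mathcal B\rangle$ ``cannot produce a nontrivial boundary pattern on one end without the other'' is false. A single gate-gauge generator in a wire's last layer acts on $Q_{\mathrm{output}}$ but not on $Q_{\mathrm{input}}$. A single bond generator at a wire's entry acts on $Q_{\mathrm{input}}$ but not on $Q_{\mathrm{output}}$. The substitute object you propose, $\langle\mathcal R,\mathcal B\rangle\cap\langle\mathcal R,\mathcal B\rangle^\perp$, is actually trivial: by the factorization below, a chain with input part $a$ lies in $\langle\mathcal R,\mathcal B\rangle$ only if $a$ commutes with every Pauli on $Q_{\mathrm{input}}$. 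So it carries no information about which chains lie in the full $\mathcal G$.

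More seriously, the element-wise necessity you are aiming for is not true. Take two idle wires: wire 1 prepared in $\ket{0}$ and left unmeasured, wire 2 with unconstrained input and measured in $Z$. The $Z$-chain on wire 1 (a stabilizer tube) and the $Z$-chain on wire 2 (a logical measurement) both lie in $\mathcal S$. Yet their product has input part $Z_1Z_2\notin\mathcal S_{\mathrm{in}}=\langle Z_1\rangle$ and output part $Z_1Z_2\notin\mathcal M=\langle Z_2\rangle$. So the final clause can only hold for a generating set of $\mathcal S$, and no argument of the kind you sketch can establish it for every element.

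The missing idea is the factorization the paper uses. Take a chain $O$ with input part $a$ and output part $C(a)$. Grouping all non-input locations into gate pairs gives $O=a\cdot r$ with $r\in\mathcal R$. Grouping all non-output locations into bond pairs gives $O=b\cdot C(a)$ with $b\in\mathcal B$. This immediately gives the direction the paper proves: $a\in\mathcal S_{\mathrm{in}}$ puts $O$ in $\langle\mathcal S_{\mathrm{in}},\mathcal R\rangle$, and $C(a)\in\mathcal M$ puts $O$ in $\langle\mathcal B,\mathcal M\rangle$.

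If you want the converse in its correct form, continue from the factorization:
\begin{itemize}
\item $O\in\mathcal G$ iff $a\in\mathcal G$.
\item The two factorizations also show $\lambda(a_1,a_2)=\lambda(C(a_1),C(a_2))$, so $C$ is symplectic.
\item With $A=\mathcal S_{\mathrm{in}}$ and $B=C^{-1}(\mathcal M)$, your Steps 1--4 then give $\mathcal G^\perp=\{\text{chains with } a\in(A+B)^\perp\}$.
\item $\mathcal G=(\mathcal G^\perp)^\perp$ then forces $a\in A+B$.
\end{itemize}
Since $A$ and $B$ are isotropic, $(A+B)\cap(A+B)^\perp=(A\cap B^\perp)+(B\cap A^\perp)$. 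So every stabilizer is a product of two chains: one with input part in $\mathcal S_{\mathrm{in}}$ and output commuting with $\mathcal M$, and one with output part in $\mathcal M$ and input commuting with $\mathcal S_{\mathrm{in}}$. That generator-level statement is the correct version of the lemma's last clause.
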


\begin{proof}
The proof follows from solving the requirement $S=\mathcal{G}^\perp\cap \mathcal G$. To have a spacetime stabilizer commute with each gauge element in the bond gauge group, at each adjacent pair of locations $q_o^{(t)}$ and $q_i^{(t+1)}$, the two Pauli operators in the stabilizer supported on these locations have to be identical. Similarly, to commute with each gate gauge generator, at each adjacent pair of locations $q_i^{(t)}$ and $q_o^{(t)}$, the two Pauli operators in the stabilizer supported on these locations have to take the form $P_i^{(t)}\otimes U_t P_o^{(t)} U_t^\dagger$ for a common choice of Pauli $P$. Note that $\mathcal{U}_{t_i(q)} P_{i}^{(t_i(q)+1)}(e_q)\mathcal{U}_{t_i(q)}^\dagger=P_{i}^{(t_i(q)+1)}(e_q)$, following the the rules of the circuit-to-tensor map.  These two sets of constraints already give the form of Eq. \eqref{eq:spacetimestabilizer}. The two sets of commutation constraints arise from the requirement that spacetime stabilizers commute with $\mathcal{S}_{\mathrm{in}}$ and $\mathcal{M}$. The final two requirements ensure that a spacetime stabilizer is in the gauge group. Indeed, if $\prod_{q=1}^{n_{\mathrm{tot}}} P_{o}^{(t_{i}(q))}(e_q)\in \mathcal{S}_{\mathrm{in}}$, it can be generated by the elements in $\mathcal{S}_{\mathrm{in}}\cup \mathcal{R}$. If $\prod_{q=1}^{n_{\mathrm{tot}}}\mathcal{U}_{t} P_{o}^{(t_{o}(q))}(e_q)\mathcal{U}_{t}^\dagger\in \mathcal{M}$, it can be generated by the elements in $\mathcal{B}\cup\mathcal{M}$. 
\end{proof}

\begin{definition}[Stabilizer tube, logical measurement and detector]\label{def:detector_stabilizer_tube}
Following the notations in Lemma \ref{lemma:spacetimestabilizer}, for a spacetime stabilizer taking the form of Eq. \eqref{eq:spacetimestabilizer}, if it satisfies $\prod_{q=1}^{n_{\mathrm{tot}}} P_{o}^{(t_{i}(q))}(e_q)\in \mathcal{S}_{\mathrm{in}}$, but $\prod_{q=1}^{n_{\mathrm{tot}}}\mathcal{U}_{t} P_{o}^{(t_{o}(q))}(e_q)\mathcal{U}_{t}^\dagger$ is not supported on $\Lambda_{\mathrm{meas}}$, then it is called a stabilizer tube. If it satisfies $\prod_{q=1}^{n_{\mathrm{tot}}}\mathcal{U}_{t} P_{o}^{(t_{o}(q))}(e_q)\mathcal{U}_{t}^\dagger\in \mathcal{M}$, but $\prod_{q=1}^{n_{\mathrm{tot}}} P_{o}^{(t_{i}(q))}(e_q)\notin \mathcal{S}_{\mathrm{in}}$, then it is called a logical measurement. If it satisfies both $\prod_{q=1}^{n_{\mathrm{tot}}} P_{o}^{(t_{i}(q))}(e_q)\in \mathcal{S}_{\mathrm{in}}$, and  $\prod_{q=1}^{n_{\mathrm{tot}}}\mathcal{U}_{t} P_{o}^{(t_{o}(q))}(e_q)\mathcal{U}_{t}^\dagger\in \mathcal{M}$, then it is called a detector. 
\end{definition}

\begin{lemma}[Logical content of spacetime code]\label{lemma:spacetimelogical}
Let a circuit have $n_{\mathrm{tot}}$ wires. Let the rank of the input gauge group $\mathcal{S}_{\mathrm{in}}$ be $ n_{\mathrm{tot}}-k$, and the rank of the measurement gauge group be $n_{\mathcal{M}}$. If the spacetime code stabilizers are either stabilizer tubes or detectors, then the spacetime subsystem code encodes $k$ logical qubits.
\end{lemma}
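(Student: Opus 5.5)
The plan is to use the standard counting formula for subsystem codes. For a gauge group $\mathcal G$ on $N$ qubits with stabilizer group $\mathcal S=\mathrm{Cen}(\mathcal G)$, the number of logical qubits is
\begin{eqs}
k_{\mathrm{log}} = N - \tfrac12\left(\operatorname{rank}\mathcal G+\operatorname{rank}\mathcal S\right).
\end{eqs}
This holds because $\mathcal G/\mathcal S$ is a symplectic space of dimension $2g$, and the number of logical qubits is $k_{\mathrm{log}}=N-\operatorname{rank}\mathcal S-g$. So the work reduces to computing $N=|\mathcal L|$, $\operatorname{rank}\mathcal G$ and $\operatorname{rank}\mathcal S$ in terms of $n_{\mathrm{tot}}$, $k$, $n_{\mathcal M}$ and the circuit data, and checking that everything except $k$ cancels.

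\textbf{Counting qubits.} Each wire $q$ carries $2(t_o(q)-t_i(q))+1$ locations, so $N=\sum_q\bigl(2T_q+1\bigr)$ with $T_q\equiv t_o(q)-t_i(q)$.

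\textbf{Ranks of the gate and bond parts.} The gate generators in $\mathcal R$ contribute rank $2$ per wire per time step. The bond generators in $\mathcal B$ also contribute rank $2$ per wire per step. Together they give $\sum_q 4T_q$ independent generators, supported on all locations except the input-boundary locations $q_o^{(t_i(q))}$.

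\textbf{Rank of the full gauge group.} I will show that $\langle\mathcal R,\mathcal B\rangle$ is exactly the group of Paulis of the form in Eq.~\eqref{eq:spacetimestabilizer}, but with arbitrary, unconstrained boundary Paulis. It is rank $\sum_q 4T_q$ and contains every single-qubit Pauli on each input location only via the combination with the output. Adding $\mathcal S_{\mathrm{in}}$ (rank $n_{\mathrm{tot}}-k$) and $\mathcal M$ (rank $n_{\mathcal M}$) then requires tracking overlaps. The cleanest route is to compute $\mathcal G^\perp$ directly. By the argument in the proof of Lemma~\ref{lemma:spacetimestabilizer}, $\mathcal G^\perp$ consists of the "propagated" Paulis labeled by an input Pauli $P_{\mathrm{in}}$ on the $n_{\mathrm{tot}}$ wires, subject to two constraints: $P_{\mathrm{in}}$ commutes with $\mathcal S_{\mathrm{in}}$, and its propagated output commutes with $\mathcal M$. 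Hence
\begin{eqs}
\operatorname{rank}\mathcal G^\perp = 2n_{\mathrm{tot}}-(n_{\mathrm{tot}}-k)-n_{\mathcal M}+\delta,
\end{eqs}
where $\delta$ is the dimension of dependency between the two commutation constraints. Then $\operatorname{rank}\mathcal G=2N-\operatorname{rank}\mathcal G^\perp$.

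\textbf{Rank of the stabilizer group.} The hypothesis enters here. Since every element of $\mathcal G^\perp\cap\mathcal G$ is a stabilizer tube or a detector, each stabilizer is labeled by an input element lying in $\mathcal S_{\mathrm{in}}$. Conversely, an element of $\mathcal S_{\mathrm{in}}$ lies in $\mathcal G^\perp$ exactly when its propagation commutes with $\mathcal M$. The hypothesis excludes logical measurements, i.e.\ output elements of $\mathcal M$ whose input is not in $\mathcal S_{\mathrm{in}}$. This should force every measured operator's backward propagation to commute with $\mathcal S_{\mathrm{in}}$ modulo $\mathcal S_{\mathrm{in}}$ itself. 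Concretely, I expect the constraints from $\mathcal M$ on $\mathcal S_{\mathrm{in}}$ to have rank $n_{\mathcal M}-\delta$. That gives $\operatorname{rank}\mathcal S=(n_{\mathrm{tot}}-k)-n_{\mathcal M}+\delta$.

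\textbf{Assembling.} Substituting into the counting formula, the $\sum_q T_q$ terms, $n_{\mathcal M}$ and $\delta$ should cancel, leaving $k_{\mathrm{log}}=k$.

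\textbf{Main obstacle.} The hard step is the linear-algebra bookkeeping of $\delta$: the overlap between the constraints imposed by $\mathcal M$ and by $\mathcal S_{\mathrm{in}}$. Equivalently, it is showing that the "no logical measurement" hypothesis makes the contributions from $\mathcal M$ to $\operatorname{rank}\mathcal G$ and to $\operatorname{rank}\mathcal S$ match so that they cancel. I would first work in the symplectic vector space $\mathbb F_2^{2n_{\mathrm{tot}}}$ of input Paulis. Let $A=\mathcal S_{\mathrm{in}}$ and let $B$ be the back-propagated $\mathcal M$. Then $\operatorname{rank}\mathcal S=\dim(A\cap B^\perp)$ and $\operatorname{rank}\mathcal G^\perp=\dim(A^\perp\cap B^\perp)$. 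The hypothesis reads $B\cap A^\perp\subseteq A$. With these identifications, the cancellation becomes a dimension identity for subspaces of a symplectic space.
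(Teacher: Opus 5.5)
Your proposal is correct and follows the paper's own argument. Both use the counting formula $N-\tfrac12(\operatorname{rank}\mathcal G+\operatorname{rank}\mathcal S)$, and your $\delta=\dim(A\cap B)$ is the paper's detector count $n_{\mathrm{de}}$. The hypothesis enters exactly as in the paper, as $B\cap A^\perp\subseteq A$; deriving this needs $\mathcal M$, hence $B$, to be abelian. Since $A$ is isotropic, this gives $B\cap A^\perp=A\cap B$, and rank--nullity for the pairing $A\times B\to\mathbb{F}_2$ then yields $\dim(A\cap B^\perp)=\dim A-\dim B+\delta$, which closes your ``main obstacle''. The only variation is that you obtain $\operatorname{rank}\mathcal G=2N-\dim(A+B)^\perp$ from the centralizer, whereas the paper counts generators and subtracts one dependency per detector; your route is, if anything, the cleaner justification.

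One side claim is wrong but unused. The group of propagated Paulis of Eq.~\eqref{eq:spacetimestabilizer} is the centralizer of $\langle\mathcal R,\mathcal B\rangle$, of rank $2n_{\mathrm{tot}}$, not $\langle\mathcal R,\mathcal B\rangle$ itself. Also, the bond gauges do act on the input-boundary locations. Neither point affects your final argument.
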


\begin{proof}
Note that the ansatz in Eq. \eqref{eq:spacetimestabilizer} defines a group isomorphism $C$ from the Pauli group $\mathcal{P}^{\otimes\mathcal{L}}$ supported on $Q_{\mathrm{input}}$ to the Pauli group supported on $Q_{\mathrm{output}}$. Specifically, $C: \prod_{q=1}^{n_{\mathrm{tot}}} P_{o}^{(t_{i}(q))}(e_q)\mapsto\prod_{q=1}^{n_{\mathrm{tot}}}\mathcal{U}_{t} P_{o}^{(t_{o}(q))}(e_q)\mathcal{U}_{t}^\dagger$. Therefore, we can consider the inverse isomorphism acting on $\mathcal{M}$, $C^{-1}(\mathcal{M})$. If the spacetime code stabilizers are either stabilizer tubes or detectors, then if $m\in C^{-1}(\mathcal{M})$ commutes with every element $s\in \mathcal{S}_{\mathrm{in}}$, we have $m\in \mathcal{S}_{\mathrm{in}}$, and vice versa.  If otherwise, plugging $m$ into the ansatz in Eq.\eqref{eq:spacetimestabilizer} yields a valid spacetime stabilizer that is neither a stabilizer tube nor a detector, contradicting the assumption. 
Now consider the number of generators of spacetime stabilizer tubes. We show that, under the assumption of the lemma, it is equivalent to $n_{\mathrm{tot}}-k-n_{\mathcal{M}}$. This is because 
\begin{eqs}
    \text{\# of stabilizer tubes}=& \text{\# of generators of}~ \mathcal{S}_{\mathrm{in}}~\text{that commute with } C^{-1}(\mathcal{M})~ \text{but not in}~ C^{-1}(\mathcal{M})\\
    =&n_{\mathrm{tot}}-k-\text{\# of generators of $\mathcal{S}_{\mathrm{in}}$ that anticommutes with some elements in $C^{-1}(\mathcal{M})$}\\
    &-\text{\# of generators of $\mathcal{S}_{\mathrm{in}}$ that are in $C^{-1}(\mathcal{M})$}\\
    =&n_{\mathrm{tot}}-k-\text{\# of generators of $C^{-1}(\mathcal{M})$ that anticommute with some elements in $\mathcal{S}_{\mathrm{in}}$}\\
    &-\text{\# of generators of $C^{-1}(\mathcal{M})$ that are in $\mathcal{S}_{\mathrm{in}}$}\\
    =&n_{\mathrm{tot}}-k-n_{\mathcal{M}}.
\end{eqs}
From the second equality to the third, we use the fact that, in terms of generators, we can find a basis such that only its symplectic pair anticommutes with it.  From the third equality to the fourth, we use the fact proved above that the number of generators of $C^{-1}(\mathcal{M})$ that are in $\mathcal{S}_{\mathrm{in}}$ is equivalent to that of generators of $C^{-1}(\mathcal{M})$ that commute with $\mathcal{S}_{\mathrm{in}}$. Therefore, the two terms exhaust all the generators in $C^{-1}(\mathcal{M})$. On the other hand, the number of detector generators cannot be determined without the detailed structure of the circuit, so we assume it to be $n_{\mathrm{de}}$. Together, we have that the number $n_{\mathcal{S}}$ of generators of the spacetime stabilizer $\mathcal{S}$ is the sum of two types of generators, $n_{\mathcal{S}}=n_{\mathrm{tot}}-k-n_{\mathcal{M}}+n_{\mathrm{de}}$.

Suppose there are $K$ tensors in the circuit, and each tensor has $l_k$ legs. So to determine each tensor, one needs $l_K$ gauge generators. Therefore, the gate gauge has $n_{\mathcal{R}}=\sum_K l_K$ generators. These tensors, together with the input state, induce $N=\sum_k l_k+n_{\mathrm{tot}}$ spacetime qubits. These $K$ tensors are contracted to become a tensor network with $n_{\mathrm{tot}}$ input legs that contract with input states and $n_{\mathrm{tot}}$ output legs. Therefore, there are $\frac{1}{2}(\sum_K l_K+n_{\mathrm{tot}}-n_{\mathrm{tot}})=\frac{1}{2}(\sum_K l_K)$ pairs of legs contracted and the bond gauge group $\mathcal{B}$ has $n_{\mathcal{B}}=\sum_K l_K$ generators. Let $n_{\mathcal{G}}$ be the number of independent generators of the spacetime gauge group $\mathcal{G}=\langle \mathcal{S}_{\mathrm{in}}, \mathcal{B},\mathcal{R},\mathcal{M}\rangle$.  Each detector of the spacetime stabilizer implies a degenerate generator in $\mathcal{G}$, because it can be generated by either $\mathcal{S}_{\mathrm{in}}\cup\mathcal{R}$ or $\mathcal{B}\cup\mathcal{M}$. Therefore, with the number of detectors $n_{\mathrm{de}}$,  $n_{\mathcal{G}}=n_{\mathcal{B}}+n_{\mathcal{R}}+n_{\mathrm{tot}}-k+n_{\mathcal{M}}-n_{\mathrm{de}}$. Altogether, we have 
\begin{eqs}
    \text{\# of gauge qubits}&=\frac{1}{2}(n_{\mathcal{G}}-n_{\mathcal{S}}),\\
    \text{\# of logical qubits}&=\text{\# of total qubits}- \text{\# of gauge qubits} -\text{\# of stabilizer generators}\\&=N-\frac{1}{2}(n_{\mathcal{G}}-n_{\mathcal{S}})-n_{\mathcal{S}}=N-\frac{1}{2}(n_{\mathcal{G}}+n_{\mathcal{S}})\\
    &=\sum_K l_K+n_{\mathrm{tot}}-\frac{1}{2}(n_{\mathcal{B}}+n_{\mathcal{R}}+n_{\mathrm{tot}}-k+n_{\mathcal{M}}-n_{\mathrm{de}}+n_{\mathrm{tot}}-k-n_{\mathcal{M}}+n_{\mathrm{de}})=k.
\end{eqs}

The representatives of the spacetime logical operators can be chosen as the Pauli operators supported on $Q_{\mathrm{input}}$ that commute with $\mathcal{S}_{\mathrm{in}}$. Indeed, they do commute with the stabilizers of the spacetime code, which can be verified explicitly following Lemma~\ref{lemma:spacetimestabilizer}.

\end{proof}

The following lemma describes how faults propagate through the circuit, from the point of view of the spacetime subsystem code.
\begin{definition}[Gauge equivalence]
For any spacetime Pauli operators $E$ and $F$, if there exists a spacetime Pauli operator $G\in \mathcal{G}$, such that $E=FG$, up to a phase, then $E$ is gauge equivalent to $F$, denoted as $E\sim_{\mathcal{G}}F$.
\end{definition}
It is obvious that if $E\sim_{\mathcal{G}}F_1$ and 
$E\sim_{\mathcal{G}}F_2$, then $F_1\sim_{\mathcal{G}}F_2$ and $F_1^\dagger F_2\in \mathcal{G}$. So $F_1^\dagger F_2\sim_ \mathcal{G}\mathbbm{1}$.

\begin{lemma}\label{lemma:outputequiv}
For any Pauli operator $E$ supported on the spacetime qubits $\mathcal{L}$, there exists a Pauli operator $F\sim_{\mathcal{G}}E$ and $\mathrm{supp}(F)\subseteq Q_{\mathrm{output}}$.  $F$ is not unique. Suppose $E\sim_{\mathcal{G}}F_1$ and $E\sim_{\mathcal{G}}F_2$, with $\mathrm{supp}(F_1)\subseteq Q_{\mathrm{output}}$ and $\mathrm{supp}(F_2)\subseteq Q_{\mathrm{output}}$. Then $F_1 F_2^\dagger\in \mathcal{G}$ and  $\mathrm{supp}(F_1 F_2^\dagger)\subseteq Q_{\mathrm{output}}$, so $F_1 F_2^\dagger\in \mathcal{M}$. 
\end{lemma}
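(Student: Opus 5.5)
We prove the existence claim by pushing single-qubit faults forward in time, one location at a time, using only the gate gauge $\mathcal{R}$ and the bond gauge $\mathcal{B}$. The uniqueness claim then follows by analyzing which gauge elements can be supported on the output boundary.

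\textbf{Existence.} Write $E=\prod_{\ell\in\mathcal{L}}E_\ell$ as a product of single-location Paulis; by multiplicativity of gauge equivalence it suffices to treat each factor. We induct on the time coordinate of the location, processing from the earliest time step forward.

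A fault $P$ on an input-type leg $q_i^{(t)}$ of the gate $U_t$ is multiplied by the generator $P_i^{(t)}\otimes U_tP_o^{(t)}U_t^\dagger\in\mathcal{R}_k$. This replaces it by $U_tPU_t^\dagger$ on output-type legs $q'^{(t)}_o$ at the same time step.

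A fault on an output-type leg $q_o^{(t)}$ that is not on $Q_{\mathrm{output}}$ belongs to a bond $e$. Multiplying by the corresponding element of $\mathcal{B}_e$ moves it to $q_i^{(t+1)}$, where the previous step applies again. This includes input-boundary locations $q_o^{(t_i(q))}$, which are bonded to $q_i^{(t_i(q)+1)}$ unless $t_o(q)=t_i(q)$, in which case they already lie in $Q_{\mathrm{output}}$.

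Each move strictly increases the time label. The process therefore terminates, with every factor carried onto locations in $Q_{\mathrm{output}}$. The only care needed is with identity padding and with wires that enter or exit mid-circuit; the time-slicing convention of the circuit-to-tensor map guarantees every non-output location is adjacent to a later gate or bond. This yields $F\sim_{\mathcal{G}}E$ with $\mathrm{supp}(F)\subseteq Q_{\mathrm{output}}$.

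\textbf{Non-uniqueness and the uniqueness statement.} Non-uniqueness is clear: multiplying $F$ by any element of $\mathcal{M}$ preserves both support and the gauge class. For the second claim, $F_1F_2^\dagger\in\mathcal{G}$ follows from the remark after the definition of gauge equivalence, and its support is obviously in $Q_{\mathrm{output}}$. The real content is showing that $\mathcal{G}\cap\{\text{operators supported on }Q_{\mathrm{output}}\}=\mathcal{M}$.

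\textbf{Main obstacle: $\mathcal{G}$ restricted to $Q_{\mathrm{output}}$ is $\mathcal{M}$.} Take $G=s\,r\,b\,m$ with $s\in\mathcal{S}_{\mathrm{in}}$, $r\in\mathcal{R}$, $b\in\mathcal{B}$, $m\in\mathcal{M}$, and suppose $\mathrm{supp}(G)\subseteq Q_{\mathrm{output}}$. We run the same sweep backward, from the latest time step toward the input. On each bond $e$ not on the output boundary, the two legs must carry Paulis whose product is trivial on $e$; within each gate, vanishing on the gate's input legs forces the $\mathcal{R}_k$ component to be determined by the bond components. Inductively, this forces $s\,r\,b$ to take the "tube" form of Eq.~\eqref{eq:spacetimestabilizer}, with the roles reversed: its restriction to every non-output location cancels.

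Concretely, $rb$ propagates a Pauli $P$ from the input boundary to the output via the isomorphism $C$ of Lemma~\ref{lemma:spacetimelogical}. Vanishing at the input locations forces the propagated input Pauli to equal $s^{-1}$ there, and vanishing at the intermediate locations forces it to be transported faithfully. Hence the net action of $s\,r\,b$ on $Q_{\mathrm{output}}$ is $C(s)$, up to the padded bond pieces. Since $s\,r\,b\,m$ must also be supported on $Q_{\mathrm{output}}$, $C(s)$ would seem to be an independent contribution on the output; but a nontrivial $s$ leaves residual support on the input locations that cannot cancel, because $\mathcal{B}$ acts there only through its bond partners. This forces the $s\,r\,b$ part to be trivial, so $G=m\in\mathcal{M}$.

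The delicate points are the cases where an input location coincides with an output location (a wire with $t_o=t_i$), and bookkeeping of the odd number of locations per wire. I expect the latter to be where the argument needs care; the parity is exactly what prevents a nontrivial $rb$ combination from canceling everywhere except the output.
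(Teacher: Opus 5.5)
Your existence argument is correct and matches the paper's proof: push each single-location factor forward through alternating bond gauges $\mathcal{B}_e$ and gate gauges $\mathcal{R}_k$ until it reaches $Q_{\mathrm{output}}$. The paper's proof stops there and gives no argument for the final clause $F_1F_2^\dagger\in\mathcal{M}$. Your attempt to supply one breaks down, and in fact cannot succeed.

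The failing step is the claim that ``a nontrivial $s$ leaves residual support on the input locations that cannot cancel.'' For $G=s\,r\,b\,m$ supported on $Q_{\mathrm{output}}$, you have just shown two things. First, $rb$ must telescope to $P\otimes C(P)$ with $P$ on $Q_{\mathrm{input}}$. Second, vanishing on the input forces $P=s^{-1}$ up to phase. So the input support of $s$ \emph{is} cancelled. The correct conclusion is $s\,r\,b=C(s)$, hence $G=C(s)\,m$. The parity bookkeeping only guarantees that elements of $\langle\mathcal{R},\mathcal{B}\rangle$ vanishing in the interior have this two-ended form; it does not stop $s$ from cancelling the input end.

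Moreover, $C(s)$ is generally not in $\mathcal{M}$, so the identity you set as the ``main obstacle'' is false. Take one wire prepared in $\ket{0}$, a padded identity gate at $t=1$, and no measurement. Then $Z_{q_o^{(0)}}\cdot Z_{q_o^{(0)}}Z_{q_i^{(1)}}\cdot Z_{q_i^{(1)}}Z_{q_o^{(1)}}=Z_{q_o^{(1)}}\in\mathcal{G}$. Hence $E=Z_{q_o^{(1)}}$ has output-supported forms $F_1=Z_{q_o^{(1)}}$ and $F_2=\mathbb{I}$, with $F_1F_2^\dagger\notin\mathcal{M}=\{\mathbb{I}\}$. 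Likewise, the output end $X_{d_o}^{(2)}(h_j)$ of the stabilizer tube $S^X_{\co,j}$ in the Steane gadget lies in $\mathcal{G}$ and is supported on $Q_{\mathrm{output}}$, but is not in $\mathcal{M}$. So the last clause of the statement, as written, fails in general.

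What your backward sweep actually proves, once the error is fixed, is that the output-supported subgroup of $\mathcal{G}$ is $\langle\mathcal{M},C(\mathcal{S}_{\mathrm{in}})\rangle$, so $F_1F_2^\dagger$ lies there. Getting $\mathcal{M}$ itself needs an extra hypothesis such as $C(\mathcal{S}_{\mathrm{in}})\subseteq\mathcal{M}$ (e.g.\ trivial $\mathcal{S}_{\mathrm{in}}$). The weaker statement is all that Lemma~\ref{lemma:spacetimephysical} uses, since that lemma only needs $F_1\sim_{\mathcal{G}}F_2$.
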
 
\begin{proof}
Consider a single Pauli operator supported on a specific spacetime location $q_o^{(t)}$ or $q_i^{(t)}$. If this operator is on $q_o^{(t)}$, if $t=t_o(q)$, then it is done. If $t_i(q)\leq t<t_o(q)$, one can always use the bond gauge and the gate gauge to show that $P_o^{(t)}(e_q)\sim_{\mathcal{G}}P_i^{(t+1)}(e_q)\sim_{\mathcal{G}} U_{t+1}P_o^{(t+1)}(e_q) U_{t+1}^\dagger$, The support of $ U_{t+1}P_o^{(t+1)}(e_q) U_{t+1}^\dagger$ is generally on multiple qubits. Because $U_{t+1}$ is Clifford, $(U_{t+1}P_o^{(t+1)}(e_q) U_{t+1}^\dagger)$ is a product of Pauli operators support on time $t+1$. If $\exists q^\prime\in\mathrm{supp} (U_{t+1}P_o^{(t+1)}(e_q) U_{t+1}^\dagger)$, $t_o(q^\prime)=t+1$, then the Pauli operators supported on $q^\prime$ are already on $Q_{\mathrm{output}}$. For the rest part in $\mathrm{supp} (U_{t+1}P_o^{(t+1)}(e_q) U_{t+1}^\dagger)$, one can iterate the above argument to use bond and gate gauges to propagate it to $t+2$, etc. Because each circuit wire has an end, we can finally obtain a gauge-equivalent form of $P_o^{(t)}(e_q)$ on $Q_{\mathrm{output}}$. If this operator is on $q_i^{(t)}$, one can use the gate gauge to show that $P_i^{(t+1)}(e_q)\sim_{\mathcal{G}} U_{t+1}P_o^{(t+1)}(e_q) U_{t+1}^\dagger$ and the rest of the proof is the same as above. Since every single Pauli operator has an output boundary gauge equivalent form, so does their product, which is an arbitrary spacetime Pauli operator $E$.

\end{proof}

The following lemma states that although a spacetime Pauli operator may have different gauge equivalent forms supported on $Q_{\mathrm{output}}$, they have the same syndrome with respect to the spacetime stabilizers.

\begin{lemma}\label{lemma:spacetimephysical}
For a given circuit described by the gauge group $\mathcal{G}$, suppose its stabilizer group has $n_{\mathcal{S}}$ generators, $\mathcal{S}=\langle S_{1},\ldots, S_{n_{\mathcal{S}}} \rangle$. Define the output boundary a spacetime stabilizer $S_{k}$, as $S_{k,out}$. That is, if  $S_{k}$ takes the form of Eq. \eqref{eq:spacetimestabilizer}, then $S_{k,out}\equiv\prod_{q=1}^{n_{\mathrm{tot}}}\mathcal{U}_{t} P_{o}^{(t_{o}(q))}(e_q)\mathcal{U}_{t}^\dagger$. Define the syndrome vector of a spacetime Pauli operator $E$ as $\sigma(E)\in \mathbb{F}_2^{n_{\mathcal{S}}}$, such that $\sigma(E)_k\equiv[E,S_{k}]$, where $[E,S_{k}]=0$ if $E$ and $S_{k}$ commute, $[E,S_{k}]=1$ if $E$ and $S_{k}$ anti-commute. 
Denote a particular boundary gauge equivalent form as of $E$ as $F$, with $\mathrm{supp}(F)\subseteq Q_{\mathrm{output}}$. Then for any spacetime Pauli operator $E$,
\begin{equation}
    \sigma(E)_k=\sigma(F)_k=[F,S_{k,out}],
\end{equation}
and $\sigma(E)$ is independent of the choice of $F$.
\end{lemma}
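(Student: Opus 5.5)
The argument rests on one observation: the stabilizer group is the center of the gauge group, $\mathcal S=\mathcal G^\perp\cap\mathcal G$. Hence every $S_k$ commutes with every element of $\mathcal G$, and the syndrome map $\sigma$ is constant on gauge-equivalence classes. The plan has three steps.

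\textbf{Step 1 (gauge invariance).} By Lemma~\ref{lemma:outputequiv}, pick $F\sim_{\mathcal G}E$ with $\mathrm{supp}(F)\subseteq Q_{\mathrm{output}}$, so that $E=FG$ up to phase for some $G\in\mathcal G$. The symplectic form is bilinear over $\mathbb F_2$, so
\[
[E,S_k]=[F,S_k]+[G,S_k].
\]
The second term vanishes because $S_k\in\mathcal G^\perp$. This gives $\sigma(E)_k=\sigma(F)_k$.

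\textbf{Step 2 (restriction to the output boundary).} Since $F$ is supported only on $Q_{\mathrm{output}}$, its commutator with $S_k$ depends only on the restriction of $S_k$ to $Q_{\mathrm{output}}$. By Lemma~\ref{lemma:spacetimestabilizer}, $S_k$ has the form of Eq.~\eqref{eq:spacetimestabilizer}. On the output locations $q_o^{(t_o(q))}$ that form factors into the product $\prod_q \mathcal U_t P_o^{(t_o(q))}(e_q)\mathcal U_t^\dagger$, which is exactly $S_{k,out}$.

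This step is the only bookkeeping point, and I expect it to be the main (mild) obstacle. One must check that the factors sitting on $q_i^{(t)}$ and on non-final $q_o^{(t)}$ never land on $Q_{\mathrm{output}}$. They do not, because $Q_{\mathrm{output}}$ consists only of the final locations $q_o^{(t_o(q))}$. One must also check that for a wire with $t_i(q)=t_o(q)$, the single location is both an input and an output location. In that case the factor $P_o^{(t_i(q))}$ is exactly the output restriction, which agrees with the definition of $S_{k,out}$ when the conjugation acts trivially there. With this settled, $[F,S_k]=[F,S_{k,out}]$.

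\textbf{Step 3 (independence of the choice of $F$).} Suppose $F_1$ and $F_2$ are two output-supported representatives. Lemma~\ref{lemma:outputequiv} gives $F_1F_2^\dagger\in\mathcal M\subseteq\mathcal G$, so by Step 1, $[F_1F_2^\dagger,S_k]=0$. By bilinearity, $\sigma(F_1)_k=\sigma(F_2)_k$. Alternatively, the independence follows directly from Step 1, since both values equal $\sigma(E)_k$.
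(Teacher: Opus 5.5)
Your proposal is correct and follows essentially the same route as the paper: write $E=FG$ with $G\in\mathcal G$, drop $G$ because $S_k\in\mathcal G^\perp$, reduce $[F,S_k]$ to $[F,S_{k,out}]$ because $F$ lives on $Q_{\mathrm{output}}$, and get independence of $F$ because every output-supported representative gives the same value $\sigma(E)_k$. Two small points where you are more careful than the paper: your symplectic phrasing makes explicit that the step really uses $S_k|_{Q_{\mathrm{output}}}=S_{k,out}$, and your first route in Step 3 only needs $F_1F_2^\dagger\in\mathcal G$ (automatic from $F_1\sim_{\mathcal G}E\sim_{\mathcal G}F_2$), so it does not rely on the stronger claim in Lemma~\ref{lemma:outputequiv} that this product lies in $\mathcal M$.
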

\begin{proof}
Because $E\sim_{\mathcal{G}}F$, wirte $E=FG$ with $G\in \mathcal{G}$. Then for any $k=1,\ldots,n_{\mathcal{S}}$, $S_k E S_k^{-1} E^{-1}=S_k FG S_k^{-1} G^{-1} F^{-1}=S_k F S_k^{-1}  F^{-1}= S_{k,out} F S_{k,out}^{-1}  F^{-1}$. The third equation uses the fact that any spacetime stabilizer commutes with gauge operators. The last equation uses that both $\mathrm{supp}(F)\subseteq Q_{\mathrm{output}}$ and $\mathrm{supp}(S_{k,out})\subseteq Q_{\mathrm{output}}$. If $E\sim_{\mathcal{G}}F_1$ and $E\sim_{\mathcal{G}}F_2$ with $\mathrm{supp}(F_1)\subseteq Q_{\mathrm{output}}$ and $\mathrm{supp}(F_2)\subseteq Q_{\mathrm{output}}$. Then $F_1\sim_{\mathcal{G}}F_2$. Following the same proof as above, we see that $\sigma(E)_k=\sigma(F_2)_k=\sigma(F_1)_k=[F_1,S_{k,out}]=[F_2,S_{k,out}]$. So the syndrome vector is independent of the choice of output boundary gauge equivalent form.

\end{proof}

This lemma has important physical significance. Spacetime stabilizers are not physically measurable operators. What we can measure are the operators supported on $Q_{\mathrm{output}}$. They include $\mathcal{M}$, which are indeed measured in the circuit, and other operators regarded as the stabilizers or logicals of output codes, which can be measured in a later circuit or hypothetically measured in the setting of correctness of gadgets. On the other hand, the measurement results represent the syndromes of the faults at the time of measurement, which are propagated from faults that occurred earlier in the circuit. The exact spacetime locations at which the faults first occurred cannot be exactly determined. Nevertheless, this lemma shows that the syndromes of a spacetime Pauli operator with respect to spacetime stabilizers are physically meaningful. They are equivalent to the syndrome of the propagated faults at the time of measurement with respect to the measured operator (which can have a deterministic measurement outcome). Therefore, the spacetime code syndromes faithfully represent the measurement outcome in the circuit. 

In many realistic circuits, measurements are performed on each ancilla qubit. In these cases, we can further deduce the form of undetectable spacetime errors 
\begin{corollary}\label{coro:singlemeasreduc}
For any Pauli operator $E$ supported on the spacetime qubits $\mathcal{L}$, if $E$ is undetectable, that is $\sigma(E)=\mathbf{0}$, and if each measurement gauge generator is a single qubit operator, then there exists a Pauli operator $F\sim_{\mathcal{G}}E$ and $\mathrm{supp}(F)\subseteq Q_{\mathrm{output}} \backslash\Lambda_{\mathrm{meas}}$. 
    
\end{corollary}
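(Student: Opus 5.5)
We start from Lemma~\ref{lemma:outputequiv}, which gives some $F_0\sim_{\mathcal{G}}E$ with $\mathrm{supp}(F_0)\subseteq Q_{\mathrm{output}}$. The task is then to remove the part of $F_0$ on $\Lambda_{\mathrm{meas}}$, using only measurement gauge elements. Multiplying by elements of $\mathcal{M}$ keeps the support inside $Q_{\mathrm{output}}$ and preserves gauge equivalence. So it suffices to show that, for each measured position $a\in\Lambda_{\mathrm{meas}}$, the restriction $F_0|_a$ lies in the single-qubit group generated by the measurement gauge generator $P_o^{(t)}(a)$ at $a$, up to phase. Setting $F=F_0\prod_{a}F_0|_a$ then gives an operator supported on $Q_{\mathrm{output}}\setminus\Lambda_{\mathrm{meas}}$. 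Because different measurements act on disjoint positions and each generator is single-qubit, these products do not interfere with each other.

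The key step is showing that $F_0|_a$ commutes with the measured Pauli $P_a\equiv P_o^{(t)}(a)$. On a single qubit, this forces $F_0|_a\in\{\mathbbm{1},P_a\}$ up to phase. For this we need a spacetime stabilizer $S$ whose output-boundary part $S_{\mathrm{out}}$, in the sense of Lemma~\ref{lemma:spacetimephysical}, restricted to $\Lambda_{\mathrm{meas}}$ is exactly $P_a$ and is trivial elsewhere. Granting such an $S$, Lemma~\ref{lemma:spacetimephysical} gives $0=\sigma(E)\cdot S=[F_0,S_{\mathrm{out}}]$. If $S_{\mathrm{out}}=P_a$ on the nose, this is $[F_0|_a,P_a]=0$, as desired.

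The natural candidate for $S$ comes from the ansatz of Lemma~\ref{lemma:spacetimestabilizer}. Take the output Pauli $P_a$ and pull it back through the isomorphism $C^{-1}$ used in the proof of Lemma~\ref{lemma:spacetimelogical} to an input Pauli. The resulting ansatz operator commutes with $\mathcal{R}$ and $\mathcal{B}$ by construction, commutes with $\mathcal{M}$ because the measurements are disjoint single-qubit operators, and lies in $\mathcal{G}$ because its output part is in $\mathcal{M}$. It therefore belongs to $\mathcal{S}$ exactly when its input part $C^{-1}(P_a)$ commutes with $\mathcal{S}_{\mathrm{in}}$.

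The main obstacle is the case where $C^{-1}(P_a)$ anticommutes with some element of $\mathcal{S}_{\mathrm{in}}$, so that no stabilizer has output part exactly $P_a$. In that case the measurement outcome is random, and $F_0|_a$ might be a non-commuting Pauli $Q$. The plan here is to exploit gauge freedom at the input boundary. Pick $s\in\mathcal{S}_{\mathrm{in}}$ anticommuting with $C^{-1}(P_a)$. By Lemma~\ref{lemma:outputequiv}, $s$ is gauge equivalent to $C(s)$ on $Q_{\mathrm{output}}$, and $C(s)$ anticommutes with $P_a$. So $C(s)$ is a gauge element on the output, and multiplying $F_0$ by a suitable $C(s)$ (possibly combined with elements of $\mathcal{M}$) can flip the non-commuting component at $a$. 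A symplectic-basis argument like the one in Lemma~\ref{lemma:spacetimelogical} pairs each such $a$ with its own $s$, which lets us clear every non-commuting component simultaneously.

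What remains is to check that this pairing does not reintroduce support on other measured sites. This should follow by choosing the $s$'s dual to the $C^{-1}(P_a)$ within the span, using Gaussian elimination. After that, the commuting case above finishes the argument.
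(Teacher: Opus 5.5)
Your skeleton is the paper's proof: take $F_0$ on $Q_{\mathrm{output}}$ from Lemma~\ref{lemma:outputequiv}, show that $F_0$ commutes with every measured $P_a$, then strip $\Lambda_{\mathrm{meas}}$ using $\mathcal{M}$. The paper gets the commutation in one line, asserting that $\sigma(F')=0$ forces $[F'|_{\Lambda_{\mathrm{meas}}},m]=0$ for all $m\in\mathcal{M}$. You are right that this only follows when $m$ is the output part of a spacetime stabilizer, i.e.\ when $C^{-1}(m)$ commutes with $\mathcal{S}_{\mathrm{in}}$. For example, take $\ket{+}$, idle it, and measure in $Z$. Then $\mathcal{S}$ is trivial, so $X$ on the output is undetectable yet anticommutes with $Z$; the corollary survives only because that operator equals $C(X)\in\mathcal{G}$. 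This case is not exotic. The paper invokes the corollary for the Steane gadget, and there the single-qubit $Z$ outcomes on block $a$ are random.

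The gap is in your repair of this case. The sites do not split into a ``commuting'' class, handled by a stabilizer with $S_{\mathrm{out}}=P_a$, and a ``non-commuting'' class, each handled by a private $s_a\in\mathcal{S}_{\mathrm{in}}$. Private $s_a$ exist only if the $C^{-1}(P_a)$ over the non-commuting sites are linearly independent modulo $\mathcal{S}_{\mathrm{in}}^\perp$; no basis choice or Gaussian elimination produces them otherwise.

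A concrete failure: take a Bell-pair input $\mathcal{S}_{\mathrm{in}}=\langle X_1X_2,Z_1Z_2\rangle$ and measure both qubits in $Z$. Each outcome is individually random, but every $s\in\mathcal{S}_{\mathrm{in}}$ either commutes with both $Z_1,Z_2$ or anticommutes with both, so nothing flips site $1$ alone. Similarly, in the Steane gadget the attainable flip patterns on block $a$ form only $C_Z^\perp$.

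The missing ingredient is undetectability against stabilizers whose output part is a product of several measured Paulis. In the Bell example this is the detector with output $Z_1Z_2$, which forces $[F_0,Z_1]=[F_0,Z_2]$.

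The fix is a single duality argument. Define $\Phi:\mathcal{S}_{\mathrm{in}}\to\mathbb{F}_2^{\Lambda_{\mathrm{meas}}}$ by $\Phi(s)_a=[s,C^{-1}(P_a)]=[C(s),P_a]$, and set $\chi_a=[F_0,P_a]$. A vector $c$ is orthogonal to $\mathrm{im}\,\Phi$ iff $C^{-1}(\prod_a P_a^{c_a})$ commutes with $\mathcal{S}_{\mathrm{in}}$. That holds iff the ansatz of Lemma~\ref{lemma:spacetimestabilizer} with output part $\prod_a P_a^{c_a}$ lies in $\mathcal{S}$. Undetectability then gives $c\cdot\chi=0$, so $\chi\in(\mathrm{im}\,\Phi)^{\perp\perp}=\mathrm{im}\,\Phi$. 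Choose one $s$ with $\Phi(s)=\chi$. Then $F_0C(s)$ is gauge equivalent to $E$, is supported on $Q_{\mathrm{output}}$, and commutes with every $P_a$, so your stripping step finishes the proof. This argument covers both of your cases at once, since your commuting case is the special case where $c$ is the indicator of a single site $a$. It also repairs the paper's one-line step.
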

\begin{proof}
Form Lemma \ref{lemma:outputequiv}, there exists $F^\prime\sim_{\mathcal{G}}E$ and $\mathrm{supp}(F^\prime)\subseteq Q_{\mathrm{output}}$. Now consider the restriction of $F^\prime$ on $\Lambda_{\mathrm{meas}}$. Because $\sigma(F^\prime)=\sigma(E)=0$, we have $[F^\prime|_{\Lambda_{\mathrm{meas}}},m]=0,~\forall m\in \mathcal{M}$. But by assumption, the generators of $\mathcal{M}$ are all single qubit Pauli operators, so $F^\prime|_{\Lambda_{\mathrm{meas}}}\in\mathcal{M}$. Therefore, we can define $F\equiv (F^\prime|_{\Lambda_{\mathrm{meas}}})^\dagger F^\prime$, then $F\sim_{\mathcal{G}}E$ and $\mathrm{supp}(F)\subseteq Q_{\mathrm{output}} \backslash\Lambda_{\mathrm{meas}}$.
\end{proof}

\subsection{Spacetime complex}
\begin{definition}[Spacetime complex]
We call the chain complex of a circuit $\mathcal{A}$
\begin{equation}
\label{eq:spacetime_complex}
\begin{tikzcd}[row sep=1.2em, column sep=1.0em]
A_2 \arrow[r, "\partial_{2,A}"] &
A_1 \arrow[r, "\partial_{1,A}"] &
A_0, \\
{\scriptstyle \text{gauge checks}} &
{\scriptstyle \text{fault positions}} &
{\scriptstyle \text{stabilizer syndromes}}
\end{tikzcd}
\end{equation}
as the spacetime complex of a circuit, whose boundary maps are defined by the matrix representation of generators of gauge group $\mathcal{G}$ and stabilizer group $\mathcal{S}$ respectively
\begin{eqs}
    \partial_{2,A}= G^\mathsf{T},\quad \partial_{1,A}=S \Omega,
\end{eqs}
and $\Omega=\begin{pmatrix}
    0 & \mathbb{I} \\
    -\mathbb{I} & 0
\end{pmatrix}$ is the symplectic form. 
The spacetime stabilizers are generated by row vectors in $\partial_{1,A}$. 
\end{definition}

\begin{definition}[Fault distance of spacetime complex]
    The fault distance of a spacetime complex $\mathcal{A}$ is defined as its 1-systolic distance $d_\st \equiv\min\{|\mathbf{x}| : \mathbf{x} \in \mathrm{ker}\partial_{1,A} \backslash \mathrm{im} \partial_{2,A}\}$, which are undetectable faults with respect to the spacetime subsystem code stabilizers.
\end{definition}

As in previous treatments of fault complexes, not every spacetime stabilizer constraint represented by $\partial_{1,A}$ is measured in reality. We therefore restrict attention to the syndromes of spacetime stabilizers that are detectors, following Definition \ref{def:detector_stabilizer_tube}. It is because detectors can be represented by either $\mathcal{S}_{\mathrm{in}}\cup\mathcal{R}$ or $\mathcal{B}\cup\mathcal{M}$, that we can deduce a fixed measurement outcome after fixing the gauges in $\mathcal{S}_{\mathrm{in}}$, $\mathcal{R}$, and $\mathcal{B}$ when no errors occur in the circuit.

\begin{definition}[Detector group]
The detector group $\mathcal{D}$ is a subgroup of the spacetime stabilizer group $\mathcal{S}$,
    \begin{eqs}
        \mathcal{D}\equiv\{s\in \mathcal{S}|~s|_{\mathrm{input}}\in \mathcal{S}_{\mathrm{in}},~ s|_{\mathrm{output}}\in \mathcal{M}\}.
    \end{eqs}
The detector matrix $D$ is defined as the binary symplectic representation of a set of generators of $\mathcal{D}$. 
\end{definition}
    
\begin{definition}[bulk fault positions]
The set of bulk fault positions $J$ is the set of spacetime qubits on which at least one detector has nonzero support $J\equiv\operatorname{colsupp}(D)$. Correspondingly, the linear subspace $A_{1,\mathrm{bulk}}$ is defined as $A_{1,\mathrm{bulk}}\equiv\mathbb{F}_2^{J}$.
\end{definition}
The set of bulk fault positions is also referred to as the \emph{detecting region} in the circuit detector formalism \cite{mcewen2023relaxing}. 

\begin{definition}[Input, bulk, and output regions]
    The fault position space $A_1$ can be decomposed into three parts
    \begin{eqs}
         A_1 = A_{1,\mathrm{input}} \oplus A_{1,\mathrm{bulk}} \oplus A_{1,\mathrm{output}},
    \end{eqs}
    where the fault positions in $A_{1,\mathrm{input}},A_{1,\mathrm{bulk}},A_{1,\mathrm{output}}$ form the input, bulk, and output regions respectively.  Note that the input boundary $Q_{\mathrm{input}}$ is a subset of input region, and output boundary $Q_{\mathrm{output}}$ is a subset of output region.
\end{definition}  
For example, consider a CSS code syndrome extraction circuit in which the $X$- and $Z$-syndrome extraction rounds alternate in time. For the $X$-syndrome extraction, we define the bulk $Z$-fault positions as the collection of all the $Z$-fault positions between the first $X$-syndrome extraction round and the last $X$-syndrome extraction round, respectively. Similarly, for $Z$-syndrome extraction, we define the bulk $X$-fault positions as a collection of all the $X$-fault positions between the first and the last $Z$-syndrome extraction rounds. The motivation for introducing bulk gauge checks is that we wish to characterize the intrinsic fault-tolerance properties of the channel independently of the particular input and output boundaries. 

\begin{definition}[bulk gauge group]    
The bulk gauge group $\mathcal{G}_{\mathrm{bulk}}$is defined as the gauge operators supported entirely within the bulk region
    \begin{eqs}
        \mathcal{G}_{\mathrm{bulk}}:=\left\{g\in\mathcal{G}:\operatorname{supp}(g)\subseteq J\right\}.
    \end{eqs}
The bulk gauge check matrix  $G_{\mathrm{bulk}}$ is defined as the binary symplectic representation of a set of generators of $\mathcal{G}_{\mathrm{bulk}}$. 
\end{definition}
By construction, both 
$D$ and $G_{\mathrm{bulk}}$ have zero columns outside $J$. With a slight abuse of notation, we denote detector and bulk gauge check matrices restricted to $A_{1,\mathrm{bulk}}$ by $D$ and $G_{\mathrm{bulk}}$ after removing all the zero columns outside $J$.

\begin{definition}[Bulk complex]
    The bulk complex is a subcomplex of a spacetime complex $\mathcal{A}$  
    \begin{equation}
\label{eq:bulk_subcomplex}
\begin{tikzcd}[row sep=1.2em, column sep=1.0em]
A_{2,\mathrm{bulk}} \arrow[r, "G_{\mathrm{bulk}}^\mathsf{T}"] &
A_{1,\mathrm{bulk}} \arrow[r, "D"] &
A_{0,\mathrm{bulk}}. \\
{\scriptstyle \text{bulk gauge checks}} &
{\scriptstyle \text{bulk fault positions}} &
{\scriptstyle \text{bulk detector syndromes}}
\end{tikzcd}
\end{equation}

\end{definition}

The bulk gauge checks and detectors satisfy  $G_{\mathrm{bulk}}^\mathsf{T} A_{2,\mathrm{bulk}} \subseteq A_{1,\mathrm{bulk}}$ and $D^\mathsf{T} A_{0,\mathrm{bulk}} \subseteq A_{1,\mathrm{bulk}}$ which are gauge and stabilizer elements that only acts on the bulk qubits $A_{1,\mathrm{bulk}}$\footnote{This can also be interpreted as puncturing or shortening a quantum code.}. Moreover, because the detector group, as a subgroup of $\mathcal{S}$, is in the center of the spacetime gauge group $\mathcal{G}$, $D G_{\mathrm{bulk}}^{\mathsf{T}}=0$. So Eq.~\eqref{eq:bulk_subcomplex} is indeed a chain complex.

\begin{definition}[Minimum-weight recovery]
   A minimum-weight recovery is a map that takes a syndrome $\sigma\in\operatorname{im}\partial$ and returns a recovery operation $\widetilde{\varepsilon}$ satisfying $|\widetilde{\varepsilon}|\leq|\varepsilon|$ for every fault configuration $\varepsilon$ with $\partial\varepsilon=\sigma$. The fault and its recovery satisfy $\partial(\varepsilon+\widetilde{\varepsilon})=0$, where $\partial$ is a check matrix, which can be a detector matrix or a spacetime stabilizer matrix. Here $|\cdot|$ denotes the Hamming weight of a binary vector.
\end{definition}

\begin{definition}[Bulk effective distance]
    For a gadget with minimal-weight recovery, the bulk effective distance 
    \begin{eqs}
        d_{\mathrm{bulk}} =\min\{|\mathbf{x}|: (\mathbf{x}+\widetilde{\mathbf{x}})\in \mathrm{ker} D, (\mathbf{x}+\widetilde{\mathbf{x}}) \in \mathrm{ker}\partial_{1,A}/ \mathrm{im} \partial_{2,A}, \mathrm{supp}(\mathbf{x})\subseteq J\}
    \end{eqs} is the minimal weight of a bulk undetectable fault that induces a nontrivial logical action after a minimum-weight recovery $\widetilde{\mathbf{x}}$ acting on the bulk, satisfying $\mathrm{supp}(\widetilde{\mathbf{x}}) \subseteq J$ and $D \widetilde{\mathbf{x}}= D\mathbf{x}$. 
\end{definition}

\begin{remark}
    The bulk effective distance defined above depends on the choice of decoder and is stated here for a minimum-weight decoder. A different recovery rule may yield a different bulk effective distance. To adapt this definition to a specific decoder, one can replace the minimum-weight recovery $\widetilde{\mathbf{x}}$ with the recovery operation $\mathbf{x}'=R_{D\mathbf{x}}$ returned by the chosen decoder $R$.
\end{remark}

\begin{lemma}
    If there exists a minimum-weight undetectable fault $F\in \mathrm{ker} \partial_{1,A} / \mathrm{im} \partial_{2,A}$ such that $\mathrm{supp}(F) \subseteq J$, then the bulk effective distance is upper-bounded by the spacetime fault distance: $2d_{\mathrm{bulk}}+1\leq d$.
\end{lemma}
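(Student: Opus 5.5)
The plan is to take the assumed minimum-weight undetectable fault $F$ with $\mathrm{supp}(F)\subseteq J$ and split it into two pieces: an actual fault $\mathbf{x}$ and a candidate recovery $\widetilde{\mathbf{x}}$. The first piece should look, to the minimum-weight decoder, as heavy as or heavier than the second. The decoder may then legitimately return the second piece, so that the residual $\mathbf{x}+\widetilde{\mathbf{x}}=F$ is a nontrivial logical. This is the standard ``half a logical'' argument, transplanted to the bulk complex.

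\textbf{Step 1: $F$ is a valid object for the bulk complex.} Since $F\in\ker\partial_{1,A}$, it commutes with every spacetime stabilizer. In particular it commutes with every element of the detector group $\mathcal{D}\subseteq\mathcal{S}$, so $DF=0$ once we restrict to $A_{1,\mathrm{bulk}}$, which is allowed because $\mathrm{supp}(F)\subseteq J$. Moreover $F\notin\operatorname{im}\partial_{2,A}$, so $F$ represents a nontrivial class in $\ker\partial_{1,A}/\operatorname{im}\partial_{2,A}$. By assumption its weight is the spacetime fault distance, $|F|=d$.

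\textbf{Step 2: split the support.} Partition $\mathrm{supp}(F)$ into two disjoint sets and write $F=\mathbf{x}+\mathbf{y}$ with $|\mathbf{y}|\le|\mathbf{x}|$, both supported in $J$.
\begin{itemize}
\item Because $DF=0$, we have $D\mathbf{x}=D\mathbf{y}$, so $\mathbf{y}$ is a bulk recovery consistent with the syndrome of $\mathbf{x}$.
\item A minimum-weight bulk recovery $\widetilde{\mathbf{x}}$ for $D\mathbf{x}$ has weight at most $|\mathbf{y}|$. Since the definition of $d_{\mathrm{bulk}}$ only asks for \emph{a} minimum-weight recovery, we want to arrange that $\mathbf{y}$ itself is such a recovery.
\item If so, $\mathbf{x}+\widetilde{\mathbf{x}}=F$ lies in $\ker D$ and is logically nontrivial, which certifies $d_{\mathrm{bulk}}\le|\mathbf{x}|$.
\end{itemize}

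\textbf{Step 3: optimize the split.} I would choose $|\mathbf{x}|$ as small as the argument permits.
\begin{itemize}
\item To make $\mathbf{y}$ minimum weight, first suppose some $\mathbf{z}$ with $D\mathbf{z}=D\mathbf{x}$ and $|\mathbf{z}|<|\mathbf{y}|$ exists. Then either $\mathbf{x}+\mathbf{z}$ is a nontrivial logical, which gives an even better witness after iterating with a smaller fault, or it is trivial, in which case $\mathbf{y}+\mathbf{z}\sim F$ is nontrivial.
\item I would then compare $|\mathbf{y}+\mathbf{z}|$ with $d$. The intended contradiction with the minimality of $F$ comes from a weight count; whether it closes depends on $|\mathbf{y}|+|\mathbf{z}|<d$, which holds if $|\mathbf{z}|<|\mathbf{y}|\le\lfloor d/2\rfloor$.
\end{itemize}
This would yield $d_{\mathrm{bulk}}\le\lceil d/2\rceil$.

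\textbf{Main obstacle.} The hardest step is matching the exact constant in the stated inequality $2d_{\mathrm{bulk}}+1\le d$. The halving argument naturally gives $2d_{\mathrm{bulk}}\le d+1$, and tightening this to the stated form requires care with ties and parity of $d$. It also depends on how ``minimum-weight recovery'' resolves ties: the existential choice of $\widetilde{\mathbf{x}}$ in the definition of $d_{\mathrm{bulk}}$ is what lets an adversarial tie-break be used. I would try first to exploit that the recovery may be chosen as any minimum-weight solution, in order to shave the extra unit. If that fails, I would check whether the definition of $d_{\mathrm{bulk}}$ is meant to count the residual differently.
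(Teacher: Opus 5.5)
\textbf{The gap is in which inequality you are proving.} You read $d_{\mathrm{bulk}}$ literally, as the smallest weight of a bulk fault on which minimum-weight recovery fails, and you try to exhibit such a fault. Under that reading the target $2d_{\mathrm{bulk}}+1\le d$ is not off by a parity unit; it is false. Take any $\mathbf{x}$ in the set defining $d_{\mathrm{bulk}}$. Its recovery satisfies $|\widetilde{\mathbf{x}}|\le|\mathbf{x}|$, because $\mathbf{x}$ is itself a bulk solution $\mathbf{w}$ of $D\mathbf{w}=D\mathbf{x}$. The residual is a nontrivial element of $\ker\partial_{1,A}\setminus\operatorname{im}\partial_{2,A}$, so $d\le|\mathbf{x}+\widetilde{\mathbf{x}}|\le 2|\mathbf{x}|$. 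Hence $2d_{\mathrm{bulk}}\ge d$ always. Combined with your Step 3, this reading would give $d_{\mathrm{bulk}}=\lceil d/2\rceil$, so no tie-breaking choice can remove the ``extra unit.''

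\textbf{The reading under which the lemma holds.} The paper states this lemma without proof. The statement is only true under the reading the paper itself uses in Lemma~\ref{lemma:bulk_detectable_distance} and in the foliation example: $d_{\mathrm{bulk}}=t$ means every bulk fault of weight at most $t$ is corrected. Under that reading the lemma is the contrapositive of Lemma~\ref{lemma:bulk_detectable_distance}, and your splitting step must be run in the opposite direction. Suppose $d=|F|\le 2t$ and write $F=F_1+F_2$ with $|F_1|,|F_2|\le t$. Since $DF=0$ (your Step 1), the two halves have the same detector syndrome and receive the same recovery $\widetilde F$. Correctability makes both $F_1+\widetilde F$ and $F_2+\widetilde F$ logically trivial, so their sum $F$ is trivial, which contradicts the choice of $F$. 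Hence $d\ge 2t+1$. You should not construct a failure; you should show that $F$ itself would be a failure even though both of its halves are correctable.

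\textbf{A secondary gap, present in either version.} The condition $D(\mathbf{x}+\mathbf{z})=0$ only says that $\mathbf{x}+\mathbf{z}$ commutes with the detector group. The group $\mathcal{D}$ excludes stabilizer tubes and logical measurements, so a bulk vector in $\ker D$ need not lie in $\ker\partial_{1,A}$. Your dichotomy ``$\mathbf{x}+\mathbf{z}$ is either a nontrivial logical or trivial'' is therefore incomplete. The same issue appears in the corrected argument. The contradiction needs ``correctable'' to mean the residual is trivial in a sense closed under addition, for example that it lies in $\operatorname{im}\partial_{2,A}$, which is what the paper's $\simeq\mathbb{I}$ intends. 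If both residuals merely failed to lie in $\ker\partial_{1,A}$, their sum could still be the nontrivial $F$. State this assumption explicitly.
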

For many standard fault-tolerant schemes, including repeated syndrome extraction and single-shot error correction, a minimum-weight logical representative can be chosen to have support entirely within the bulk. Consequently, for these schemes, the spacetime fault distance provides an upper bound on the bulk effective distance.

Our definition of bulk effective distance resembles that of the fault distance in Ref.~\cite{beverland2024fault}. A key difference is that the bulk effective distance here includes the effect of minimal-weight recovery depending on detector syndromes. Note that not every undetectable fault induces a logical action. For example, faults that occur after the last round of syndrome extraction are undetectable in the present circuit, but they do not induce logical actions as long as their weight is less than the distance of the output code. These residual errors are detectable in the next syndrome extraction circuit.

\begin{lemma}[Bulk detectable faults]\label{lemma:bulk_detectable_distance}
    For a spacetime complex with bulk effective distance
    $d_{\mathrm{bulk}}=t$, every bulk fault $F$ satisfying
    $|F|\leq 2t$ either produces a nontrivial detector
    syndrome or has a trivial logical action. 
\end{lemma}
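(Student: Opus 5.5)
The plan is to argue by contradiction using the standard ``two small faults'' decomposition, interpreting the definition of $d_{\mathrm{bulk}}$ as follows: every bulk fault $\mathbf{x}$ with $|\mathbf{x}|<t$ (or $\leq t$, depending on the convention implicit in $2d_{\mathrm{bulk}}+1\leq d$) is corrected by minimum-weight recovery. That is, $\mathbf{x}+\widetilde{\mathbf{x}}$ is trivial in $\ker\partial_{1,A}/\operatorname{im}\partial_{2,A}$. I will fix the convention in which the radius $t$ means that all bulk faults of weight at most $t$ are corrected. With this reading the statement becomes the spacetime analogue of the fact that a code correcting $t$ errors has distance at least $2t+1$.

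Suppose $F$ is a bulk fault with $\operatorname{supp}(F)\subseteq J$, $|F|\leq 2t$, and $DF=0$. The first step is to split $F=\mathbf{x}_1+\mathbf{x}_2$ with disjoint supports inside $J$ and $|\mathbf{x}_1|,|\mathbf{x}_2|\leq t$. Since $DF=0$, we have $D\mathbf{x}_1=D\mathbf{x}_2\equiv\sigma$, so the decoder returns the same minimum-weight recovery $\widetilde{\mathbf{x}}$ for both. That recovery is supported in $J$ and satisfies $D\widetilde{\mathbf{x}}=\sigma$. By the bulk effective distance hypothesis, both $\mathbf{x}_1+\widetilde{\mathbf{x}}$ and $\mathbf{x}_2+\widetilde{\mathbf{x}}$ lie in $\ker D$ and are logically trivial, i.e.\ they lie in $\operatorname{im}\partial_{2,A}$ modulo stabilizers (trivial class in $\ker\partial_{1,A}/\operatorname{im}\partial_{2,A}$). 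Adding the two gives $F=\mathbf{x}_1+\mathbf{x}_2=(\mathbf{x}_1+\widetilde{\mathbf{x}})+(\mathbf{x}_2+\widetilde{\mathbf{x}})$, so $F$ has trivial logical action. This yields the dichotomy: either $DF\neq 0$, or $F$ is logically trivial.

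The main obstacle is making the notion of ``trivial logical action'' consistent between $\ker D$ and $\ker\partial_{1,A}$. The detector group $\mathcal{D}$ is only a subgroup of $\mathcal{S}$, so $\ker D\supseteq\ker\partial_{1,A}$. An element of $\ker D$ need not be a spacetime-undetectable fault; it could, for instance, anticommute with a stabilizer tube. I will handle this by reading ``trivial logical action'' as in the definition of $d_{\mathrm{bulk}}$: the residual either is not in $\ker\partial_{1,A}$ in a logically nontrivial class, or lies in $\operatorname{im}\partial_{2,A}$ up to stabilizers. The sum of two residuals with trivial logical action then has trivial logical action by linearity, since the logical class map on $\ker\partial_{1,A}$ is a group homomorphism.

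For residuals outside $\ker\partial_{1,A}$, I would use Lemma~\ref{lemma:spacetimephysical} to project onto the output boundary, where the logical action is read off via commutation with logical representatives. This commutation functional is linear over $\mathbb{F}_2$, so additivity still holds. Finally, the edge case where $\sigma=0$ is trivial, because then $\widetilde{\mathbf{x}}=0$ and $\mathbf{x}_1$, $\mathbf{x}_2$ themselves are logically trivial by the definition of $d_{\mathrm{bulk}}$.
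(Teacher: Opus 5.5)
Your core argument is the paper's: split $F$ into two pieces of weight at most $t$, note that they share a detector syndrome and hence the same minimum-weight recovery $\widetilde{\mathbf{x}}$, and add the two corrected residuals to recover $F$. You also adopt the convention the paper's proof relies on: $d_{\mathrm{bulk}}=t$ means every bulk fault of weight at most $t$ is corrected, whereas the literal $\min$ in the definition would only give weight $<t$. The paper then writes $\widetilde{F_1}F_1\simeq\mathbb{I}$ and $\widetilde{F_2}F_2\simeq\mathbb{I}$ and multiplies. In doing so it tacitly treats ``same logical action'' as a relation compatible with products of detector-silent bulk faults. Under that reading, your second paragraph is already a complete proof at the paper's level of rigor.

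The additional justification in your last two paragraphs does not hold up. All the definition of $d_{\mathrm{bulk}}$ gives you is that a residual $r_i=\mathbf{x}_i+\widetilde{\mathbf{x}}$ is not a nontrivial class of $\ker\partial_{1,A}/\operatorname{im}\partial_{2,A}$. That set is not closed under addition, and linearity of the class map only helps inside $\ker\partial_{1,A}$.

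Concretely, suppose $F\in\ker\partial_{1,A}$, so that $\partial_{1,A}r_1=\partial_{1,A}r_2$. If this common value is $0$, both residuals lie in $\operatorname{im}\partial_{2,A}$ and so does $F$. If it is nonzero, the hypothesis is vacuous for both residuals, and nothing you have said prevents $r_1+r_2=F$ from being a nontrivial logical.

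This case is not excluded a priori, because $\widetilde{\mathbf{x}}$ is chosen from $D$ alone and detector-silent bulk faults can flip stabilizer tubes. For example, in repeated syndrome extraction, take a data error between two rounds together with flips of all later measurements of the affected checks.

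Your fallback reads off the logical action via Lemma~\ref{lemma:spacetimephysical} as commutation with fixed logical representatives. That functional is linear, but it is a different notion. On operators with nonzero stabilizer-tube syndrome it depends on the choice of representatives, and the $d_{\mathrm{bulk}}$ hypothesis says nothing about its value there.

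To close the argument you have two options. You can fix one additive notion of logical action on detector-silent bulk faults and read both the hypothesis and the conclusion in that sense, which is what the paper implicitly does. Alternatively, you can add an assumption guaranteeing that the residuals lie in $\ker\partial_{1,A}$.
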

\begin{proof}
    Let $F$ be a bulk fault with $|F|\leq 2t$. If $F$
    has a nontrivial detector syndrome, then it is detectable. It therefore remains to consider the case in which $F$ has trivial detector syndrome.

    Partition the elementary faults in $F$ into two fault configurations
    $F_1$ and $F_2$ such that
    \[
    F=F_1F_2,
    \qquad
    |F_1|,
    |F_2|\leq t.
    \]
    Since the detector syndrome is linear and $F$ has trivial syndrome,
    $F_1$ and $F_2$ produce the same detector syndrome $DF_1=DF_2$. Let $\widetilde{F_1},\widetilde{F_2}$ be the minimal-weight recovery for detector syndrome $DF_1$ and $DF_2$, given the syndromes are equal $DF_1=DF_2$, the recovery operations are also the same $\widetilde{F_1}=\widetilde{F}_2$. By the definition of
    $d_{\mathrm{bulk}}=t$, every bulk fault of weight at most $t$ is
    correctable, thus 
    \begin{eqs}
        \widetilde{F_1} F_1 \simeq \mathbb{I},\quad \widetilde{F_2} F_2 \simeq \mathbb{I},
    \end{eqs}
    where $\simeq$ means that the two operators have the same logical action.
    Taking the adjoint of the first expression and multiplying it by the second gives
    \begin{eqs}
        (\widetilde{F_1} F_1)^\dagger \widetilde{F_2} F_2= F_1^\dagger \widetilde{F_1}^\dagger \widetilde{F_2} F_2= F_1^\dagger F_2 \simeq \mathbb{I}.
    \end{eqs}
    Since qubit Pauli operator is self-adjoint, then
    \begin{eqs}
        F_1^\dagger F_2 = F_1 F_2= F\simeq \mathbb{I}
    \end{eqs}
    has trivial logical action. Therefore, every bulk fault of weight at
    most $2t$ is either detectable or logically trivial.
\end{proof}

\section{Bridging fault distance to Gadget Correctness properties}

In this section, we present three theorems systematically connecting the fault distances of spacetime complexes to Gottesman's gadget correctness properties~\cite[Sec.~10.2]{gottesman2024surviving}.

\begin{definition}[correctness property of a circuit]
A circuit satisfies the correctness property if the input code has code distance $d=2t+1$, and if the number of errors in the input state together with the number of faults in the circuit is at most $t$, then after perfect error correction of the output code, the outcome is equivalent to an ideal circuit acting on the errorless codewords. 
\end{definition}

\begin{theorem}\label{thm:ctod}Let a Clifford circuit have the input state as a codeword of an $[\![n,k,d]\!]$ stabilizer code, whose code distance is $d=2t+1$, together with ancilla qubits in initial states. If this circuit satisfies Gottesman's correctness property for $t$ faults, then the induced spacetime subsystem code has fault distance $d_\st=d$.   
\end{theorem}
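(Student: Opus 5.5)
We prove $d_\st=d$ by showing $d_\st\ge d$ and $d_\st\le d$ separately. The common tool is Lemma~\ref{lemma:outputequiv} together with Lemma~\ref{lemma:spacetimephysical}. Together they move any spacetime fault $E$ to a gauge-equivalent operator $F$ on $Q_{\mathrm{output}}$ with the same spacetime syndrome. Physically, $F$ is the propagated fault at the end of the circuit, and its commutation with $\mathcal{M}$ gives the measurement outcomes.

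\emph{Upper bound $d_\st\le d$.} By Lemma~\ref{lemma:spacetimelogical}, logical representatives of the spacetime code may be chosen on $Q_{\mathrm{input}}$ as Paulis that commute with $\mathcal{S}_{\mathrm{in}}$. Take $L$ to be a minimum-weight nontrivial logical of the input $[\![n,k,d]\!]$ code, placed on the input ports, so $|L|=d$. By Lemma~\ref{lemma:spacetimestabilizer}, the input restriction of every spacetime stabilizer either lies in $\mathcal{S}_{\mathrm{in}}$ or commutes with it. Since $L$ commutes with $\mathcal{S}_{\mathrm{in}}$, it commutes with every spacetime stabilizer, so $L\in\ker\partial_{1,A}$.

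It remains to show $L\notin\mathrm{im}\,\partial_{2,A}$, i.e.\ $L$ is not a gauge operator. I would argue this through the correctness property. Suppose instead $L\in\mathcal{G}$. Then the ideal circuit acting on the input $L\ket{\bar\psi}$ would be indistinguishable, after ideal output correction, from the circuit acting on $\ket{\bar\psi}$. Gauge equivalence only changes gauge fixings: bond gauges are contractions, gate gauges are gate identities, and $\mathcal{M}$ only affects measurement records. Because the input code has logical qubits that the circuit faithfully transmits, this contradicts the fact that the ideal circuit implements a well-defined logical map on the input. Hence $d_\st\le d$.

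\emph{Lower bound $d_\st\ge d$.} Let $E\in\ker\partial_{1,A}$ with $|E|\le 2t=d-1$. Split $E=E_1E_2$ with $|E_1|,|E_2|\le t$. Each $E_j$, viewed as input errors plus circuit faults, has total weight at most $t$. Gottesman's correctness property therefore says that $E_j$ followed by ideal output decoding acts as the identity logically, and that this does not depend on how the recovery depends on the measurement record. Since $\sigma(E)=0$, Lemma~\ref{lemma:spacetimephysical} gives $\sigma(E_1)=\sigma(E_2)$. So $E_1$ and $E_2$ produce identical measurement records and identical output syndromes, and the same ideal recovery $R$ applies to both. From $RE_1\simeq\mathbb{I}$ and $RE_2\simeq\mathbb{I}$ we get $E=E_1E_2\simeq\mathbb{I}$; this is the argument of Lemma~\ref{lemma:bulk_detectable_distance}, now applied to the full spacetime code.

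We then need to turn ``trivial logical action'' into $E\in\mathrm{im}\,\partial_{2,A}$. Push $E$ to an output-supported $F$ that commutes with $\mathcal{M}$ and with the output parts of the stabilizer tubes, i.e.\ with the output code stabilizers. Trivial logical action means $F$ is an output stabilizer times an element of $\mathcal{M}$. The output stabilizers are exactly the output parts of stabilizer tubes, so they are gauge-equivalent to elements of $\mathcal{S}_{\mathrm{in}}\mathcal{R}\mathcal{B}\subseteq\mathcal{G}$. Hence $F\in\mathcal{G}$, so $E\in\mathcal{G}$ and $E$ is not a nontrivial undetectable fault.

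\emph{Main obstacle.} The hardest step is the final identification: that a logically trivial, syndrome-free $F$ on $Q_{\mathrm{output}}$ lies in the gauge group. This requires showing that the output code's stabilizer group equals the set of output images of stabilizer tubes, modulo $\mathcal{M}$. I would prove it by a counting argument mirroring Lemma~\ref{lemma:spacetimelogical}: there are $n_{\mathrm{tot}}-k-n_{\mathcal{M}}$ stabilizer tubes, which matches the rank of the output code. Together with the isomorphism $C$ defined there, this gives the required equality. A secondary subtlety is logical-class dependence: if the circuit applies a logical gate, ``trivial action'' must be read as agreement with the ideal logical map. Equivalently, one compares $E$ against the faultless circuit rather than against the identity.
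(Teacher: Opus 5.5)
Your route is the paper's. The upper bound places a minimum-weight logical $L$ of the input code on $Q_{\mathrm{input}}$; the paper simply cites Lemma~\ref{lemma:spacetimelogical} here. The lower bound splits an undetectable fault of weight at most $2t$ into two halves of weight at most $t$ with equal syndrome, so the common recovery cancels. Your final conversion of ``trivial logical action'' into membership in $\mathcal{G}$ makes explicit a step the paper leaves implicit. Two steps need repair.

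First, your derivation of $L\in\ker\partial_{1,A}$ is a non sequitur. Knowing that the input part $P$ of a stabilizer commutes with $\mathcal{S}_{\mathrm{in}}$, and that $L$ does too, says nothing about $[L,P]$. If the circuit measures a logical operator (e.g.\ transversally measures $\bar Z$), there is a logical-measurement stabilizer with input part $\bar Z\notin\mathcal{S}_{\mathrm{in}}$, and $L=\bar X$ anticommutes with it. You need every stabilizer to be a tube or a detector. That is the hypothesis of Lemma~\ref{lemma:spacetimelogical}, which the paper uses implicitly. It follows from the premise you already invoke for $L\notin\mathcal{G}$: the ideal circuit realizes an injective logical map, so no logical operator is measured. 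Invoke that premise at this step too.

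Second, the theorem allows adaptive Pauli feedback (unlike Theorem~\ref{thm:GCP}), and classically controlled Paulis are not tensors of the spacetime code. The paper's first step commutes all feedback to the output and merges it with the output decoder into one Pauli $\rho$ on $Q_{\mathrm{output}}$, determined by the record. Without this, ``the same recovery $R$ applies to both'' and the cancellation of $R$ are unjustified when feedback occurs mid-circuit.

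On your main obstacle, no counting is needed, since you only use that output-code stabilizers lie in $\mathcal{G}$. By the stabilizer update rule, the stabilizer group of the ideal post-measurement output, over all logical inputs, is generated by $\mathcal{M}$ together with the images $C(s)$, $s\in\mathcal{S}_{\mathrm{in}}$, that commute with $\mathcal{M}$. Each such $C(s)\sim_{\mathcal{G}}s$ through bond and gate gauges.

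What does need care is choosing the output representative $F$ to commute with $\mathcal{M}$. Zero syndrome only forces commutation with output parts of stabilizers (Lemma~\ref{lemma:spacetimephysical}), not with measurements whose outcomes are random. Take the Knill gadget for a CSS code, and let $X(v)$ be a nontrivial $X$-type logical. Then $X_{a_o}^{(1)}(v)$ is syndrome-free yet anticommutes with $\mathcal{M}$. You must first multiply it by $C(s)=X_{a_o}^{(1)}(v)X_{b_o}^{(1)}(v)$, with $s=X_{a_o}^{(0)}(v)X_{b_o}^{(0)}(v)\in\mathcal{S}_{\mathrm{in}}$, to reach the representative $X_{b_o}^{(1)}(v)$. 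For the same reason, $\sigma(E_1)=\sigma(E_2)$ gives identical detector and stabilizer outcomes, not identical measurement records. Your claim of identical records should be weakened to identical syndromes, as in the paper.
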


\begin{proof}
A circuit may have multiple rounds of adaptive Pauli feedback, which performs a Pauli operator depending on previous measurement results. Suppose that, there are $n_{\mathrm{fb}}$ rounds of feedback. At time \(t_{j}\), the circuit applies a Pauli feedback \(R_j\) determined by the preceding measurement outcomes. The stabilizer circuit property guarantees that each such correction can be propagated through the remaining gates and replaced by an equivalent Pauli correction at the output. Propagating a correction through a later Pauli measurement may flip the interpretation of that measurement outcome, but this can be handled by updating the classical processing rule.
Therefore, a circuit containing several rounds of adaptive Pauli feedback is logically equivalent to a circuit in which all feedback is postponed until the output. The postponed feedback on the output can be written as
\begin{eqs}
R_{\mathrm{tot}}(\mathbf m)
=
\prod_{j=1}^{n_{\mathrm{fb}}}
\Phi_{t_{j}\rightarrow\tau}\!\left(R_j(\mathbf m_{<j})\right),
\end{eqs}
where \(\Phi_{t_{j}\rightarrow\tau}\) propagates a Pauli operator from time \(t_{f_j}\) to the output time \(\tau\), and $\mathbf{m}_{<j}$ denotes all the measurement outcome up to time $t_j$. The perfect output error correction appearing in Gottesman’s correctness property can similarly be represented by perfect measurements of the output code stabilizers followed by a Pauli correction. Let $\sigma_{\mathrm{out}}$ denote the perfect syndrome measurement outcome of the output code. Combining the minimal-weight recovery of output syndrome $\sigma_{\mathrm{out}}$ with $R_{\mathrm{tot}}(\mathbf{m})$ gives an overall feedback, denoted as $\rho(\mathbf{m},\sigma_{\mathrm{out}})$. 

To compare the spacetime code framework with that of the correctness property, we first note that an error on the input state can be regarded as a spacetime fault occurring immediately before the first circuit operation. Thus, \(r\) input errors together with \(s\) circuit faults correspond to a spacetime fault \(\mathbf e\) of weight $|\mathbf{e}|=r+s$. By Lemma~\ref{lemma:spacetimephysical}, the spacetime stabilizer syndromes \(\partial_{1,A}\mathbf e\) are equivalent to the syndrome obtained from the measurements of the logical operators, circuit detectors, and the output code stabilizer (See Lemman \ref{lemma:spacetimestabilizer} and Definition \ref{def:detector_stabilizer_tube}). Hence, all the measurement outcomes are equivalent to the spacetime stabilizer syndromes, which are both functions of the spacetime fault configurations. Therefore, we write the feedback in the correctness verification as $\rho(\mathbf{m}(\mathbf{e}),\sigma_{\mathrm{out}}(\mathbf{e})) \equiv \rho (\partial_{1,A} \mathbf{e})$. Gottesman’s correctness property implies that for every spacetime fault \(\mathbf e\) with \(|\mathbf e|\le t\), the residual fault $\mathbf e+\rho\!\left(\partial_{1,A}\mathbf e\right) 
$ has trivial logical action compared to the ideal circuit Pauli feedback $\rho(\mathbf{m}(0),\sigma_{\mathrm{out}}(0))$ . 

We now prove that \(d_\st\ge 2t+1\). Suppose, to the contrary, that there exists a nontrivial undetectable spacetime fault \(\boldsymbol{\ell}\) with $|\boldsymbol{\ell}| \leq 2t$. Partition its support into two disjoint fault configurations \(\mathbf e_1\) and \(\mathbf e_2\) satisfying
\begin{eqs}
\boldsymbol{\ell}=\mathbf e_1+\mathbf e_2,
\qquad
|\mathbf e_1|,|\mathbf e_2|\le t.
\end{eqs}
Because \(\boldsymbol{\ell}\) is undetectable, we have $\partial_{1,A}\boldsymbol{\ell}=0$.
Consequently, $\mathbf{e}_1, \mathbf{e}_2$ have the same spacetime stabilizer syndrome
\begin{eqs}
    \partial_{1,A}\mathbf e_1
=
\partial_{1,A}\mathbf e_2,
\end{eqs}
and will be assigned the same recovery \(\rho\). 
Since both faults have weight at most \(t\), the correctness property implies that both residual faults $
\mathbf e_1+\rho(\partial_{1,A} \mathbf{e_1})$ and $
\mathbf e_2+\rho(\partial_{1,A} \mathbf{e}_2)$ have trivial logical action subject to the ideal feedback $\rho(\mathbf{m}(0),\sigma_{\mathrm{out}}(0))$, then so does their sum. But 
\begin{eqs}
\mathbf e_1+\rho(\partial_{1,A} \mathbf{e_1})+
\mathbf e_2+\rho(\partial_{1,A} \mathbf{e}_2) = \mathbf{e}_1+ \mathbf{e}_2= \boldsymbol{\ell},
\end{eqs}
this contradicts the assumption that \(\boldsymbol{\ell}\) is a nontrivial spacetime logical fault. Therefore, we have the lower bound of $d_\st$
\begin{eqs}
    d_\st\ge 2t+1=d.
\end{eqs}
For the upper bound, put a minimum-weight nontrivial logical Pauli operator \(L\) of the input code with $|L|=d$ on $Q_{\mathrm{input}}$. By Lemma~\ref{lemma:spacetimelogical}, this produces a nontrivial undetectable spacetime fault of the same weight. Consequently, we can obtain the upper bound
$d_\st \le d$. 
Combining the two bounds gives $d_\st =d$.

\end{proof}

\begin{theorem}\label{thm:GCP}
   Let a Clifford circuit with no Pauli feedback have the input state as a codeword of an $[\![n,k,d]\!]$ stabilizer code, whose code distance is $d=2t+1$, together with ancilla qubits in initial states. If the ideal circuit implements the designated logical operation, then the induced spacetime subsystem code has fault distance $d_\st =2t+1$ if and only if this circuit satisfies the correctness property. 
\end{theorem}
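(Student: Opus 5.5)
The plan is to prove the two directions separately. The ``if'' direction (correctness property $\Rightarrow$ $d_\st=2t+1$) is exactly Theorem~\ref{thm:ctod}, specialized to the case $n_{\mathrm{fb}}=0$. I will simply invoke it, noting that without feedback the recovery $\rho$ reduces to the output-code correction combined with the classical reinterpretation of measurement outcomes. All the substance is therefore in the converse: assuming $d_\st=2t+1$, show that every input error plus circuit fault configuration $\mathbf e$ with $|\mathbf e|\le t$ is corrected, after perfect output error correction, to the ideal logical action.

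For the converse I would first fix the recovery. Since there is no Pauli feedback, the only correction is applied after the perfect output syndrome measurement. Let $\sigma=\partial_{1,A}\mathbf e$ be the full spacetime stabilizer syndrome. By Lemma~\ref{lemma:spacetimephysical}, $\sigma$ is determined by the measurement record $\mathbf m$ together with the perfect output syndrome $\sigma_{\mathrm{out}}$, since detectors, logical measurements and stabilizer tubes are read off from measured operators on $Q_{\mathrm{output}}$. Take $\rho(\sigma)$ to be a minimum-weight spacetime fault $\widetilde{\mathbf e}$ with $\partial_{1,A}\widetilde{\mathbf e}=\sigma$. By Lemma~\ref{lemma:outputequiv}, it can be pushed to a gauge-equivalent operator on $Q_{\mathrm{output}}$, which is an implementable Pauli correction; the ambiguity in that push lies in $\mathcal M$ and only affects the interpretation of measurement outcomes.

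Then $|\widetilde{\mathbf e}|\le|\mathbf e|\le t$, so $\mathbf e+\widetilde{\mathbf e}\in\ker\partial_{1,A}$ has weight at most $2t<d_\st$. It therefore lies in $\operatorname{im}\partial_{2,A}$, i.e.\ it is a gauge operator. The remaining task is to translate ``residual fault in $\mathcal G$'' into ``output equals the ideal circuit on errorless codewords''. I would do this by pushing the residual to $Q_{\mathrm{output}}$ via Lemma~\ref{lemma:outputequiv}. An element of $\mathcal G$ is generated by $\mathcal S_{\mathrm{in}}$, $\mathcal R$, $\mathcal B$ and $\mathcal M$. Elements of $\mathcal R$ and $\mathcal B$ only express propagation through the ideal circuit. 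Elements of $\mathcal S_{\mathrm{in}}$ act trivially on the input codeword and on the ancilla initial states. Elements of $\mathcal M$ act on measured qubits and at most flip an outcome in a way consistent with the decoding, or are absorbed by the destructive measurement. Consequently the pushed residual restricted to the unmeasured output qubits acts as an element of the output code's stabilizer group, relative to the ideal circuit. Combined with the hypothesis that the ideal circuit implements the designated logical operation, this gives the correctness property.

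The main obstacle is this last translation step. Two things need care. First, I must check that a gauge operator carried to the output boundary really is a stabilizer of the output code, rather than something that merely commutes with the detectors. I would try to prove this by showing that its output restriction commutes with all stabilizer tubes and logical measurements (their output parts) and is generated by propagated $\mathcal S_{\mathrm{in}}$ elements modulo $\mathcal M$. Second, the measurement outcomes in the faulty run may differ from the ideal ones, and this must not spoil the logical action. Since there is no feedback, the outcomes enter only through the final classical decoding, and I would argue that the flips caused by $\mathcal M$-components are compensated by the same syndrome-dependent reinterpretation used for $\rho$.
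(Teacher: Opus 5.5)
Your converse takes a different route from the paper's. The paper argues that $d_\st=2t+1$ forces every instantaneous code to have distance at least $2t+1$, and then ``pushes an ideal decoder through the gadget.'' You instead run the splitting argument of Theorem~\ref{thm:ctod} backwards. Your core step, $|\mathbf e+\widetilde{\mathbf e}|\le 2t<d_\st$ and hence $\mathbf e+\widetilde{\mathbf e}\in\operatorname{im}\partial_{2,A}$, is sound. The gap is the recovery itself.

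\textbf{The recovery uses the mid-circuit record.} Your $\rho(\sigma)$ depends on the detector bits, i.e.\ on the mid-circuit measurement record $\mathbf m$. In a circuit with no Pauli feedback, the correctness property refers to perfect error correction of the output code, which sees only $\sigma_{\mathrm{out}}$. Indeed, the paper's own $\rho(\mathbf m,\sigma_{\mathrm{out}})$ in Theorem~\ref{thm:ctod} reduces to output-only minimum-weight decoding when $n_{\mathrm{fb}}=0$. For that recovery your argument has no analogue, because a single fault can spread to an output error of weight larger than $t$, and the converse then genuinely fails.

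Here is an example. Take Steane-code data ($t=1$), an ancilla $a$ prepared in $\ket{0}$, apply $\mathrm{CNOT}_{a\to 1}$ then $\mathrm{CNOT}_{a\to 2}$, and measure $a$ in $Z$. This is a logical identity with no feedback, and $d_\st=3$:
\begin{itemize}
\item one $X$ fault on $a$, alone or with a data fault, trips the detector $Z_a\to Z_a$;
\item two $X$ faults on $a$ leave nothing, $X_1$, $X_2$ or $X_1X_2$ on the data, all caught by $Z$-type stabilizer tubes;
\item $Z$ faults on $a$ are gauge, and data-only faults of weight $\le 2$ are caught by the tubes.
\end{itemize}
Yet a single $X$ on $a$ right after preparation becomes $X_1X_2$ on the output block. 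The output-only decoder completes this to a weight-$3$ Hamming codeword, i.e.\ $\overline X$.

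So you must state explicitly that the output correction may be conditioned on $(\mathbf m,\sigma_{\mathrm{out}})$, i.e.\ a record-dependent Pauli-frame update at the output. Under that reading your argument is the right one, and this dependence is exactly what it uses. The paper's ``push the decoder through'' step glosses over the same point.

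\textbf{Two smaller corrections.}

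First, logical-measurement syndrome bits are not readable from the record, since the ideal outcome of a logical measurement depends on the unknown encoded state. Either exclude logical measurements, which is automatic when the designated operation is a logical unitary without feedback, or decode with detectors and stabilizer tubes only.

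Second, for the translation step the fact you need is that the gauge elements supported on $Q_{\mathrm{output}}$ are exactly $\langle C(\mathcal S_{\mathrm{in}}),\mathcal M\rangle$, with $C$ the propagation map of Lemma~\ref{lemma:spacetimelogical}. It is not $\mathcal M$ alone, as Lemma~\ref{lemma:outputequiv} asserts. The proof:
\begin{itemize}
\item An element of $\langle\mathcal R,\mathcal B\rangle$ supported on $Q_{\mathrm{input}}\cup Q_{\mathrm{output}}$ must carry matching Paulis across every bond and gate, so it equals $P\otimes C(P)$.
\item Writing such a gauge element as $s\,r\,b\,m$ then forces it to equal $C(s)\,m$.
\end{itemize}
Also note that $\mathcal M$-components never flip outcomes. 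A $C(s)$-component that anticommutes with a measurement only relabels an outcome that is random in the ideal circuit. That is harmless precisely because the ideal no-feedback circuit yields the designated output for every outcome record.
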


\begin{proof}
We prove the correctness property implies spacetime code distance in Theorem \ref{thm:ctod}, taking the no-feedback case. To prove the other direction, note that the fault distance assumption ensures that the instantaneous stabilizer code at every time step has distance at least $2t+1$. Consequently, any fault configuration of weight at most $t$ occurring within the gadget can be corrected by an ideal decoder. The ideal decoder can therefore be pushed through the gadget to remove all such faults, establishing Gottesman's \emph{FT Gate Correctness Property} (GCP).
\end{proof}

\begin{definition}[Fault non-amplifying syndrome extraction]
A syndrome extraction gadget is called \(t\)-fault-non-amplifying if, for all $\varepsilon\in A_{1,\mathrm{input}}$ and
$\eta\in A_{1,\mathrm{bulk}}$ satisfying
$|\varepsilon|+|\eta|\leq t$, there exist a pure measurement fault $ m(\varepsilon,\eta )\in A_{1,\mathrm{bulk}}$ and output fault $\varepsilon_{\mathrm{out}}(\varepsilon,\eta)$ supported on $Q_{\mathrm{output}}$ such that
\begin{eqs}
    \varepsilon \oplus \eta \oplus \mathbf{0} \sim_{\mathcal{G}}  \mathbf{0} \oplus m(\varepsilon,\eta) \oplus \varepsilon_{\mathrm{out}}(\varepsilon,\eta),
\end{eqs}
where the equivalence is taken modulo the full spacetime gauge group. Moreover, the weight of the output boundary component does not exceed the total weight of the original fault configuration:
\begin{eqs}
    |\varepsilon_{\mathrm{out}} (\varepsilon,\eta)| \leq |\varepsilon| + |\eta|.
\end{eqs}
\end{definition}

\begin{theorem}\label{thm:ECCP}
    The spacetime complex for a $t$-fault-non-amplifying syndrome extraction circuit with minimal-weight recovery based on detector syndromes satisfies Gottesman's FT Error Correction Correctness Property for every fault configuration of total weight $r\leq t$, if the spacetime complex has bulk effective distance $d_{\mathrm{bulk}} \geq t$ and spacetime fault distance $d_\st  \geq 2t+1$.
\end{theorem}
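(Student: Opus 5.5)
To verify Gottesman's FT Error Correction Correctness Property (ECCP), I would fix an input error $\varepsilon\in A_{1,\mathrm{input}}$ and a bulk fault $\eta\in A_{1,\mathrm{bulk}}$ with $|\varepsilon|+|\eta|=r\le t$. The goal is to show that after the minimum-weight recovery $\widetilde{\mathbf{x}}$ determined by the detector syndrome, the output lies within distance $r$ of the ideal codeword. It must also decode, under ideal output decoding, to the same logical state as the error-free input. My plan is to reduce the whole fault configuration to a canonical form using the $t$-fault-non-amplifying hypothesis. I would then separate what the detectors see from what survives on the output boundary.

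\textbf{Step 1 (canonical form).} By the non-amplifying property,
\[
\varepsilon\oplus\eta\oplus\mathbf 0\sim_{\mathcal G}\mathbf 0\oplus m(\varepsilon,\eta)\oplus\varepsilon_{\mathrm{out}}(\varepsilon,\eta),
\qquad |\varepsilon_{\mathrm{out}}|\le r .
\]
By Lemma~\ref{lemma:spacetimephysical}, gauge-equivalent configurations have identical spacetime syndromes, in particular identical detector syndromes. So the decoder's input is $D\,m(\varepsilon,\eta)$, since $\varepsilon_{\mathrm{out}}$ on $Q_{\mathrm{output}}\setminus\Lambda_{\mathrm{meas}}$ is invisible to detectors. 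Here I am assuming the measurement gauge generators are single-qubit, as in Corollary~\ref{coro:singlemeasreduc}; if not, I would absorb the $\Lambda_{\mathrm{meas}}$ part of $\varepsilon_{\mathrm{out}}$ into $m$ modulo $\mathcal M$.

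\textbf{Step 2 (recovery is harmless).} The recovery $\widetilde{\mathbf{x}}$ satisfies $D\widetilde{\mathbf{x}}=Dm$ and $|\widetilde{\mathbf{x}}|\le|m|$. I would bound $|m|\le r$ by noting that each elementary fault of $\eta$ contributes at most one measurement flip. This is the part of the non-amplifying construction I would need to make explicit, perhaps by strengthening the reading of the definition. Then $m+\widetilde{\mathbf{x}}$ is a bulk fault of weight at most $2r\le 2t$ with trivial detector syndrome. By Lemma~\ref{lemma:bulk_detectable_distance}, with $d_{\mathrm{bulk}}\ge t$, it has trivial logical action. By Lemma~\ref{lemma:outputequiv}, it is therefore gauge equivalent to an output operator that is a product of a stabilizer of the output code and an element of $\mathcal M$. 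I then need to push $m+\widetilde{\mathbf{x}}$ to the output boundary and control its weight there. This is the main obstacle: Lemma~\ref{lemma:bulk_detectable_distance} only gives ``trivial logical action,'' not small output weight. I would first try to argue that pure measurement faults and their minimum-weight recoveries live on measurement locations in $\Lambda_{\mathrm{meas}}$. Their output-equivalent forms then lie in $\mathcal M$, contributing nothing on the data output.

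\textbf{Step 3 (output bound and logical correctness).} After recovery, the total residual is gauge equivalent to $\varepsilon_{\mathrm{out}}$ times a logically trivial operator supported on $\Lambda_{\mathrm{meas}}$. The data output therefore differs from the ideal codeword by $\varepsilon_{\mathrm{out}}$, of weight at most $r\le t$. This gives the first ECCP condition. For the second condition, ideal decoding of the output code corrects $\varepsilon_{\mathrm{out}}$ to the right codeword. I would argue this by contradiction using $d_{\mathrm{st}}\ge 2t+1$, as in the proof of Theorem~\ref{thm:ctod}. The residual $\varepsilon+\eta+\widetilde{\mathbf{x}}$, combined with the minimum-weight output correction (weight at most $r$), yields an undetectable spacetime fault of weight at most $r+r+\dots\le 2t$. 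If this fault were logically nontrivial it would contradict $d_{\mathrm{st}}\ge 2t+1$. To keep the count at or below $2t$ rather than $3t$, I would compare against the output-boundary form and use $\varepsilon_{\mathrm{out}}$ instead of the original input.
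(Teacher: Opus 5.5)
\paragraph{Gap in Steps 2--3.} The gap is in Steps~2--3, exactly at the obstacle you flagged, and your proposed way around it fails. The minimum-weight recovery $\widetilde{\mathbf x}$ is chosen among all bulk faults with the observed detector syndrome, not only among measurement faults. It can therefore contain bulk data faults whose output-equivalent form (Lemma~\ref{lemma:outputequiv}) is supported on the data qubits $Q_{\mathrm{output}}\setminus\Lambda_{\mathrm{meas}}$.

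Here is a concrete case in repeated syndrome extraction. A flipped outcome of a check in the last round fires a single detector. So does a data error just before that round, provided the error anticommutes with that check and no other (e.g.\ on a boundary qubit of the surface code). If the decoder returns the data error, then $m+\widetilde{\mathbf x}$ is detector-trivial and is not a nontrivial spacetime logical, which is consistent with Lemma~\ref{lemma:bulk_detectable_distance}. Yet it is gauge equivalent to a weight-one data error on the output, not to an output-code stabilizer times an element of $\mathcal M$.

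So ``trivial logical action'' gives you no control of the data output. Your Step~3 claim, that the data output differs from the codeword only by $\varepsilon_{\mathrm{out}}(\varepsilon,\eta)$, does not follow. In general the residual is $\varepsilon_{\mathrm{out}}(\varepsilon,\eta)$ plus the output image of $\widetilde{\mathbf x}$, of weight up to $2r$, and an output code of distance $2t+1$ need not decode that correctly. You also omitted faults in the output region. They are invisible to detectors and must be carried into the output residual.

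\paragraph{The missing idea.} The paper's proof accounts for the recovery explicitly instead of trying to prove it harmless. Write the fault as $\varepsilon_1\oplus\varepsilon_2\oplus\varepsilon_3$ over the input, bulk and output regions.
\begin{enumerate}
\item Bound the recovery by $|\widetilde{\varepsilon_2}|\le|\varepsilon_2|\le t$. This holds because $\varepsilon_2$ itself is a candidate with the observed detector syndrome, so you never need a bound on $|m|$.
\item Apply the non-amplifying property a second time, to $\mathbf 0\oplus\widetilde{\varepsilon_2}\oplus\mathbf 0\sim_{\mathcal G}\mathbf 0\oplus m(0,\widetilde{\varepsilon_2})\oplus\varepsilon_{\mathrm{out}}(0,\widetilde{\varepsilon_2})$.
\item The piece $\varepsilon_{\mathrm{out}}(0,\widetilde{\varepsilon_2})$ is determined by the known recovery, so it is removed by a Pauli-frame update.
\item What remains is the data residual $\varepsilon_R=\varepsilon_{\mathrm{out}}(\varepsilon_1,\varepsilon_2)+\varepsilon_3$ with $|\varepsilon_R|\le r\le t$. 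The leftover measurement faults do not meet the stabilizer tubes and are discarded.
\item A minimum-weight output correction $\widetilde{\varepsilon_R}$ gives $|\varepsilon_R+\widetilde{\varepsilon_R}|\le 2t$ with trivial output syndrome. This is a stabilizer, since $d_{\mathrm{st}}\ge 2t+1$ forces the output code distance to be at least $2t+1$.
\end{enumerate}
On this route the data effect of the bulk recovery is neutralized rather than analyzed. Lemma~\ref{lemma:bulk_detectable_distance}, and with it the hypothesis on $d_{\mathrm{bulk}}$, never carries any weight.
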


\begin{proof}
    Suppose we have a weight-$r$ fault, which can be written as $\varepsilon_1 \oplus \varepsilon _2 \oplus \varepsilon_3$ with $|\varepsilon_1|+|\varepsilon_2|+|\varepsilon_3|=r$ where $r\leq t$. Here the three components are supported on the input, bulk, and output region. Let such spacetime complex has fault distance $d \geq 2t+1$ and bulk effective distance $d_{\mathrm{bulk}} \geq t$. 
    
    By applying the minimum-weight recovery to the bulk syndrome of $\varepsilon_2$, we apply a recovery operation $\widetilde{\varepsilon_2}$ with $|\widetilde{\varepsilon_2}| \leq |\varepsilon_2|$. After we apply the bulk recovery $\widetilde{\varepsilon_2}$, the fault is written as $\varepsilon_1 \oplus (\varepsilon_2 + \widetilde{\varepsilon_2})\oplus \varepsilon_3$.

    We now analyze the boundary action of the actual bulk fault and the applied recovery separately. Since $|\varepsilon_1| + |\varepsilon_2| \leq t$, the \(t\)-fault-non-amplifying syndome extraction gadget gives
    \begin{eqs}\label{eq:ECCP_fault_equivalence1}
        \varepsilon_1 \oplus \varepsilon_2 \oplus \mathbf{0} \sim_{\mathcal{G}} \mathbf{0} \oplus m(\varepsilon_1, \varepsilon_2) \oplus \varepsilon_{\mathrm{out}} (\varepsilon_1, \varepsilon_2),
    \end{eqs}
    where $m(\varepsilon_1, \varepsilon_2)$ is a pure measurement fault and $\varepsilon_{\mathrm{out}} (\varepsilon_1, \varepsilon_2)$ is supported on the output boundary, with $|\varepsilon_{\mathrm{out}} (\varepsilon_1, \varepsilon_2)| \leq |\varepsilon_1| + |\varepsilon_2|$.

    Similarly, because $|\widetilde{\varepsilon_2} | \leq |\varepsilon_2| \leq t$, applying the fault non-amplifying property gives us 
    \begin{eqs}\label{eq:ECCP_fault_equivalence2}
        \mathbf{0} \oplus \widetilde{\varepsilon_2} \oplus \mathbf{0} \sim_{\mathcal{G}} \mathbf{0} \oplus m(0, \widetilde{\varepsilon_2}) \oplus \varepsilon_{\mathrm{out}} (0,\widetilde{\varepsilon_2}),
    \end{eqs}
    where $m(0, \widetilde{\varepsilon_2})$ is a pure measurement fault and $\varepsilon_{\mathrm{out}} (0,\widetilde{\varepsilon_2})$ is supported on the output boundary with $|\varepsilon_{\mathrm{out}} (0,\widetilde{\varepsilon_2})| \leq |\widetilde{\varepsilon_2}| \leq |\varepsilon_2|$.

    Combining Eqs. \eqref{eq:ECCP_fault_equivalence1} and \eqref{eq:ECCP_fault_equivalence2}, the fault configuration after the bulk recovery $\widetilde{\varepsilon_2}$ satisfies
    \begin{eqs}
        \varepsilon_1 \oplus (\varepsilon_2 +\widetilde{\varepsilon_2}) \oplus \varepsilon_3 \sim_{\mathcal{G}} \mathbf{0} \oplus (m(\varepsilon_1, \varepsilon_2) + m(0,\widetilde{\varepsilon_2})) \oplus (\varepsilon_{\mathrm{out}} (\varepsilon_1, \varepsilon_2) + \varepsilon_{\mathrm{out}} (0, \widetilde{\varepsilon_2})) + \varepsilon_3. 
    \end{eqs}
    The contribution $\varepsilon_{\mathrm{out}} (0, \widetilde{\varepsilon_2}))$ is completely determined by the known recovery $\widetilde{\varepsilon_2}$. It can therefore be canceled by a deterministic output correction or, equivalently, incorporated into the output Pauli frame. After this Pauli-frame update, the effective residual fault is 
    \begin{eqs}
        \mathbf{0} \oplus (m(\varepsilon_1, \varepsilon_2) + m(0,\widetilde{\varepsilon_2})) \oplus (\varepsilon_{\mathrm{out}}(\varepsilon_1, \varepsilon_2) +\varepsilon_3 ) \equiv \mathbf{0} \oplus (m(\varepsilon_1, \varepsilon_2) + m(0,\widetilde{\varepsilon_2})) \oplus \varepsilon_R,
    \end{eqs}
    where $\varepsilon_R:= \varepsilon_{\mathrm{out}}(\varepsilon_1, \varepsilon_2) +\varepsilon_3$ is the output residual error. The weight of residual error satisfies 
    \begin{eqs}
        |\varepsilon_R| \leq |\varepsilon_{\mathrm{out}} (\varepsilon_1, \varepsilon_2)| + |\varepsilon_3| \leq |\varepsilon_1| + |\varepsilon_2| + |\varepsilon_3| \leq t.
    \end{eqs}

     We then use the matrix form of stabilizer tubes   $S_{\mathrm{in}} \bigoplus \big(\bigoplus_{t=0}^{\tau} S_{t} \big)\bigoplus S_{\mathrm{out}}$\footnote{Here we remove the doubling of $S_{\mathrm{in}}$ which doesn't affect the analysis in this section.} in Definition~\ref{def:detector_stabilizer_tube}, where $S_{\mathrm{in}}, S_{\mathrm{out}}$ are check matrices for input and output code supporting on the input and output boundaries; $S_{t}$ is the check matrices for the instantaneous propgation of $S_{\mathrm{in}}$ at time $t$, to detect the error on the output boundary\footnote{Note that $\oplus_{t=0}^\mathsf{\tau} S_t$ is a direct sum over all the instantaneous propagation of $S_{\mathrm{in}}$ for all the time $t$ between the input and output boundaries.}. Given we do not have any error on the input boundary and $m(\varepsilon_1, \varepsilon_2) + m(0,\widetilde{\varepsilon_2})$ is a pure measurement fault which is not on the support of stabilizer tube, we have the syndrome of stabilizer tubes
     \begin{eqs}
         \left (S_{\mathrm{in}} \bigoplus \big(\bigoplus_{t=0}^{\tau}S_{t} \big)\bigoplus S_{\mathrm{out}} \right) \cdot  (\mathbf{0} \oplus (m(\varepsilon_1, \varepsilon_2) + m(0,\widetilde{\varepsilon_2}))  \oplus \varepsilon_R)=  S_{\mathrm{out}} \varepsilon_R.
     \end{eqs}
    Note that the measurement fault $\mathbf{0} \oplus (m(\varepsilon_1, \varepsilon_2) + m(0,\widetilde{\varepsilon_2}))\oplus \mathbf{0}$ has zero contribution, because $\bigoplus_{t=0}^{\tau}S_{t}$ doesn't overlap with measurement ports.

    We then perform a minimum-weight recovery restricted to the output boundary based on the syndrome $ S_{\mathrm{out}} \varepsilon_R$, which is 
     \begin{eqs}
         \widetilde{\varepsilon_R}:=\argmin \{|\rho|: S_\mathrm{out} \rho= S_{\mathrm{out}} \varepsilon_R\},  
     \end{eqs}
     where $S_\mathrm{out}$ is the check matrix of output code. Note that this is a minimum-weight recovery restricted to the output boundary, which corresponds to the ideal decoder acting on the output code in Gottesman's formulation, instead of the minimum-weight recovery on the entire spacetime block.

     After the second recovery, we finish the recovery. Now the fault configuration becomes
     \begin{eqs}
         \mathbf{0} \oplus (m(\varepsilon_1, \varepsilon_2) + m(0,\widetilde{\varepsilon_2})) \oplus (\varepsilon_R +\widetilde{\varepsilon_R}) 
         \cong \mathbf{0} \oplus \mathbf{0} \oplus (\varepsilon_R +\widetilde{\varepsilon_R}) ,
     \end{eqs}
     because we are not going to use the measurement outcome for further recovery and $m(\varepsilon_1, \varepsilon_2) + m(0,\widetilde{\varepsilon_2})$ doesn't act on the codeblock, so we are free to discard the measurement fault part.

     Since the spacetime code has fault distance $2t+1$, so the code distance of the output code will not be less than $2t+1$  and $| \varepsilon_R + \widetilde{\varepsilon_R}|\leq 2t$. Finally we have $\mathbf{0} \oplus \mathbf{0} \oplus (\varepsilon_R +\widetilde{\varepsilon_R}) \sim_{\mathcal{G}} \mathbf{0 \oplus \mathbf{0 \oplus \mathbf{0}}}$ which matches Gottesman's \emph{Error Correction Correctness Property} (ECCP) condition.
\end{proof}

Theorems~\ref{thm:GCP} and~\ref{thm:ECCP} relate the fault distances of a spacetime complex with minimal-weight recovery to the \textit{FT Gate Correction Correctness Property} (GCP) and the \textit{FT Error Correction Correctness Property} (ECCP) introduced in Ref.~\cite[Sec.~10.2]{gottesman2024surviving}. These results show that the fault distances provide effective quantitative characterizations of the fault-tolerance properties of a gadget.

Moreover, in proving the sufficient condition for ECCP, we explicitly construct a two-step recovery protocol and show that the spacetime code associated with the syndrome extraction gadget satisfies ECCP under this recovery procedure. This result indicates that requiring an error-non-amplifying syndrome extraction circuit to satisfy $ d \geq 2t+1, d_{\mathrm{bulk}}\geq t$
is a sufficient and potentially stronger condition than ECCP itself. Indeed, ECCP only requires the existence of a suitable recovery procedure, whereas the two-step recovery constructed here need not be optimal among all possible recovery procedures for the spacetime code.

A natural generalization is to replace the non-amplifying bound by a factor $c$ and restrict the input-plus-bulk fault weight so that the propagated output error remains within weight $t$.

\begin{definition}[$(t,c)$-fault amplifying syndrome extraction]
Let $c\geq1$ . A syndrome extraction gadget is called $(t,c)$-fault-amplifying if, for every pair of fault configurations $\varepsilon \in A_{1,\mathrm{input}} , \eta \in A_{1,\mathrm{bulk}}$ satisfying $c(|\varepsilon|+|\eta|) \leq t$, there exist a gauge pure measurement fault $m(\varepsilon,\eta) \in A_{1,\mathrm{bulk}} $ and an output fault $\varepsilon_{\mathrm{out}}(\varepsilon,\eta)$ supported on $Q_{\mathrm{output}}$ such that 
\begin{eqs}
    \varepsilon \oplus \eta \oplus \mathbf{0} \sim_{\mathcal{G}} \mathbf{0} \oplus m(\varepsilon,\eta ) \oplus \varepsilon_{\mathrm{out}} (\varepsilon,\eta),  
\end{eqs}
where the equivalence is taken modulo the full spacetime gauge group, and the output component satisfies $|\varepsilon_{\mathrm{out}}(\varepsilon,\eta)| \leq c (|\varepsilon|+ |\eta|)$. For \(c=1\), this definition reduces to the definition of a \(t\)-fault-non-amplifying syndrome extraction gadget.
\end{definition}

\begin{lemma}
\label{lem:amplified_ECCP}
Consider a $(t,c)$-fault amplifying syndrome extraction circuit with minimal-weight recovery based on detector syndromes. Suppose that its spacetime fault distance satisfies $d\geq 2t+1$ and its bulk effective distance satisfies $d_{\mathrm{bulk}} \geq t$. Then  Gottesman's FT Error Correction Correctness Property holds for every fault configuration of total weight $r\leq \frac{t}{c}$. 
\end{lemma}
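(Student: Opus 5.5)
The plan is to rerun the proof of Theorem~\ref{thm:ECCP} almost verbatim, replacing the non-amplifying bound by the amplified bound $|\varepsilon_{\mathrm{out}}|\le c(|\varepsilon|+|\eta|)$. Restricting the total fault weight to $r\le t/c$ is exactly what keeps every intermediate output error within weight $t$.

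\textbf{Decomposition and bulk recovery.} First decompose a weight-$r$ fault as $\varepsilon_1\oplus\varepsilon_2\oplus\varepsilon_3$ over the input, bulk and output regions, with $r\le t/c$. Apply the minimum-weight bulk recovery $\widetilde{\varepsilon_2}$ determined by the detector syndrome $D\varepsilon_2$. Its weight satisfies $|\widetilde{\varepsilon_2}|\le|\varepsilon_2|$ by the definition of minimum-weight recovery.

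\textbf{Pushing faults to the output.} Since $c(|\varepsilon_1|+|\varepsilon_2|)\le cr\le t$ and $c|\widetilde{\varepsilon_2}|\le cr\le t$, the $(t,c)$-fault-amplifying definition applies to both $(\varepsilon_1,\varepsilon_2)$ and $(0,\widetilde{\varepsilon_2})$. Each is gauge equivalent to a pure measurement fault plus an output fault. The output parts satisfy $|\varepsilon_{\mathrm{out}}(\varepsilon_1,\varepsilon_2)|\le c(|\varepsilon_1|+|\varepsilon_2|)$ and $|\varepsilon_{\mathrm{out}}(0,\widetilde{\varepsilon_2})|\le c|\varepsilon_2|$.

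\textbf{Pauli frame update.} As in Theorem~\ref{thm:ECCP}, $\varepsilon_{\mathrm{out}}(0,\widetilde{\varepsilon_2})$ is a deterministic function of the known recovery, so it is absorbed into the Pauli frame. The residual output error is $\varepsilon_R=\varepsilon_{\mathrm{out}}(\varepsilon_1,\varepsilon_2)+\varepsilon_3$, with
\[
|\varepsilon_R|\le c(|\varepsilon_1|+|\varepsilon_2|)+|\varepsilon_3|\le c\,r\le t,
\]
where the middle inequality uses $c\ge1$.

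\textbf{Output recovery.} The measurement-fault part has no overlap with the stabilizer tubes, so their syndrome reduces to $S_{\mathrm{out}}\varepsilon_R$. Decode this with minimum-weight recovery on the output code, obtaining $\widetilde{\varepsilon_R}$ with $|\widetilde{\varepsilon_R}|\le|\varepsilon_R|\le t$. The combination $\varepsilon_R+\widetilde{\varepsilon_R}$ has trivial output syndrome and weight at most $2t$.

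\textbf{Conclusion.} The condition $d_\st\ge 2t+1$ forces the output code distance to be at least $2t+1$, so $\varepsilon_R+\widetilde{\varepsilon_R}$ acts trivially. The measurement faults are discarded exactly as before, which yields the ECCP.

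The main obstacle is the step where $d_\st\ge2t+1$ is used to bound the output code distance. This is where an undetectable weight-$\le 2t$ output operator must be shown to be either an output stabilizer or a spacetime logical of weight $<d_\st$. I would reuse the argument of Theorem~\ref{thm:ECCP} unchanged, since the amplification factor enters only through the weight bookkeeping above.

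One remaining point is where $d_{\mathrm{bulk}}\ge t$ enters. It guarantees that $\varepsilon_2+\widetilde{\varepsilon_2}$ has no hidden logical action beyond what the gauge equivalences capture, and $|\varepsilon_2|\le t/c\le t$ keeps us within its range.
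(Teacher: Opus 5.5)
Your proposal is correct and follows the paper's route: the paper proves this lemma simply by adapting the proof of Theorem~\ref{thm:ECCP}, and your weight bookkeeping ($c(|\varepsilon_1|+|\varepsilon_2|)\le t$, $c|\widetilde{\varepsilon_2}|\le t$, and $|\varepsilon_R|\le cr\le t$ using $c\ge 1$) is exactly that adaptation. Your closing remark about where $d_{\mathrm{bulk}}\ge t$ enters is not needed, since the gauge equivalences supplied by the amplifying property already hold exactly modulo $\mathcal{G}$, and the paper's argument does not explicitly use that hypothesis either.
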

Then Lemma~\ref{lem:amplified_ECCP} follows by adapting the proof of Theorem~\ref{thm:ECCP}.

\section{Syndrome extraction under phenomenological error model}\label{sec:foliation}

In this section, we derive the detectors of a syndrome extraction circuit of an $N$-qubit CSS code with check matrices $H_X \in \mathbb{F}_2^{r_X \times N}, H_Z \in \mathbb{F}_2^{r_Z \times N}$ from our formalism, by only assuming the phenomenological error model, which provides a concrete example to show how do the detectors of foliation naturally arise from the spacetime complex and circuit. 

When we consider phenomenological error model, we assume the syndrome extraction process is perfect except the measurement, so we consider following tensor that represents the syndrome extraction gadget as depicted in Fig.~\ref{fig:SE_gadget} (a), where we assume the faults can only happen on the edges, but the interior of the gadgets are perfect. The $X$ syndrome extraction tensors are placed at odd time steps, whereas the $Z$ syndrome extraction tensors are placed at even time steps.

\begin{figure}[h]
    \centering
    \includegraphics[width=0.95\textwidth]{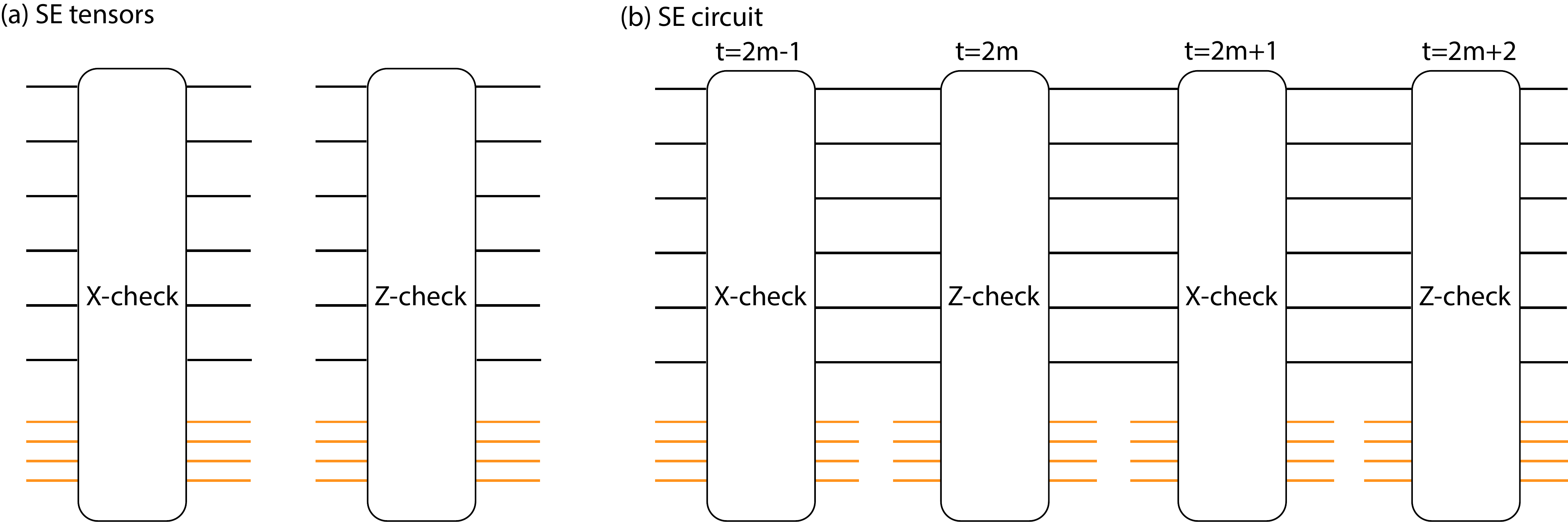}
    \caption{(a)The $X$- and $Z$-syndrome extraction (SE) gadgets. The legs on the left side of gadgets are the input of gadgets, the legs on the right side of gadgets are the output of gadgets, the legs connecting to the bottom of gadgets are measurement qubits for syndrome readout. The legs in black represents the qubits in spacetime of the code block, and legs in orange represents the ancilla qubits for $X$ and $Z$ check measurements; (b) the syndrome extraction (SE) circuit, where we perform $X$- and $Z$-syndrome extraction alternately.}
    \label{fig:SE_gadget}
\end{figure}

For convenience, we write the matrices $H_X, H_Z$ as collections of row vectors, where each individual row vectors corresponds to a check
\begin{eqs}
H_X =
\begin{pmatrix}
\text{---} & h_1       & \text{---} \\
\text{---} & h_2       & \text{---} \\
            & \vdots    &             \\
\text{---} & h_{r_X}   & \text{---}
\end{pmatrix},\quad 
H_Z =
\begin{pmatrix}
\text{---} & g_1       & \text{---} \\
\text{---} & g_2       & \text{---} \\
            & \vdots    &             \\
\text{---} & g_{r_Z}   & \text{---}
\end{pmatrix}.
\end{eqs}

At each time slice, we define label the physical qubits on the code block by numbers from 1 to $N$, and the ancilla qubits that will be used for $X$- and $Z$- syndrome extraction are labeled by numbers from $N+1$ to $N+r_X$ and $N+1$ to $N+r_Z$ respectively. Since we immediately measure the ancilla qubit after syndrome extraction, the labeling of ancilla qubits do not have conflicts. 

For individual $X$-syndrome extraction tensor at time $t$ and $Z$-syndrome extraction tensors at time $t+1$, they have gauge groups
\begin{eqs}\label{eq:SE_gauge}
    \mathcal{R}_{X-\mathrm{SE}}^{(t)}=\langle & X_{\mathrm{prep}}^{(t)} (\mathrm{e}_{N+j})X_{\mathrm{meas}}^{(t)}(\mathrm{e}_{N+j}) X_{i}^{(t)}(h_j),~~ Z_{i}^{(t)}(\mathrm{e}_v) Z_{o}^{(t)}(\mathrm{e}_v) \bigotimes_{\substack{h_f \cdot \mathrm{e}_v \neq 0}} Z_{\mathrm{meas}}^{(t)} (\mathrm{e}_{N+f}),\\
    &X_{i}^{(t)}(\mathrm{e}_v) X_{o}^{(t)}(\mathrm{e}_v),~~ Z_{\mathrm{prep}}^{(t)}(\mathrm{e}_{N+j}) Z_{\mathrm{meas}}^{(t)}(\mathrm{e}_{N+j}): \quad \forall j=1,..., r_X, ~  v=1,...,N\rangle,\\
    \mathcal{R}_{Z-\mathrm{SE}}^{(t+1)}=\langle &Z_{\mathrm{prep}}^{(t+1)}(\mathrm{e}_{N+l}) Z_{\mathrm{meas}}^{(t+1)} (\mathrm{e}_{N+l}) Z_{i}^{(t+1)} (g_l), \quad X_{i}^{(t+1)}(\mathrm{e}_v) X_{o}^{(t+1)} (\mathrm{e}_v) \bigotimes_{\substack{g_f\cdot \mathrm{e}_v\neq 0}} X_{\mathrm{meas}}^{(t+1)}(\mathrm{e}_{N+f}),\\
    &Z_{i}^{(t+1)}(\mathrm{e}_v) Z_{o}^{(t+1)} (\mathrm{e}_v),~~ X_{\mathrm{prep}}^{(t+1)} (\mathrm{e}_{N+l}) X_{\mathrm{meas}}^{(t+1)} (\mathrm{e}_{N+l}) : \quad \forall l=1,..., r_Z, \quad v=1,..., N \rangle,
\end{eqs}
where $\mathrm{e}_j$ is a one-hot binary vector that only have 1 on the $j$-th entry. Here $X_{\mathrm{prep}}^{(t)}(\mathrm{e}_{N+j})$ and $ X_{\mathrm{meas}}^{(t)}(\mathrm{e}_{N+j})$ represents the ancilla initialization and measurement when we measure a $X$-check given by row vector $h_j$ at time $t$, similar for $Z_{\mathrm{prep}}^{(t+1)}(\mathrm{e}_{N+l})$ and $ Z_{\mathrm{meas}}^{(t+1)} (\mathrm{e}_{N+l})$. The first terms in $\mathcal{R}_{X-\mathrm{SE}}^{(t)}$ and $\mathcal{R}_{Z-\mathrm{SE}}^{(t+1)}$ represent the equivalence relations between a measurement fault and the faults acting on the support of corresponding checks; the second term represents that the $Z$- and $X$-faults propagate through $X$- and $Z$-syndrome extraction respectively; the third terms correspond the fact that $X$- and $Z$-faults are free to pass through $X$-syndrome and $Z$-syndrome extraction gadgets respectively.

Then we consider the entire $\ell$-round syndrome extraction circuit, which has totally $2\ell$ timesteps, where we perform $X$-syndrome extraction in the odd timesteps and $Z$-syndrome extraction in the even timesteps. The input gauge group is
\begin{eqs}
    \mathcal{S}_\mathrm{in}=\langle X_o^{(t=0)}(h_j), Z_o^{(t=0)}(g_l), X_{\mathrm{prep}}^{(t=2m-1)} (\mathrm{e}_{N+j}), Z_{\mathrm{prep}}^{(t=2m)} (\mathrm{e}_{N+l}) :~~ \forall j=1,..., r_X, ~~ l=1,..., r_Z, ~~ m=1,..., \ell \rangle.
\end{eqs}

As shown in Fig.~\ref{fig:SE_gadget}, such circuit is formed by gluing the $X$- and $Z$-syndrome extraction tensors in an alternating pattern, which gives the contraction gauge group 
\begin{eqs}
    \mathcal{B}= \langle X_{o}^{(t)}(\mathrm{e}_{v}) X_{i}^{(t+1)}(\mathrm{e}_v), Z_{o}^{(t)}(\mathrm{e}_v) Z_{i}^{(t+1)}(\mathrm{e}_v) : \quad \forall v=1,..., N, ~~ t=0,..., 2\ell-1\rangle .
\end{eqs} 
The  measurement gauge group here is
\begin{eqs}
\mathcal{M}=\langle X_{\mathrm{meas}}^{(t=2m-1)}(\mathrm{e}_{N+j}) , Z_{\mathrm{meas}}^{(t=2m)}(\mathrm{e}_{N+l}) :\quad \forall j=1, ...., r_X, \quad l=1, ...., r_Z, \quad  m= 1,..., \ell\rangle.
\end{eqs}

After we impose $\mathcal{B},\mathcal{M}, \mathcal{S}_{\mathrm{in}}$, we can find the detectors can be written as
\begin{eqs}
    D_X&= \langle X_{\mathrm{prep}}^{(t-1)}(\mathrm{e}_{N+j}) X_{\mathrm{meas}}^{(t-1)}(\mathrm{e}_{N+j}) X_{\mathrm{prep}}^{(t+1)} (\mathrm{e}_{N+j}) X_{\mathrm{meas}}^{(t+1)}(\mathrm{e}_{N+j}) \mathbf{X}^{(t-1\rightarrow t+1)}(h_j)
    : \quad \forall  j= 1,..., r_X, \quad  t=2,\ldots,2\ell-2 \rangle, \\
    D_Z&= \langle Z_{\mathrm{prep}}^{(t)}(\mathrm{e}_{N+j}) Z_{\mathrm{meas}}^{(t)}(\mathrm{e}_{N+j}) Z_{\mathrm{prep}}^{(t+2)}(\mathrm{e}_{N+j}) Z_{\mathrm{meas}}^{(t+2)}(\mathrm{e}_{N+j}) \mathbf{Z}^{(t\rightarrow t+2)}(g_l) : \quad \forall l= 1,..., r_Z, \quad t=2,\ldots,2\ell-2 \rangle,
\end{eqs}
where 
\begin{eqs}
    \mathbf{X}^{(t-1 \rightarrow t+1)}(h_j)&:= X_{o}^{(t-1)} (h_j) X_{i}^{(t)}(h_j) X_{o}^{(t)}(h_j) X_{i}^{(t+1)} (h_j),\\
    \mathbf{Z}^{(t\rightarrow t+2)}(g_l)&:= Z_{o}^{(t)}(g_l) Z_{i}^{(t+1)}(g_l) Z_{o}^{(t+1)}(g_l) Z_{i}^{(t+2)} (g_l).
\end{eqs}
The operator
$\mathbf{X}^{(t-1\rightarrow t+1)}(h_j)$
detects $Z$-type data faults supported on $h_j$ between two consecutive rounds of $X$-syndrome extraction. Similarly, $\mathbf{Z}^{(t\rightarrow t+2)}(g_l)$ detects $X$-type data faults supported on $g_l$ between two consecutive rounds of $Z$-syndrome extraction. Note the detectors are contributed by the first, third terms of Eq.~\eqref{eq:SE_gauge} and $\mathcal{B}$, the second and fourth terms in Eq.~\eqref{eq:SE_gauge} will be completely removed by ancilla's single-qubit measurements. 

By assuming the initializations of ancilla qubits are perfect, such that we can remove all the ancilla initialization fault positions. The $X$- and $Z$-detectors are simplified to
\begin{eqs}\label{eq:foliation_detectors}
    D_X&= \langle M_{h_j}^{(t-1)} M_{h_j}^{(t+1)} \mathbf{X}^{(t-1\rightarrow t+1)}(h_j)
     : \quad \forall  j= 1,..., r_X, ~~ t=2,\ldots,2\ell-2 \rangle, \\
    D_Z&= \langle M_{g_l}^{(t)}  M_{g_l}^{(t+2)} \mathbf{Z}^{(t \rightarrow t+2)}(g_l)  : \quad  \forall l= 1,..., r_Z, ~~ t=2,\ldots,2\ell-2 \rangle,
\end{eqs}
which recovers the $X$- and $Z$-detectors of the foliation \cite{faultcomplex}. Here we use $M_{h_j}^{(t\pm 1)}$ denotes the measurement fault position for measuring $X$-check $X(h_j)$ at time $t\pm 1$, similarly $ M_{g_l}^{(t)}$ denotes the measurement fault position for measuring $Z$-check $Z(g_l)$ at time $t$. Because the repeated syndrome extraction under phenomenological error model (foliation) has hypergraph product structure, when we consider a $\ell$-round syndrome extraction gadget together with minimal-weight recovery, the bulk effective distance is $d_{\mathrm{bulk}}=\lfloor \frac{1}{2}(\min\{\ell,d\}-1) \rfloor$ meaning that every fault with weight up to $d_{\mathrm{bulk}}$ is correctable.

According to Lemma~\ref{lemma:bulk_detectable_distance},  every bulk fault configuration $F$ satisfying $|F|<2d_{\mathrm{bulk}}+1=\min \{\ell,d\}$
is either detectable or has trivial logical action. On the other hand, the fault distance of the full spacetime complex is $d$. Thus, fault tolerance in the bulk is limited by the smaller of the spatial code distance $d$ and the temporal fault distance $\ell$. This agrees with the usual expectation that at least $d$ rounds of syndrome extraction are required before minimal-weight recovery to make both bulk fault distance and spacetime fault distance achieve $d$, and hence to realize a fully fault-tolerant syndrome extraction gadget satisfying the ECCP condition stated in Theorem~\ref{thm:ECCP}. By contrast, the fault distance of the full spacetime complex is $d$. This distinction arises from the input and output boundary relations, which render both temporal boundaries smooth. More precisely, for each code stabilizer $S_i$, there is a spacetime stabilizer supported on $Q_{\mathrm{input}}$ and the corresponding measurement port in the first round of syndrome extraction, together with a final boundary stabilizer relation supported on $Q_{\mathrm{output}}$ and the corresponding measurement port in the last round of syndrome extraction. These boundary terms prevent the existance of temporal undetectable fault crossing the time. The full spacetime fault distance is therefore determined solely by the code distance. 

\section{Fault tolerance of logical gate implementations}

In this section, we analyze the correlated detectors that cross transversal CNOT and fold-transversal gates, and show that including these detectors naturally takes the error propagation into account and preserves the fault distance. Specifically, we show that if we insert a transversal gate inside a syndrome extraction gadget under phenomenological error model, the fault distance is unchanged. 

\subsection{Transversal CNOT gate}

In this section, we show that inserting a transversal CNOT gate into a repeated syndrome extraction gadget of a CSS code preserves the fault distance. More precisely, if the original syndrome extraction $\mathrm{SE}_a\circ\mathrm{SE}_b$ has fault distance $d$, then the modified fault-tolerant protocol $\mathrm{SE}_a\circ\mathrm{CNOT}^{\otimes N}\circ\mathrm{SE}_b
$ also has fault distance $d$. We insert the transversal CNOT gate between time $t$ and $t+1$ when $X$-syndrome extraction is implemented at time $t$ and $Z$-syndrome extraction is implemented at time $t+1$. 

Consider two copies of an $N$-qubit CSS code of distance $d$, with block 1 serving as the control and block 2 as the target. Both blocks follow the same syndrome extraction schedule. We insert transversal CNOT gate $U=\mathrm{CNOT}^{\otimes N}$ between times $t$ and $t+1$, with $X$-checks measured at times $t$ and $t+2$, and $Z$-checks measured at times $t-1$ and $t+1$. 

We work under a phenomenological noise model in which elementary faults are single-qubit Pauli data errors or measurement-outcome flips.  Since inserting a transversal CNOT gate only changes the detectors on the interface of transversal CNOT gate and the timelike undetectable fault cannot start and terminate at the interfaces, so we only need to prove the spacelike fault distance at the interfaces of transversal CNOT gate is $d$. We first consider data faults occurring immediately before and after the CNOT. Let $P_1$ and $P_2$ denote the corresponding Pauli operators. Propagating the earlier faults through the gate gives the equivalent output error $P'=P_2UP_1U^\dagger$, as illustrated Fig.~\ref{fig:CNOT_propagate}.

\begin{figure}[t]
    \centering
    \includegraphics[width=0.7\textwidth]{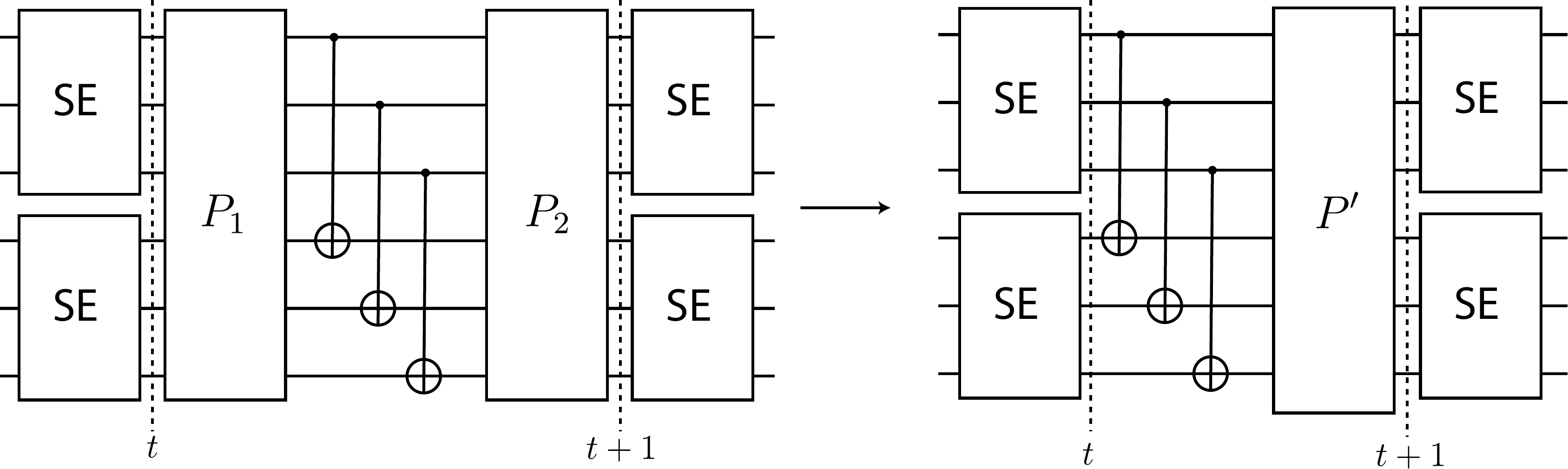}
    \caption{We consider a faulty transversal CNOT gate between two syndrome extraction (SE) gadget whose fault distances are both $d$. Suppose we have a weight-$r$ Pauli fault denoted by $P_1, P_2$ happened right before and after transversal CNOT gate, we can propagate all the fault to right after the transversal CNOT gate by applying the gauge generators of transversal CNOT tensor.  After we propagate all the fault to right after the transversal CNOT, the fault $P'$ can be written as $P'=X_i^{(t+1)}(p) X_i^{(t+1)}(p_1) Z_i^{(t+1)}(q) Z_i^{(t+1)}(q_1) \otimes X_i^{(t+1)}(p) X_i^{(t+1)}(p_2) Z_i^{(t+1)}(q) Z_i^{(t+1)}(q_2) $ acting on two code blocks, where we decompose the fault into correlated part $X_i^{(t+1)}(p) Z_i^{(t+1)}(q) \otimes X_i^{(t+1)}(p) Z_i^{(t+1)}(q)$ and uncorrelated part $X_i^{(t+1)}(p_1) Z_i^{(t+1)}(q_1) \otimes X_i^{(t+1)}(p_2) Z_i^{(t+1)}(q_2)$ .The weights of faults satisfy $|p|+|p_1|+|p_2|\leq r, |q|+|q_1|+|q_2|\leq r$.}
    \label{fig:CNOT_propagate} 
\end{figure}

Up to an overall phase, the \(X\)- and \(Z\)-components of \(P'\) can be written as
\begin{eqs}
P'_X &=
X_i^{(t+1)}(p+p_1)\otimes X_i^{(t+1)}(p+p_2),\\
P'_Z
&=
Z_i^{(t+1)}(q+q_1)\otimes Z_i^{(t+1)}(q+q_2).
\end{eqs}

Here, \(p\) and \(q\) describe the error components correlated between the two codeblocks, while \(p_1,p_2\) and \(q_1,q_2\) describe the uncorrelated components. These vectors can be chosen to have pairwise-disjoint supports:
\begin{eqs}
    \operatorname{supp}(p)\cap\operatorname{supp}(p_1)
=
\operatorname{supp}(p)\cap\operatorname{supp}(p_2)
=
\operatorname{supp}(p_1)\cap\operatorname{supp}(p_2)
=
\varnothing,
\end{eqs}
and similarly
\begin{eqs}
\operatorname{supp}(q)\cap\operatorname{supp}(q_1)
=
\operatorname{supp}(q)\cap\operatorname{supp}(q_2)
=
\operatorname{supp}(q_1)\cap\operatorname{supp}(q_2)
=
\varnothing.
\end{eqs}

\begin{figure}[t]
    \centering
    \includegraphics[width=0.6\textwidth]{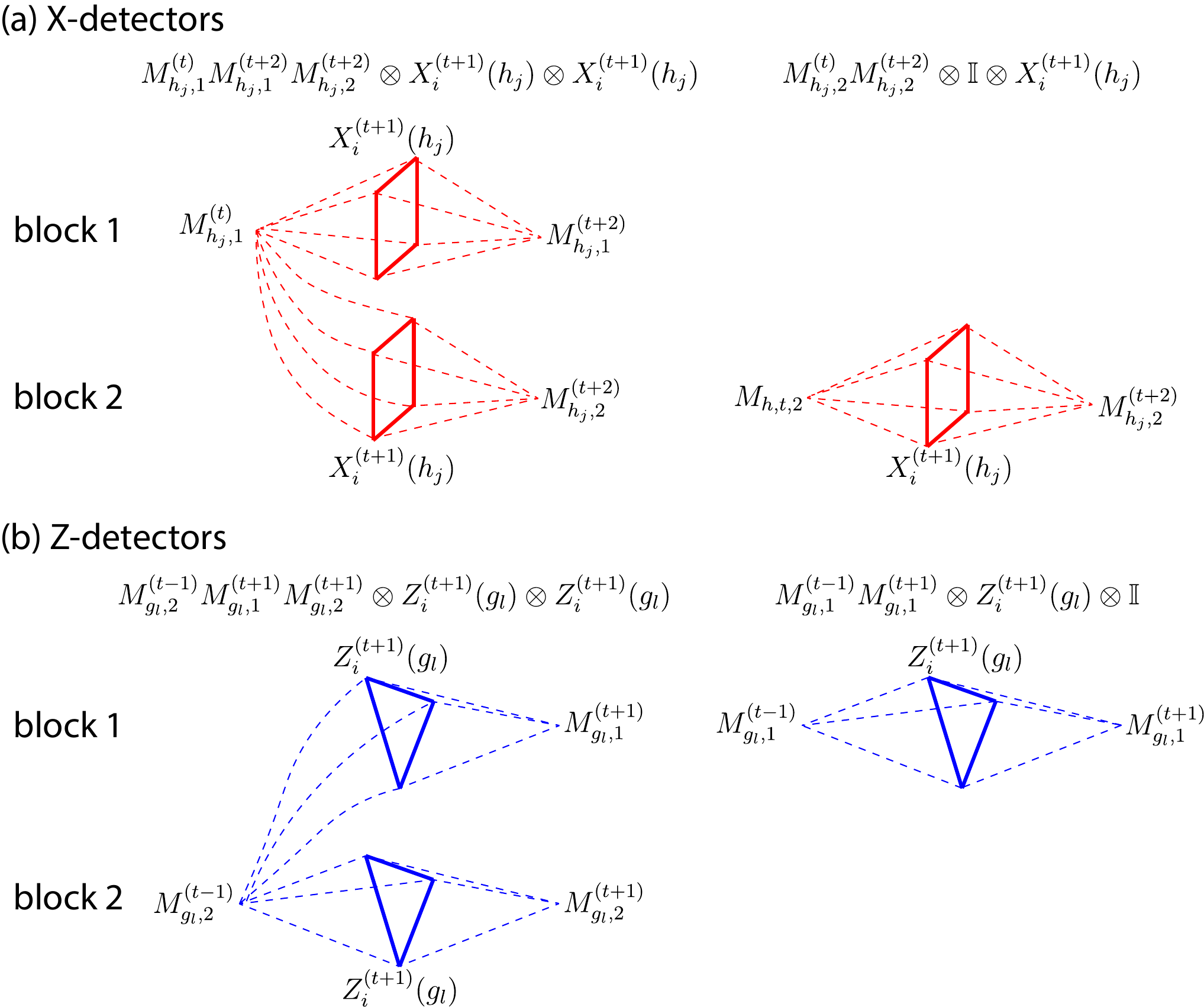}
\caption{$X$- and $Z$-detectors acrossing the transversal CNOT gate in Eq.~\eqref{eq:correlated_detectors}. }
    \label{fig:CNOT_detectors}
\end{figure}

Consequently, the numbers of affected locations after transversal CNOT gate satisfy
\begin{eqs}
    |p|+|p_1|+|p_2|\leq r,
\qquad
|q|+|q_1|+|q_2|\leq r.
\end{eqs}

Since all the faults right before $\mathrm{CNOT}^{\otimes N}$ are propagated to the input at time $t+1$, the $X$- and $Z$-detectors acrossing the transversal CNOT are
\begin{eqs}\label{eq:correlated_detectors}
    D_{X,\mathrm{CNOT}}=&\langle M_{h_j,1}^{(t)} M_{h_j,1}^{(t+2)}  M_{h_j,2}^{(t+2)}  \otimes X_i^{(t+1)}(h_j) \otimes  X_i^{(t+1)}(h_j), ~~M_{h_j,2}^{(t)} M_{h_j,2}^{(t+2)}\otimes  \mathbb{I} \otimes X_i^{(t+1)}(h_j) : \quad \forall j= 1,..., r_X \rangle,\\
    D_{Z,\mathrm{CNOT}}=&\langle M_{g_l,2}^{(t-1)} M_{g_l,1}^{(t+1)} M_{g_l,2}^{(t+1)}\otimes  Z_i^{(t+1)}(g_l)\otimes Z_i^{(t+1)}(g_l), ~~M_{g_l,1}^{(t-1)} M_{g_l,1}^{(t+1)} \otimes Z_i^{(t+1)}(g_l) \otimes \mathbb{I}  : \quad \forall l= 1,..., r_Z \rangle.
\end{eqs}
Here, \(M_{h_j,1}^{(t)},M_{h_j,2}^{(t)}\) denotes a measurement fault on the ancilla used to measure the \(X\)-type check $X(h_j)$ on first and second code block at time \(t\). The notation \(M_{g_l,1}^{(t\pm 1)}, M_{g_l,2}^{(t\pm 1)}\) is defined analogously for the measurement of the \(Z\)-type check \(Z(g_l)\) on the first and second block at time $t\pm 1$.

We omit explicit data-fault locations immediately before the transversal CNOT gate because all such faults have been propagated to the input at time $t+1$. The resulting detectors couple the two code blocks, as shown in Fig.~\(\ref{fig:CNOT_detectors}\). Away from the CNOT layer, the detectors remain blockwise independent.

Jointly decoding the two code blocks using these cross-block detectors accounts for the propagation of Pauli errors through the transversal CNOT. This procedure, commonly called correlated decoding, can reduce the logical error rate and increase the threshold relative to decoding the two blocks independently using the detectors in Eq.~\eqref{eq:foliation_detectors}~\cite{correlated_decoding,zhou2025low}. Importantly, correlated decoding modifies only the detector matrices supplied to the classical decoder; the underlying syndrome extraction circuit remains unchanged from that used for uncorrelated decoding.

For example, consider an error \(X(u)\otimes X(u+v)\) occurring immediately after the transversal CNOT, where$X(u+v)=\overline X$ is a logical operator and $X(u)$ has a nontrivial syndrome. If the two blocks are decoded independently, the error on the second block has a trivial syndrome and is interpreted as a logical error. When we decode using the cross-block detectors, however, this error can be detected by $D_{Z,\mathrm{CNOT}}$ after taking the error propagation across transversal CNOT gate into account.

We first consider the $X$-component of the propagated error $X_i^{(t+1)}(e_1) \otimes X_i^{(t+1)}(e_2)$ where $e_1=p+p_1, e_2= p+p_2$. The detector $M_{g_l,1}^{(t-1)}M_{g_l,1}^{(t+1)}\otimes  Z_i^{(t+1)}(g_l) \otimes \mathbb{I}
$ measures the syndrome of $e_1$, whereas
$M_{g_l,2}^{(t-1)}M_{g_l,1}^{(t+1)}M_{g_l,2}^{(t+1)}
\otimes Z_i^{(t+1)}(g_l)\otimes Z_i^{(t+1)}(g_l)$
 measures the syndrome of $e_1+e_2$. If the error is undetected, then
\begin{eqs}
    H_Z e_1=0,
\qquad
H_Z(e_1+e_2)=0,
\end{eqs}
hence both $e_1$ and $e_2$ have trivial $Z$-syndrome. Since $|e_1|,|e_2|<d$, neither $X(e_1)$ nor $X(e_2)$ can be a nontrivial logical operator. Thus, every $X$-fault of weight less than $d$ is either detectable or logically trivial.

Similarly, we write the \(Z\)-component of the propagated fault as $Z_i^{(t+1)}(f_1)\otimes Z_i^{(t+1)}(f_2)$ with  $f_1=q+q_1$ and $ f_2=q+q_2$.
The detectors $M_{h_j,2}^{(t)}M_{h_j,2}^{(t+2)}\otimes \mathbb{I} \otimes  X_i^{(t+1)}(h_j)
$ and
$
M_{h_j,1}^{(t)}M_{h_j,1}^{(t+2)}M_{h_j,2}^{(t+2)}
\otimes X_i^{(t+1)}(h_j)\otimes X_i^{(t+1)}(h_j)$
measure the syndromes of $f_2$ and $f_1+f_2$, respectively. Therefore, if the error is undetected, both $f_1$ and $f_2$ have trivial $X$-syndrome. Since $|f_1|,|f_2|<d$, neither component can be a nontrivial logical operator. Hence, the spacelike fault distance between times $t$ and $t+1$ is still $d$. The spacelike distances at all other time slices remain unchanged, since their detector structure is unaffected by the insertion of the CNOT. Consequently, the spacelike fault distance is $d_{\mathrm{space}}(\mathrm{SE}_a\circ\mathrm{CNOT}^{\otimes N}\circ\mathrm{SE}_b
)=d$. 

The timelike fault distance after we inserting the transversal CNOT gate is also unchanged, because an undetectable timelike fault chain cannot begin or terminate at the CNOT interface.
Consequently, inserting the transversal CNOT does not create a shorter time-like logical fault. Provided that all initial and final temporal-boundary detectors are retained, the timelike fault distance is inherited from the original syndrome extraction gadgets: $d_{\mathrm{time}}(\mathrm{SE}_a\circ\mathrm{CNOT}^{\otimes N}\circ\mathrm{SE}_b
)
=
d_{\mathrm{time}}(\mathrm{SE}_a\circ \mathrm{SE}_b
)$.

The fault distance of the spacetime circuit is determined by both its space-like and time-like fault distances. For the transversal-CNOT protocol, we have shown that the spacelike and timelike fault distances are both $d$. Given the fault distance is governed by the smaller one of spacelike and timelike fault distances, we have
\begin{eqs}
    d_\st (\mathrm{SE}_a\circ\mathrm{CNOT}^{\otimes N}\circ\mathrm{SE}_b
)=
\min\left\{
d_{\mathrm{space}}(\mathrm{SE}_a\circ\mathrm{CNOT}^{\otimes N}\circ\mathrm{SE}_b
),
d_{\mathrm{time}}(\mathrm{SE}_a\circ\mathrm{CNOT}^{\otimes N}\circ\mathrm{SE}_b
)
\right\} 
=d.
\end{eqs}

\subsection{Fold-transversal gates}

Fold-transversal gates can provide logical Clifford operations beyond those available through strictly transversal implementations~\cite{moussa_fold,breuckmann2024fold}. There are two main classes: Hadamard-type and phase-type gates. A Hadamard-type gate consists of single-qubit Hadamard gates together with a permutation of the physical qubits. 

Here, we focus on the phase-type fold-transversal gate \(S_\tau\), because the Hadamard-type fold-transversal gate does not propagate faults under the phenomenological error model. We show that inserting \(S_\tau\) into a repeated syndrome extraction gadget of a CSS code preserves its fault distance. More precisely, if 
$\mathrm{SE}_a\circ\mathrm{SE}_b $ has fault distance $d$, then the modified protocol $\mathrm{SE}_a\circ S_\tau\circ\mathrm{SE}_b$ has fault distance $d$, provided that the correlated detectors crossing $S_\tau$ are included. Similar to the setting of analysis of transversal CNOT gate, we insert $S_\tau $ between time $t$ and $t+1$. Similarly, the $X$-checks are measured at times $t$ and $t+2$, the $Z$-checks are measured at time $t-1$ and $t+1$.

Consider a $[\![2N,k,d]\!]$ CSS code with an permutation of qubits
\begin{eqs}
    \tau: \mathbb{F}_2^{2N} \rightarrow   \mathbb{F}_2^{2N},\qquad \tau^2=\mathrm{id}.
\end{eqs}

The permutation \(\tau\) is a \(ZX\)-duality if it exchanges the \(X\)- and \(Z\)-check spaces:
\begin{eqs}
\tau \left(\operatorname{row}(H_X)\right)
=\operatorname{row}(H_Z),
\qquad
\tau \left(\operatorname{row}(H_Z)\right)
=
\operatorname{row}(H_X).
\end{eqs}
In other word, for every $X$-check given by row vector $h_j$, it is related to a $Z$-check given by $\tau(h_j) \in \mathrm{row}(H_Z)$ and vice versa. For convenience, we pick the stabilizer generators such that their check matrices satisfying $\tau(H_X)=H_Z$.

Suppose that \(\tau\) has \(2w\) fixed points and \(n\) two-element orbits. The total number of physical qubits is then $2N=2n+2w=2(n+w)$.
For a phase-type gate to exist, the fixed-point set must admit a bipartition $Q_{\mathrm{fix}}=A\sqcup B$, such that every \(X\)-type check has equal overlap with \(A\) and \(B\). In addition, every \(X\)-type check must contain an even number of complete two-element orbits of \(\tau\). Under these conditions, the phase-type fold-transversal gate is

\begin{eqs}S_\tau
=
\left(\bigotimes_{j\in A}S_j\right)
\left(\bigotimes_{j\in B}S_j^\dagger\right)
\left(
\bigotimes_{\substack{i<\tau(i)}}
\operatorname{CZ}_{i,\tau(i)}
\right).
\end{eqs}

The detectors acrossing a phase-type fold-transversal gate can be written as 
\begin{eqs}\label{eq:fold_transversal_detectors}
    D_{S_\tau}=&\langle M_{h_j}^{(t)} M_{h_j}^{(t+2)} M_{\tau(h_j)}^{(t+1)} \otimes X_i^{(t+1)}(h_j) Z_i^{(t+1)}(\tau(h_j)), ~~M_{g_l}^{(t-1)} M_{g_l}^{(t+1)} \otimes  Z_i^{(t+1)}(g_l) : \quad \forall j=1,...,r_X,~~ l=1,...,r_Z \rangle.
\end{eqs}
The first term gives  mixed-type detectors. It arise because conjugating an \(X\)-type check through \(S_\tau\) produces both \(X\)- and \(Z\)-components. The second term gives \(Z\)-type detectors, since \(S_\tau\) leaves \(Z\)-type operators unchanged. Consequently, the detector group crossing \(S_\tau\) is no longer CSS.

By dividing the system into two folded sectors, a weight-$r$ fault can be written as a sequential action of fault before and after the fold-transversal gate. We write the fault happen before $S_\tau$ as  $P_1=X_o^{(t)}(p_1) Z_o^{(t)}(p_2)$, and fault  fault happens after $S_\tau$ as $P_2=X_i^{(t+1)}(p_3) Z_i^{(t+1)}(p_4)$. We use the gauge checks to push all the fault to after the fold-transversal $S_\tau$ gate, which yields an equivalent fault 
\begin{eqs}
    P_2 S_\tau P_1 S_\tau^\dagger= X_i^{(t+1)}(x)Z_i^{(t+1)}(z),
\end{eqs}
where 
\begin{eqs}
    x= p_1 + p_3,\quad z= p_2 + p_4 +\tau (p_1).
\end{eqs}

Here we assume that each elementary data fault is a single-qubit $X$ or $Z$ error, and the weight counts these events additively. Therefore, given $|p_1|+|p_2|+ |p_3| + |p_4|\leq r$, we have $|x|, |z|\leq r$. 

After setting the measurement fault variables to the identity, the relevant detector operators are
\begin{eqs}
    X_i^{(t+1)}(h_j)\otimes Z_i^{(t+1)}(\tau(h_j)),
\quad
Z_i^{(t+1)}(g_l),
\end{eqs}
for $j=1,...,r_X$ and $l=1,...,r_Z$.

If $X_i^{(t+1)}(x)Z_i^{(t+1)}(z)$ is undetected, commutation with detectors gives
\begin{eqs}
H_X z+ H_Z x=0,\quad 
H_Z x=0.
\end{eqs}
Above equations are equivalent to 
\begin{eqs}
    H_Z x=0,\quad H_X z=0.
\end{eqs} 
Hence, both $X_i^{(t+1)}(x)$ and $Z_i^{(t+1)}(z)$  have trivial stabilizer syndrome.
If $r<d$, then $|x|,|z|< d$. Given the input code with check matrices $H_X, H_Z$ has distance $d$, none of these components can be a nontrivial logical operator. Each component is therefore either the identity or a stabilizer, so $X_i^{(t+1)}(x)Z_i^{(t+1)}(z)$ acts trivially on the encoded information. we conclude spacelike fault distance at the interface of phase-type fold-transversal gate is still $d$.

For measurement fault, the argument is similar to the one used in transversal gates, because an undetectable timelike fault chain cannot begin or terminate on the interface of fold-transversal gate. So we have $d_{\mathrm{time}}(\mathrm{SE}_a \circ S_\tau \circ \mathrm{SE}_b)
=
d_{\mathrm{time}}(\mathrm{SE}_a  \circ \mathrm{SE}_b)$.

Therefore, the fault distance is
\begin{eqs}
d_\st (\mathrm{SE}_a  \circ S_\tau \circ \mathrm{SE}_b)
=
\min\left\{
d_{\mathrm{space}}(\mathrm{SE}_a \circ S_\tau \circ \mathrm{SE}_b),
d_{\mathrm{time}}(\mathrm{SE}_a \circ S_\tau \circ \mathrm{SE}_b)
\right\}
=d,
\end{eqs}
which is unchanged after inserting the fold-transversal gate $S_\tau$.

\section{Fault tolerance of various error correction schemes}

\subsection{Steane error-correction gadget}
\label{sec:steane}

Consider an $[\![n,k,d]\!]$ CSS stabilizer code. The $n$ physical qubits of the data block are labeled $1,\ldots,n$. The $n$ physical qubits of ancilla block $a$ and ancilla block $b$ are each labeled $1,\ldots,n$, respectively. We denote the one-hot vector selecting the $q$th qubit of any block by $e_q\in\F^n$.

The independent CSS stabilizer generators are written as
\begin{equation}
\St_{\rm CSS}=\left\langle X(h_j),Z(g_l):
\quad \forall j=1,\ldots,r_X,\quad l=1,\ldots,r_Z \right\rangle.
\label{eq:cssgens}
\end{equation}
with
\begin{equation}
C_X=\Span\{h_j\}_{j=1}^{r_X},\quad
C_Z=\Span\{g_k\}_{l=1}^{r_Z},\quad
C_X\subseteq C_Z^\perp,\quad r_X+r_Z=n-k.
\label{eq:cssspaces}
\end{equation}
Here $X(v)=\prod_{q=1}^nX_q^{v_q}$ and $Z(v)=\prod_{q=1}^nZ_q^{v_q}$ for $v\in\F^n$. Choose pure logical representatives $X(\ell_{X,\mu})$ and $Z(\ell_{Z,\mu})$, $\mu=1,\ldots,k$, such that
\begin{equation}
\begin{aligned}
C_Z^\perp&=C_X\oplus\Span\{\ell_{X,\mu}\}_{\mu=1}^k,
& C_X^\perp&=C_Z\oplus\Span\{\ell_{Z,\mu}\}_{\mu=1}^k,&
\ell_{X,\mu}\cdot\ell_{Z,\nu}=\delta_{\mu\nu}.&&
\end{aligned}
\label{eq:csslogicals}
\end{equation}

The gadget has three time steps. At $t=0$, the data block is in an unknown encoded logical state and block $a$ is prepared in $\ket{\overline{+^k}}$. At $t=1$, a transversal $\mathrm{CNOT}_{d\to a}$ is applied; immediately on the same output legs, block $a$ is measured qubitwise in the $Z$ basis. Also at $t=1$, block $b$ is prepared in $\ket{\overline{0^k}}$. At $t=2$, a transversal $\mathrm{CNOT}_{b\to d}$ is applied and block $b$ is measured immediately on its output legs in the $X$ basis.

\begin{figure}[t]
    \centering
    \includegraphics[width=0.45\textwidth]{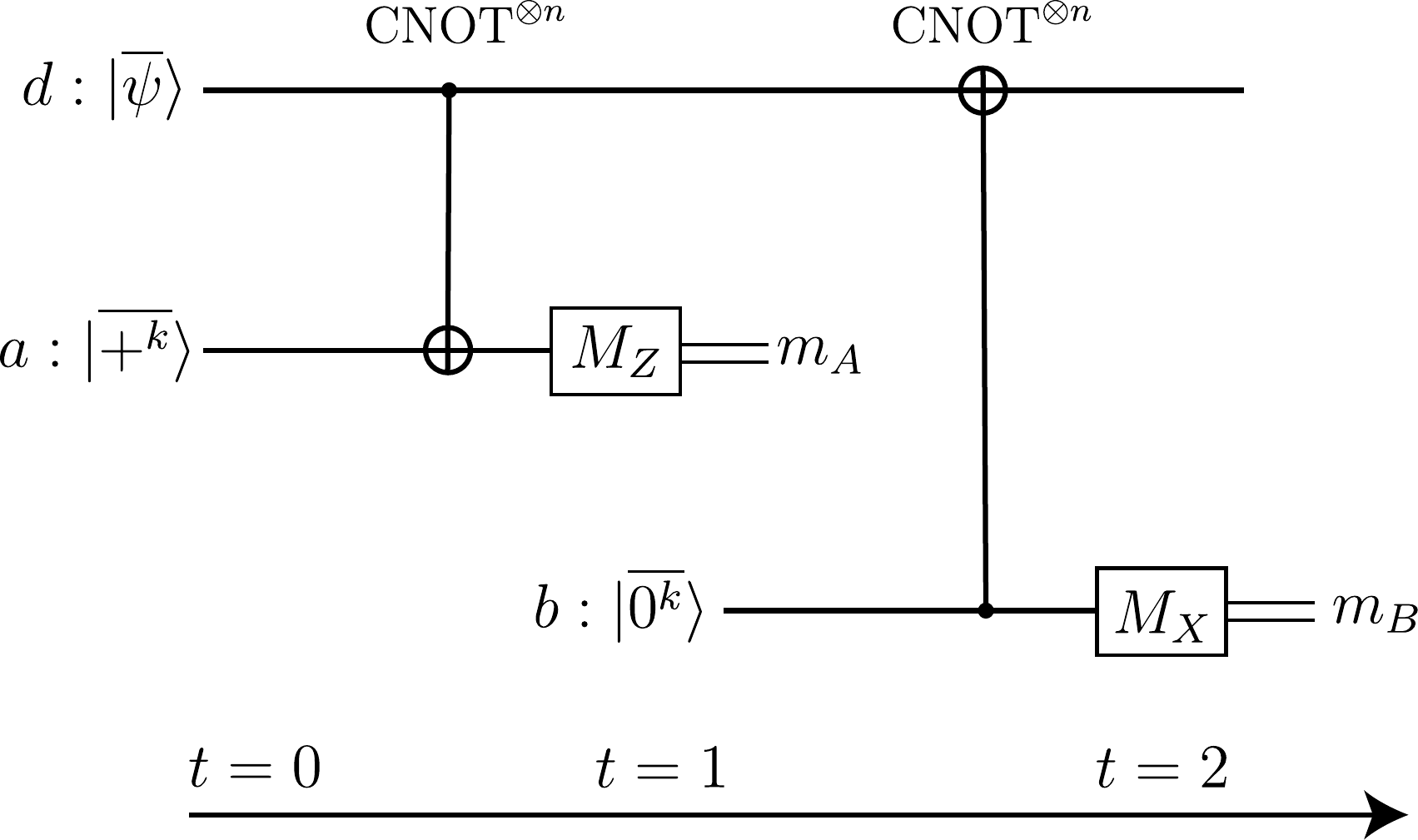}
    \caption{General Steane extraction gadget. Each line denotes an $n$-qubit block, each controlled symbol denotes $n$ parallel physical CNOTs, and the measurement is performed directly on the output leg of the preceding gate layer. Block $b$ is prepared on the $t=1$ output qubits.}
\label{fig:steane}
\end{figure}

For all calculations, we take Pauli operators modulo scalar phases. An $N$-qubit Pauli has binary vector $e=(x\mid z)\in\F^{2N}$ with symplectic form
\begin{equation}
\lambda(e,f)=x\cdot z'+z\cdot x',\qquad f=(x'\mid z').
\label{eq:symplectic}
\end{equation}
For a Pauli subspace $\G$, its centralizer is $\G^\perp$ and its stabilizer subspace is $\St=\G\cap\G^\perp$. Bare and dressed logical spaces are $\G^\perp/\St$ and $\St^\perp/\G$, respectively~\cite{Poulin2005}. The spacetime distance is
\begin{equation}
 d_{\st}=\min_{\mathbf{e}\in\St^\perp \backslash \G}|\mathbf{e}|
 =\min_{[\mathbf{e}]\in\St^\perp/\G \backslash \{0\}}
       \min_{\mathbf{g}\in\G}|\mathbf{e+g}|.
\label{eq:distance}
\end{equation}

In summary, the spacetime code for the Steane EC gadget contains the blocks
\begin{equation}
\begin{array}{c|l}
0&d_o^{(0)},\ a_o^{(0)},\\
1&d_i^{(1)},\ a_i^{(1)},\ d_o^{(1)},\ a_o^{(1)},\ b_o^{(1)},\\
2&d_i^{(2)},\ b_i^{(2)},\ d_o^{(2)},\ b_o^{(2)}.
\end{array}
\label{eq:steaneblocks}
\end{equation}
Every listed block contains $n$ qubits, so the spacetime code acts on $N_{\rm S}=11n$ physical spacetime qubits.

The following 
\label{subsec:steanegauge}
Equations~\eqref{eq:steaneprep}--\eqref{eq:steanemeasureA} generate the gauge of the spacetime code $\mathcal{G}=\langle \mathcal{S}_{\mathrm{in}},\mathcal{B},\mathcal{R},\mathcal{M}\rangle$.
The preparation gauge set $\mathcal{S}_{\mathrm{in}}$ includes
\begin{align}
\mathcal{S}_{\mathrm{in}}=\langle &X_{d_o}^{(0)}(h_j),\quad Z_{d_o}^{(0)}(g_L),\quad X_{a_o}^{(0)}(h_j),\quad Z_{a_o}^{(0)}(g_k),\quad
 X_{a_o}^{(0)}(\ell_{X,\mu}),\quad X_{b_o}^{(1)}(h_j),\quad Z_{b_o}^{(1)}(g_l),\quad
 Z_{b_o}^{(1)}(\ell_{Z,\mu}):\\
 &\forall j=1,\ldots,r_X,~l=1,\ldots,r_Z, ~\mu=1,\ldots,k\rangle.
\label{eq:steaneprep}
\end{align}
The first two groups fix the input code space but do not specify the logical state; the remaining groups of generators prepare the ancilla blocks on $\ket{\overline{+^k}}_A$ and $\ket{\overline{0^k}}_B$, respectively.

The bond gauges are
\begin{align}
\mathcal{B}=\langle&X_{d_o}^{(0)}(e_q)X_{d_i}^{(1)}(e_q),\quad
 Z_{d_o}^{(0)}(e_q)Z_{d_i}^{(1)}(e_q),\quad X_{a_o}^{(0)}(e_q)X_{a_i}^{(1)}(e_q),\quad
 Z_{a_o}^{(0)}(e_q)Z_{a_i}^{(1)}(e_q),\notag\\
&X_{d_o}^{(1)}(e_q)X_{d_i}^{(2)}(e_q),\quad
 Z_{d_o}^{(1)}(e_q)Z_{d_i}^{(2)}(e_q),\quad X_{b_o}^{(1)}(e_q)X_{b_i}^{(2)}(e_q),\quad
 Z_{b_o}^{(1)}(e_q)Z_{b_i}^{(2)}(e_q):~ \forall q=1,\ldots,n\rangle.
\label{eq:steanewires}
\end{align}
The gate gauge generators are 

\begin{align}
\mathcal{R}=\langle& X_{d_i}^{(1)}(e_q)X_{d_o}^{(1)}(e_q)X_{a_o}^{(1)}(e_q),\quad Z_{d_i}^{(1)}(e_q)Z_{d_o}^{(1)}(e_q),
\quad X_{a_i}^{(1)}(e_q)X_{a_o}^{(1)}(e_q),\quad Z_{a_i}^{(1)}(e_q)Z_{d_o}^{(1)}(e_q)Z_{a_o}^{(1)}(e_q),\\
&X_{b_i}^{(2)}(e_q)X_{b_o}^{(2)}(e_q)X_{d_o}^{(2)}(e_q),\quad Z_{b_i}^{(2)}(e_q)Z_{b_o}^{(2)}(e_q),\quad X_{d_i}^{(2)}(e_q)X_{d_o}^{(2)}(e_q),\quad Z_{d_i}^{(2)}(e_q)Z_{b_o}^{(2)}(e_q)Z_{d_o}^{(2)}(e_q):~\forall q=1,\ldots,n\rangle.
\label{eq:steanegateA}
\end{align}
in which the first transversal $\mathrm{CNOT}_{d\to a}$ and the second transversal $\mathrm{CNOT}_{b\to d}$ correspond to the first and second line of the gauge generators, respectively.

The measurement gauges are 
\begin{equation}
\mathcal{M}=\langle Z_{a_o}^{(1)}(e_q),\quad X_{b_o}^{(2)}(e_q): \quad \forall q=1,\ldots,n\rangle.
\label{eq:steanemeasureA}
\end{equation}
They correspond to the $Z$-basis measurement of block $a$ and the $X$-basis measurement of block $b$.

We can calculate the generating set for the stabilizer center as follows,
\begin{equation}
\begin{aligned}
S^X_{\co,j}&=
X_{d_o}^{(0)}(h_j)X_{d_i}^{(1)}(h_j) X_{d_o}^{(1)}(h_j)X_{d_i}^{(2)}(h_j)X_{d_o}^{(2)}(h_j)X_{a_o}^{(0)}(h_j)X_{a_i}^{(1)}(h_j),\quad j=1,\ldots,r_X,\\
S^Z_{\co,l}&=
Z_{d_o}^{(0)}(g_l)Z_{d_i}^{(1)}(g_l) Z_{d_o}^{(1)}(g_l)Z_{d_i}^{(2)}(g_l)Z_{d_o}^{(2)}(g_l)Z_{b_o}^{(1)}(g_l)Z_{b_i}^{(2)}(g_l),\quad l=1,\ldots,r_Z,\\
S^A_{\de,l}&=Z_{d_o}^{(0)}(g_l)Z_{d_i}^{(1)}(g_l)
Z_{a_o}^{(0)}(g_l)Z_{a_i}^{(1)}(g_l)Z_{a_o}^{(1)}(g_l),\quad l=1,\ldots,r_Z,\\
S^B_{\de,j}&=
X_{d_o}^{(0)}(h_j)X_{d_i}^{(1)}(h_j)  X_{d_o}^{(1)}(h_j)X_{d_i}^{(2)}(h_j)X_{a_o}^{(0)}(h_j)X_{a_i}^{(1)}(h_j)X_{b_o}^{(1)}(h_j)X_{b_i}^{(2)}(h_j)
X_{b_o}^{(2)}(h_j),\quad j=1,\ldots,r_X.
\end{aligned}
\label{eq:steanestabilizers}
\end{equation}

Due to the existence of detectors, the first two groups in Eq.~\eqref{eq:steaneprep} are redundant in the generated gauge subspace. Thus, even if the input state is unrestricted and the corresponding stabilizers are omitted, the algebraic structure of the spacetime code is unchanged.

In summary, the spacetime code associated with the Steane EC gadget is a subsystem code with parameters
\begin{equation}
N=11n,\quad \dim\G=20n,\quad
\dim\St=2(n-k),\quad K=k,\quad R=9n+k.
\label{eq:steaneranks}
\end{equation}

\begin{proposition}[fault distance of the Steane EC gadget]
The spacetime subsystem code associated with the Steane error correction gadget is a subsystem code with 
\begin{equation}
[\![N,K,R,d_{\mathrm{st}}]\!]=[\![11n,k,9n+k,d]\!].
\label{eq:steaneparameters}
\end{equation}
\end{proposition}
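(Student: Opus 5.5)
The proof has four parts. First I count $N$. Then I obtain $\dim\G$, $\dim\St$, $K$ and $R$ from rank calculations. I bound $d_\st\le d$ by embedding an input logical operator. Finally I show $d_\st\ge d$ by pushing every undetectable fault to a canonical boundary form.

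\textbf{Counting.} The count $N=11n$ is immediate from Eq.~\eqref{eq:steaneblocks}. For $\dim\G$ I would count generators and subtract dependencies:
\begin{itemize}
\item[] $\mathcal{B}$ contributes $8n$ generators.
\item[] $\mathcal{R}$ contributes $8n$, two transversal CNOT layers with $4n$ each.
\item[] $\mathcal{M}$ contributes $2n$.
\item[] $\mathcal{S}_{\mathrm{in}}$ contributes $4(n-k)+2k=4n-2k$.
\end{itemize}
This totals $22n-2k$. The dependencies are exactly the detectors, i.e.\ the elements of $\St$ generated both by $\mathcal{S}_{\mathrm{in}}\cup\mathcal{R}$ and by $\mathcal{B}\cup\mathcal{M}$, as in the proof of Lemma~\ref{lemma:spacetimelogical}. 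Here I would check that all $2(n-k)$ generators in Eq.~\eqref{eq:steanestabilizers} arise as such relations, which gives $\dim\G=20n$.

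\textbf{Stabilizers and logicals.} For $\dim\St=2(n-k)$, I would solve $\G\cap\G^\perp$ using Lemma~\ref{lemma:spacetimestabilizer}. A stabilizer is fixed by its input-boundary Pauli $P$ on $d_o^{(0)},a_o^{(0)},b_o^{(1)}$. It must propagate along the bonds and CNOT gauges and commute with $\mathcal{S}_{\mathrm{in}}$ and $\mathcal{M}$. The measurement constraints, namely $Z$-type on the $a$ outputs and $X$-type on the $b$ outputs, force $P$ into the span of the four families listed. This is a direct linear-algebra check using $C_X\subseteq C_Z^\perp$. Then $K=k$ follows from Lemma~\ref{lemma:spacetimelogical}: all stabilizers are tubes or detectors, and the input code contributes $k$ unconstrained qubits. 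The gauge count is $R=\tfrac12(\dim\G-\dim\St)=\tfrac12(20n-2n+2k)=9n+k$.

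\textbf{Upper bound.} I would place a weight-$d$ logical $X(\ell_{X,\mu}')$ or $Z(\ell_{Z,\mu}')$ of the input code on $d_o^{(0)}$. By Lemma~\ref{lemma:spacetimelogical} it lies in $\St^\perp\setminus\G$, so $d_\st\le d$.

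\textbf{Lower bound, the main obstacle.} I would show that every $\mathbf e\in\St^\perp\setminus\G$ has weight at least $d$, working separately on the $X$ and $Z$ parts since the code is CSS.

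For an $X$-type fault, Lemma~\ref{lemma:outputequiv} lets me push each elementary fault forward to $Q_{\mathrm{output}}$. A single $X$ fault on the $d$ wire before or after the first CNOT becomes, at worst, $X$ on $d_o^{(2)}$ together with a copy on the measured $a_o^{(1)}$. A fault on $b$ becomes $X$ on $b_o^{(2)}$, which lies in $\mathcal{M}$, and on $d_o^{(2)}$. By Corollary~\ref{coro:singlemeasreduc}, the undetectable part restricted to the measured ports can be dropped, since its $X$ part on $a_o^{(1)}$ must commute with the $Z$ measurements and hence vanish or be absorbed. The result is an operator $X(v)$ on $d_o^{(2)}$ with $|v|\le|\mathbf e_X|$, because each elementary fault maps to at most one data-qubit location in the transversal structure.

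Undetectability with respect to $S^A_{\de}$ and $S^Z_{\co}$ forces $H_Z v=0$. Nontriviality in $\St^\perp/\G$ forces $v\notin C_X$, so $|v|\ge d$. The $Z$ case is symmetric, using $S^B_{\de}$ and $S^X_{\co}$ with propagation through the $b\to d$ CNOT.

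The delicate point I expect is the weight non-increase under propagation. Transversality ensures that an elementary fault's image has at most one data-block qubit, but I must make sure that the copies landing on measured ancilla legs are genuinely removable modulo $\mathcal{M}$ without changing the logical class. I would verify this directly using the $X_{a_o}^{(0)}(\ell_{X,\mu})$ and $Z_{b_o}^{(1)}(\ell_{Z,\mu})$ preparation gauges, which are exactly what prevent ancilla logical faults from being counted as data logicals.
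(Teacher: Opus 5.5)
Your distance argument follows the paper's route. Propagating every fault onto the surviving block $d_o^{(2)}$, with transversality keeping each elementary fault at its own coordinate, computes exactly the paper's invariant $\rho_{\rm S}(E)=(q(E)\mid p(E))$. The bound $|\rho_{\rm S}(E)|\le|E|$ together with the code distance then finishes it.

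\textbf{The gap is in your derivation of $\dim\G=20n$}, which reaches the right number only because two mistakes cancel.
\begin{itemize}
\item $\mathcal{S}_{\mathrm{in}}$ has $3n-k$ generators, not $4n-2k$: there are $n-k$ on the data block, $(n-k)+k=n$ on block $a$, and $n$ on block $b$. This is also the value your later appeal to Lemma~\ref{lemma:spacetimelogical} with $n_{\mathrm{tot}}=3n$ requires.
\item The check you defer fails. You plan to show that all $2(n-k)$ generators of Eq.~\eqref{eq:steanestabilizers} are relations between $\langle\mathcal{S}_{\mathrm{in}},\mathcal{R}\rangle$ and $\langle\mathcal{B},\mathcal{M}\rangle$. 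But $S^X_{\co,j}$ and $S^Z_{\co,l}$ carry $X_{d_o}^{(2)}(h_j)$ and $Z_{d_o}^{(2)}(g_l)$ on the unmeasured output block, where $\langle\mathcal{B},\mathcal{M}\rangle$ has no support. They are stabilizer tubes, not detectors.
\end{itemize}
The correct count goes as follows. $\langle\mathcal{S}_{\mathrm{in}},\mathcal{R}\rangle$ has $11n-k$ independent generators and $\langle\mathcal{B},\mathcal{M}\rangle$ has $10n$; in each case the two generating sets are individually independent with disjoint supports. Both groups are abelian, so their intersection lies in $\St$. That intersection is exactly $\langle S^A_{\de,l},S^B_{\de,j}\rangle$, of dimension $n-k$, giving $\dim\G=21n-k-(n-k)=20n$.

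Relatedly, when you solve for $\St$ you must impose membership in $\G$, not only commutation with $\mathcal{S}_{\mathrm{in}}$ and $\mathcal{M}$. Otherwise you obtain the $2n$-dimensional $\G^\perp$, which includes the bare logicals.

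\textbf{In the lower bound, your reason for discarding the residue on measured ports is false.} You argue that its $X$ part on $a_o^{(1)}$ must commute with the $Z$ measurements; the same shortcut appears in the proof of Corollary~\ref{coro:singlemeasreduc}. Zero syndrome only gives $H_Z u=0$ for that residue $u$, via $S^A_{\de,l}$. For instance, $X_{a_o}^{(1)}(\ell_{X,\mu})$ is undetectable yet anticommutes with individual $Z$ measurements.

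The repair is the one you anticipate, but it works modulo $\G$, not modulo $\mathcal{M}$. Since $u\in C_Z^\perp=C_X\oplus\Span\{\ell_{X,\mu}\}$, the gate gauge $X_{a_i}^{(1)}X_{a_o}^{(1)}$ and the $a$-bond give $X_{a_o}^{(1)}(u)\sim_{\G}X_{a_o}^{(0)}(u)\in\mathcal{S}_{\mathrm{in}}$. Symmetrically, $S^B_{\de,j}$ forces the $Z$ residue $u'$ on $b_o^{(2)}$ into $C_X^\perp$, and $Z_{b_o}^{(2)}(u')\sim_{\G}Z_{b_o}^{(1)}(u')\in\mathcal{S}_{\mathrm{in}}$.

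With these corrections your proposal goes through.
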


\begin{proof}
For a spacetime error $E$, let $x_{d_o}^{(0)}(E)$, $z_{d_o}^{(0)}(E)$, etc., denote its binary $X$ and $Z$ support vectors on the indicated block. Define
\begin{align}
q(E)={}&x_{d_o}^{(0)}+x_{d_i}^{(1)}+x_{d_o}^{(1)}
+x_{d_i}^{(2)}+x_{d_o}^{(2)}+x_{b_o}^{(1)}+x_{b_i}^{(2)},\notag\\
p(E)={}&z_{d_o}^{(0)}+z_{d_i}^{(1)}+z_{d_o}^{(1)}
+z_{d_i}^{(2)}+z_{d_o}^{(2)}+z_{a_o}^{(0)}+z_{a_i}^{(1)},\notag\\
\rho_{\rm S}(E)={}&(q(E)\mid p(E)),
\label{eq:steanerho}
\end{align}
where the explicit argument $(E)$ is suppressed on the right-hand side. Also define
\begin{align}
\eta_A(E)={}&x_{d_o}^{(0)}+x_{d_i}^{(1)}+x_{a_o}^{(0)}
+x_{a_i}^{(1)}+x_{a_o}^{(1)},\notag\\
\eta_B(E)={}&z_{d_o}^{(0)}+z_{d_i}^{(1)}+z_{d_o}^{(1)}
+z_{d_i}^{(2)}+z_{a_o}^{(0)}+z_{a_i}^{(1)}+z_{b_o}^{(1)}+z_{b_i}^{(2)}+z_{b_o}^{(2)}.
\label{eq:steaneeta}
\end{align}
The complete stabilizer syndrome is
\begin{equation}
\begin{aligned}
\lambda(E,S^X_{\co,j})&=h_j\cdot p(E),
&\lambda(E,S^Z_{\co,l})&=g_l\cdot q(E),\\[-1pt]
\lambda(E,S^A_{\de,l})&=g_l\cdot\eta_A(E),
&\lambda(E,S^B_{\de,j})&=h_j\cdot\eta_B(E).
\end{aligned}
\label{eq:steanesyndrome}
\end{equation}
Hence, a zero spacetime stabilizer syndrome implies
\begin{equation}
q(E)\in C_Z^\perp,\qquad p(E)\in C_X^\perp.
\label{eq:steanenormalizer}
\end{equation}
Let
\begin{equation}
W_{\rm CSS}=\{(x\mid z):x\in C_X,\ z\in C_Z\}
\label{eq:cssW}
\end{equation}
be the original CSS stabilizer subspace. Multiplication by any gauge generator changes $\rho_{\rm S}(E)$ by an element of $W_{\rm CSS}$; therefore $[\rho_{\rm S}(E)]_{W_{\rm CSS}}$ is a well-defined function of the gauge class.

For a zero-syndrome $E$, the final-data Pauli $P_{d_o}^{(2)}(\rho_{\rm S}(E))$ has the same full stabilizer syndrome and the same commutation bits with the bare logical basis as $E$, according to Corollary \ref{coro:singlemeasreduc}. Their difference therefore lies in $\G_{\rm S}$, and
\begin{equation}
E\sim_{\G}P_{d_o}^{(2)}(\rho_{\rm S}(E)),\qquad
\St^\perp/\G\simeq W_{\rm CSS}^\perp/W_{\rm CSS}.
\label{eq:steanequotient}
\end{equation}

For a coset $c=v+W_{\rm CSS}$, let
\begin{equation}
d_c=\min_{w\in W_{\rm CSS}} |v+w|.
\label{eq:classdistance}
\end{equation}
Each physical coordinate of $\rho_{\rm S}(E)$ receives contributions only from spacetime qubits with the same coordinate in different blocks. Hence, a nonzero coordinate of $\rho_{\rm S}(E)$ requires at least one erroneous spacetime qubit at that coordinate, and
\begin{equation}
|\rho_{\rm S}(E)|\leq |E|.
\label{eq:steanecontraction}
\end{equation}
If $E$ belongs to the dressed class $c$, gauge invariance gives $\rho_{\rm S}(E)\in c$ and therefore $|E| \geq d_c$. Conversely, a minimum-weight $v_c\in c$ placed as $P_{d_o}^{(2)}(v_c)$ has weight $d_c$ and represents the same dressed class. Thus the class minimum is exactly $d_c$. Minimizing over $c\neq W_{\rm CSS}$ gives $d_{\mathrm{st}}=d$. 

\end{proof}

\subsection{Knill error-correction gadget}
\label{sec:knill}

The Knill gadget combines transversal physical Bell measurements with an encoded Bell state ~\cite{Knill2004}. The qubit labeling convention is the same as the previous section.

Let the underlying code be a general $[\![n,k,d]\!]$ stabilizer code with
\begin{equation}
\St_{\rm code}=\left\langle P(s_l)
:\quad \forall l=1,\ldots,r \right\rangle,
\label{eq:knillcodegens}
\end{equation}
where $r=n-k$, $s_l\in\F^{2n}$ are independent symplectic Pauli labels, and
\begin{equation}
W=\Span\{s_l\}_{l=1}^{r}.
\label{eq:knillW}
\end{equation}
Choose a symplectic logical basis
\begin{equation}
W^\perp=W\oplus\Span\{\ell_{X,\mu},\ell_{Z,\mu}\}_{\mu=1}^{k},
\qquad \lambda(\ell_{X,\mu},\ell_{Z,\nu})=\delta_{\mu\nu}.
\label{eq:knillbasis}
\end{equation}

At $t=0$, the data block contains the state to be corrected and blocks $a$ and $b$ are prepared in the encoded Bell state
\begin{equation}
\ket{\overline\Phi_k}_{ab}=2^{-k/2}\sum_{x\in\F^k}
\ket{\overline x}_a\ket{\overline x}_b.
\label{eq:bellstate}
\end{equation}
At $t=1$, a transversal physical $\mathrm{CNOT}_{d\to a}$ is applied. The data output legs are measured in $X$ basis, the output legs of codeblock $a$ are measured in $Z$ basis, and the codeblock $b$ survives as the output block $b_o^{(1)}$.

\begin{figure}[t]
    \centering
    \includegraphics[width=0.45\linewidth]{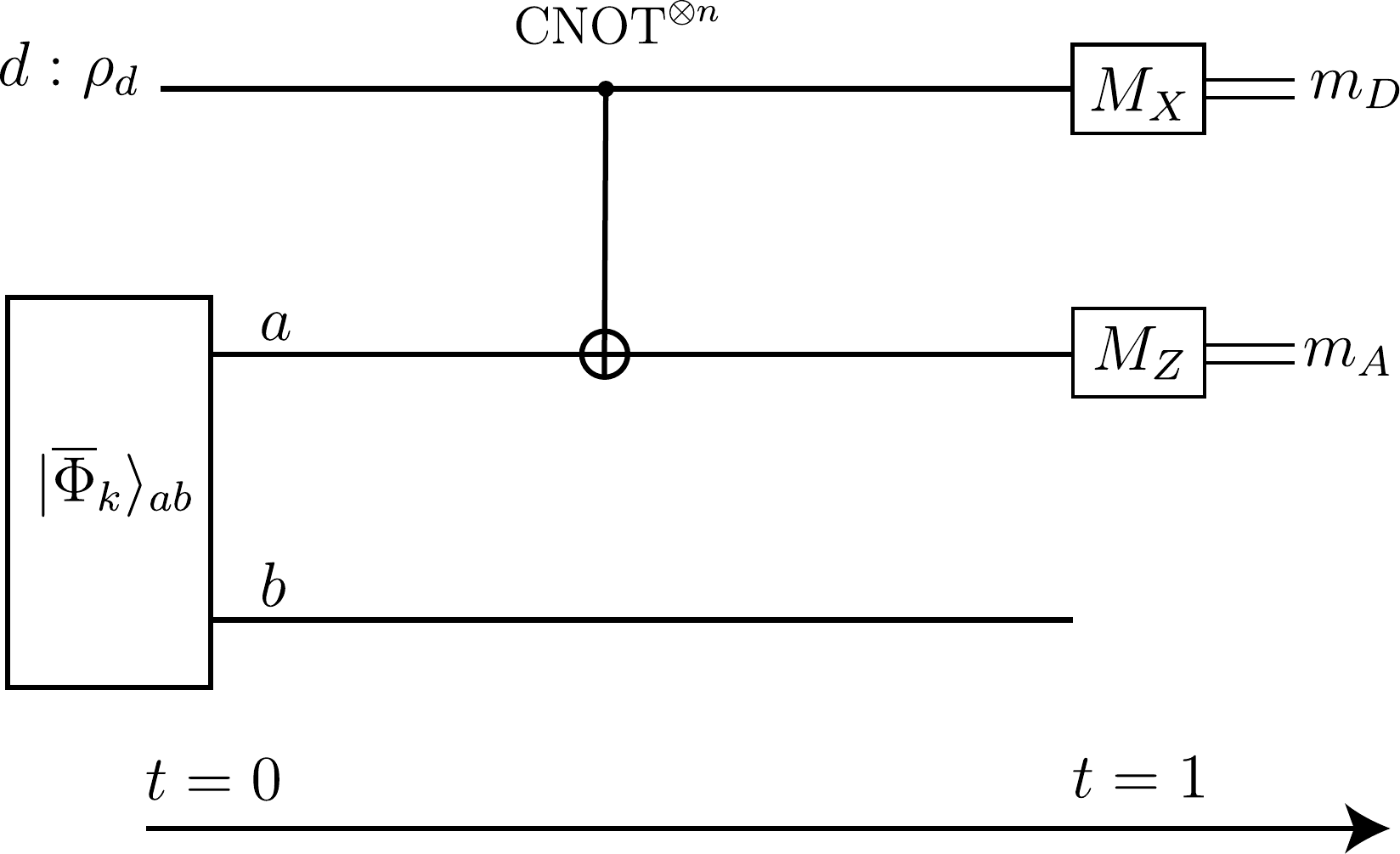}
    \caption{General Knill extraction gadget in the compact timing convention. The data and $a$ blocks are destructively measured directly on the output legs of the transversal CNOT layer; block $b$ survives as the quantum output.}
\label{fig:knill}
\end{figure}

The spacetime code representation of the Knill EC gadget contains
\begin{equation}
\begin{array}{c|l}
0&d_o^{(0)},\ a_o^{(0)},\ b_o^{(0)},\\
1&d_i^{(1)},\ a_i^{(1)},\ b_i^{(1)},\ d_o^{(1)},\ a_o^{(1)},\ b_o^{(1)}.
\end{array}
\label{eq:knillblocks}
\end{equation}
Hence $N=9n$.

The gauge group of the associated spacetime subsystem code, $\mathcal{G}=\langle \mathcal{S}_{\mathrm{in}},\mathcal{B},\mathcal{R},\mathcal{M}\rangle$, are generated by Equations~\eqref{eq:knillprep}--\eqref{eq:knillmeasure} . The preparation gauge set $\mathcal{S}_{\mathrm{in}}$ includes

\begin{align}
& P_{d_o}^{(0)}(s_l),\qquad &&l=1,\ldots,r,\notag\\
&P_{a_o}^{(0)}(s_l),\qquad P_{b_o}^{(0)}(s_l),
&&l=1,\ldots,r,\notag\\
&P_{a_o}^{(0)}(\ell_{X,\mu})P_{b_o}^{(0)}(\ell_{X,\mu}),
&&\mu=1,\ldots,k,\notag\\
&P_{a_o}^{(0)}(\ell_{Z,\mu})P_{b_o}^{(0)}(\ell_{Z,\mu}),
&&\mu=1,\ldots,k.
\label{eq:knillprep}
\end{align}
The first line is the input state stabilizers and the next three lines specifies the ancilla state in the logical Bell state.

The bond gauge group is
\begin{align}
\mathcal{B}=\langle &X_{d_o}^{(0)}(e_q)X_{d_i}^{(1)}(e_q),~
 Z_{d_o}^{(0)}(e_q)Z_{d_i}^{(1)}(e_q),~ X_{a_o}^{(0)}(e_q)X_{a_i}^{(1)}(e_q),~
 Z_{a_o}^{(0)}(e_q)Z_{a_i}^{(1)}(e_q),\\
 &X_{b_o}^{(0)}(e_q)X_{b_i}^{(1)}(e_q),~
 Z_{b_o}^{(0)}(e_q)Z_{b_i}^{(1)}(e_q):~\forall q=1,\ldots,n\rangle.
\label{eq:knillwires}
\end{align}

The gate gauges are contributed by the transversal gate $\mathrm{CNOT}_{d\to a}$  and the idling gate on block $b$. 
\begin{align}
\mathcal{R}=\langle
&X_{d_i}^{(1)}(e_q)X_{d_o}^{(1)}(e_q)X_{a_o}^{(1)}(e_q),&
&Z_{d_i}^{(1)}(e_q)Z_{d_o}^{(1)}(e_q),
&&X_{a_i}^{(1)}(e_q)X_{a_o}^{(1)}(e_q),&
&Z_{a_i}^{(1)}(e_q)Z_{d_o}^{(1)}(e_q)Z_{a_o}^{(1)}(e_q),\\
& X_{b_i}^{(1)}(e_q)X_{b_o}^{(1)}(e_q),&&
Z_{b_i}^{(1)}(e_q)Z_{b_o}^{(1)}(e_q):&&\forall q=1,\ldots,n\rangle.
\label{eq:knillgate}
\end{align}

The final measurements are imposed directly on the gate output legs:
\begin{equation}
\mathcal{M}=\langle X_{d_o}^{(1)}(e_q),\quad Z_{a_o}^{(1)}(e_q): \quad \forall q=1,\ldots,n\rangle.
\label{eq:knillmeasure}
\end{equation}

A complete independent stabilizer generating set can be obtained by solving the centralizer equation, 
\begin{equation}
\St
=\left\langle S_{\de,l},S_{B,l}: \forall l=1,\ldots,r\right\rangle,
\label{eq:knillstabilizers}
\end{equation}
with 
\begin{align}
S_{\de,l}=&(-1)^{x_l\cdot z_l}P_{d_o}^{(0)}(s_l)P_{a_o}^{(0)}(s_l)
P_{d_i}^{(1)}(s_l)P_{a_i}^{(1)}(s_l)
X_{d_o}^{(1)}(x_l)Z_{a_o}^{(1)}(z_l),\notag\\
S_{B,l}={}&P_{b_o}^{(0)}(s_l)P_{b_i}^{(1)}(s_l)P_{b_o}^{(1)}(s_l).
\label{eq:knillexplicitstabs}
\end{align}
Similar to the Steane gadget case, the stabilizers specifying the input state are redundant, and the algebra below also applies when the input is unrestricted. If the data input is an encoded logical state in the same signed code sector as the Bell resource, the corresponding fault-free measurement parity is $+1$. For an unrestricted input, the same relation measures an unknown input syndrome rather than defining an unconditional detector.

In summary, the spacetime subsystem code parameters are
\begin{equation}
N=9n,\quad \dim\G =16n,\quad
\dim\St=2(n-k),\quad K=k,\quad R=7n+k.
\label{eq:knillranks}
\end{equation}

\begin{proposition}[fault distance of the Knill EC gadget]
The spacetime subsystem code associated with the Knill error correction gadget is a subsystem code with 
\begin{equation}
[\![N,K,R,d_{\st}]\!]=[\![9n,k,7n+k,d]\!].
\label{eq:knillparameters}
\end{equation}
\end{proposition}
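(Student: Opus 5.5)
The plan is to mirror the proof of the Steane proposition and build a gauge-invariant linear ``readout'' map $\rho_{\rm K}:\F^{2N}\to\F^{2n}$ that compresses a spacetime error onto a single $n$-qubit register. For a spacetime error $E$, let $e_{d_o}^{(0)}(E)\in\F^{2n}$, etc., denote its symplectic restriction to each block. Because block $b$ is the surviving output, I would define $\rho_{\rm K}(E)$ as the sum of the symplectic vectors on the $b$-wire, $e_{b_o}^{(0)}+e_{b_i}^{(1)}+e_{b_o}^{(1)}$. To this I would add the contributions of the data and $a$ wires that the Bell-measurement teleportation transfers onto $b$: the $X$ part of $e_{d_o}^{(0)}+e_{d_i}^{(1)}+e_{a_o}^{(0)}+e_{a_i}^{(1)}+e_{a_o}^{(1)}$ and the $Z$ part of $e_{d_o}^{(0)}+e_{d_i}^{(1)}+e_{d_o}^{(1)}$. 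The exact combination is fixed by requiring that every generator in Eqs.~\eqref{eq:knillprep}--\eqref{eq:knillmeasure} change $\rho_{\rm K}$ only by an element of $W$.

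\textbf{Step 1 (gauge invariance).} Check generator by generator.
\begin{itemize}
\item Bond and idle gauges cancel pairwise, since they act on consecutive coordinates of the same qubit index.
\item The CNOT gauges map consistently with the chosen $X$/$Z$ sums.
\item The measurement gauges $X_{d_o}^{(1)}(e_q)$ and $Z_{a_o}^{(1)}(e_q)$ must be absent from $\rho_{\rm K}$.
\item $P(s_l)$ on any block contributes an element of $W$.
\item The Bell generators $P_{a_o}^{(0)}(\ell)P_{b_o}^{(0)}(\ell)$ must cancel. This is the delicate point: the $a$-part of $\ell$ must reach $\rho_{\rm K}$ with the same vector $\ell$ as the $b$-part.
\end{itemize}
If the naive choice fails for the $Z$ component of $\ell$ on block $a$, I would add $z_{a_o}^{(0)}+z_{a_i}^{(1)}$, since $Z_a$ propagates onto $d$ through the CNOT's $Z$-gauge. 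I would then recheck the CNOT generator $Z_{a_i}Z_{d_o}Z_{a_o}$ against this modified definition.

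\textbf{Step 2 (syndrome).} Compute the syndrome from Eq.~\eqref{eq:knillexplicitstabs}. The stabilizers $S_{B,l}$ should give $\lambda(s_l,\rho_{\rm K}(E))$ up to detector contributions. The detectors $S_{\de,l}$ give a second linear function of $E$. A zero-syndrome $E$ therefore has $\rho_{\rm K}(E)\in W^\perp$.

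\textbf{Step 3 (class identification).} By Corollary~\ref{coro:singlemeasreduc}, the measurement gauges are single-qubit, so any undetectable $E$ is gauge equivalent to an operator on $Q_{\rm output}\setminus\Lambda_{\rm meas}=b_o^{(1)}$. That operator must equal $P_{b_o}^{(1)}(\rho_{\rm K}(E))$ modulo $W$, since $\rho_{\rm K}$ is gauge invariant and is the identity on $b_o^{(1)}$. This gives $\St^\perp/\G\simeq W^\perp/W$, so $K=k$. The counts $\dim\G=16n$, $\dim\St=2(n-k)$ and $R=7n+k$ follow by adding generators and subtracting the $r$ redundancies coming from the detectors $S_{\de,l}$, as in Lemma~\ref{lemma:spacetimelogical}.

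\textbf{Step 4 (weight contraction).} Each coordinate $q$ of $\rho_{\rm K}(E)$ is a sum of contributions from spacetime qubits with the same index $q$ on the various blocks, so $|\rho_{\rm K}(E)|\le|E|$. A nontrivial dressed class therefore needs weight at least $d$. A minimum-weight logical placed on $b_o^{(1)}$ attains $d$, giving $d_{\st}=d$.

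The main obstacle I expect is Step 1 for a general non-CSS stabilizer code. The CNOT mixes $X$ on $d$ into $a$ and $Z$ on $a$ into $d$, while the Bell pairs tie $a$ to $b$ for both $X$ and $Z$ logicals. Making $\rho_{\rm K}$ simultaneously invariant under the CNOT gauges and the Bell gauges requires splitting each $s_l$ and $\ell$ into its $X$ and $Z$ parts. The sign and phase bookkeeping, such as the $(-1)^{x_l\cdot z_l}$ factor, can be ignored since we work modulo phases.
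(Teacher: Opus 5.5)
Your approach is the paper's. Once you make the correction you anticipate (adding $z_{a_o}^{(0)}+z_{a_i}^{(1)}$), your $\rho_{\rm K}$ is exactly the paper's $a(E)+b(E)$, and the remaining steps match it: the syndromes $\lambda(a(E),s_l)$ and $\lambda(b(E),s_l)$, the reduction to $b_o^{(1)}$ via Corollary~\ref{coro:singlemeasreduc}, and coordinatewise weight contraction. The correction is mandatory, not optional: the naive map already fails on $Z_{a_i}^{(1)}(e_q)Z_{d_o}^{(1)}(e_q)Z_{a_o}^{(1)}(e_q)$ and on $P_{a_o}^{(0)}(\ell)P_{b_o}^{(0)}(\ell)$. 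With it, every unmeasured $d$- and $a$-leg enters with its full symplectic vector and only the measured legs are projected ($\pi_Z$ on $d_o^{(1)}$, $\pi_X$ on $a_o^{(1)}$), so the $X$/$Z$ splitting of $s_l$ and $\ell$ you feared is never needed.
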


\begin{proof}
Let $e_{d_o}^{(0)}$, $e_{a_o}^{(0)}$, etc., denote the $2n$-bit symplectic Pauli label of an error $E$ on the indicated block. Define
\begin{equation}
\pi_X(x\mid z)=(x\mid0),\qquad
\pi_Z(x\mid z)=(0\mid z),
\label{eq:projections}
\end{equation}
and
\begin{align}
a(E)={}&e_{d_o}^{(0)}+e_{a_o}^{(0)}+e_{d_i}^{(1)}+e_{a_i}^{(1)}
+\pi_Z(e_{d_o}^{(1)})+\pi_X(e_{a_o}^{(1)}),\notag\\
b(E)={}&e_{b_o}^{(0)}+e_{b_i}^{(1)}+e_{b_o}^{(1)},\notag\\
\rho_{\rm K}(E)={}&a(E)+b(E).
\label{eq:knillrho}
\end{align}
The complete stabilizer syndrome is
\begin{equation}
\lambda(E,S_{\de,l})=\lambda(a(E),s_l),\qquad
\lambda(E,S_{B,l})=\lambda(b(E),s_l).
\label{eq:knillsyndrome}
\end{equation}
Thus $E\in\St^\perp$ implies $a(E),b(E)\in W^\perp$ and therefore $\rho_{\rm K}(E)\in W^\perp$. Multiplication by a gauge generator changes $\rho_{\rm K}(E)$ only by an element of $W$, so $[\rho_{\rm K}(E)]_W$ is gauge invariant. Moreover, according to Corollary \ref{coro:singlemeasreduc}, 
\begin{equation}
E\sim_{\G} P_{b_o}^{(1)}(\rho_{\rm K}(E)),\qquad
\St^\perp/\G\simeq W^\perp/W.
\label{eq:knillquotient}
\end{equation}

Every physical coordinate of $\rho_{\rm K}(E)$ depends only on errors at the same coordinate in the three physical blocks and their spacetime copies. The projections in Eq.~\eqref{eq:projections} can remove support but cannot create support at a different coordinate, so
\begin{equation}
|\rho_{\rm K}(E)| \leq |E|.
\label{eq:knillcontraction}
\end{equation}
Fix a nontrivial logical coset $c=v+W$ and let $d_c=\min_{w\in W} |v+w|$. Every zero-syndrome spacetime representative of this dressed class maps into $c$, so Eq.~\eqref{eq:knillcontraction} gives $|E| \geq d_c$. Conversely, a minimum-weight representative $v_c\in c$ placed as $P_{b_o}^{(1)}(v_c)$ is a zero-syndrome representative of weight $d_c$. Hence the class minimum is exactly $d_c$, and minimizing over nontrivial cosets gives $d_{\st}=d$. 
\end{proof}

\subsection{Single-shot error correction}

Single-shot error correction \cite{bombin2015single-shot,campbell2019theory,kubica2022single} provides an alternative to the conventional strategy of detecting measurement errors through repeated syndrome extraction. In a repeated protocol, measurement faults are detected by comparing syndrome outcomes obtained at different times. A single-shot protocol instead introduces redundancy among the stabilizer or gauge checks measured within the same round. The resulting linear relations among the measurement outcomes, called metachecks, allow measurement faults to be detected without repeating the entire syndrome extraction circuit. For simplicity, we consider the single-shot syndrome extraction of a stabilizer code with $(\gamma,f)$-soundness \cite{campbell2019theory} with a two-step recovery where we first use metachecks to correct measurement faults and then use the corrected stabilizer syndrome to perform recovery, but the analysis can be directly generalized to single-shot subsystem codes.

\begin{definition}[$(\gamma,f)$-soundness \cite{campbell2019theory}]
Let $\sigma$ denote the syndrome map associated with a set of
measured stabilizer checks. The check set is said to be
$(\gamma,f)$-sound if, for every Pauli error $E$ whose syndrome satisfies
$|\sigma(E)|=x<t$, there exists a Pauli error $E^\star$ with the
same syndrome such that
\begin{equation}
    \sigma(E^\star)=\sigma(E),
    \qquad
    |E^\star|\leq f(x).
\end{equation}
In other words, every sufficiently low-weight valid syndrome admits
a Pauli representative whose weight is upper bounded by a monotonically increasing soundness
function $f$.
\end{definition}

Let $H_X$ and $H_Z$ denote the measured $X$- and $Z$-check matrices with code distance being $d$. The corresponding metacheck matrices $M_X$ and $M_Z$ satisfy
\begin{eqs}
    M_X H_X=0,\quad M_Z H_Z=0.
\end{eqs}
For example, the ideal \(X\)-check syndrome $s_X=H_Xe_Z$ and $Z$-check syndrome $s_Z=H_Z e_X$ satisfy
\begin{eqs}
    M_X s_X=0,\quad M_Z s_Z=0.
\end{eqs}
If the observed $X$- and $Z$-syndromes is affected by a
measurement fault , the syndrome becomes
\begin{eqs}
    s_X'= s_X+ m_X, \quad s_Z'= s_Z + m_Z.
\end{eqs}
They metasyndromes are therefore
\begin{eqs}
    M_X s_X'= M_X m_X, \quad M_Z s_Z'= M_Z m_Z.
\end{eqs}
Thus, $M_X s_X'$ detects inconsistencies caused by faults in the \(X\)-check measurement outcomes. The same argument applies to \(M_Z\) and the \(Z\)-check measurement outcomes. In this way, single-shot error correction replaces temporal redundancy across repeated rounds with algebraic redundancy among checks measured within a single round.

Under the phenomenological error model, the single-shot protocol is represented by the syndrome extraction gadget shown in Fig. \(\ref{fig:SE_gadget}\), but it contains only one \(X\)-syndrome extraction gadget and one \(Z\)-syndrome extraction gadget. Because there are no consecutive measurement rounds whose outcomes can be compared, the bulk detectors are given by the metachecks \(M_X\) and \(M_Z\). The bulk detector matrix therefore has the block-diagonal form

\begin{eqs}
    D= \begin{pmatrix}
        M_X & 0\\
        0& M_Z
    \end{pmatrix}.
\end{eqs}
Its detecting region consists of the measurement fault coordinates appearing in these two metacheck matrices.

Now consider measurement faults \(m_X\) and $m_Z$. Let \(\widetilde{m_X}\) and $\widetilde{m_Z}$ be the minimum-weight estimate of the measurement faults $m_X, m_Z$ satisfying
\begin{eqs}
    M_X \widetilde{m_X}= M_X m_X,\quad M_Z \widetilde{m_Z} =M_Z m_Z.
\end{eqs}
After we correct the measurement fault, the residual measurement fault in the $X$ and $Z$ sectors become $m_X+\widetilde{m_X}$ and $m_Z+ \widetilde{m_Z}$. They satisfy 
\begin{eqs}
    M_X (m_X+ \widetilde{m_X})=0, \quad M_Z (m_Z+ \widetilde{m_Z})=0.
\end{eqs}
Because $\widetilde{m_X}$, $\widetilde{m_Z}$ are minimum-weight recoveries of $M_X m_X$ and $M_Z m_Z$ respectively, they satisfy $|\widetilde{m_X}| \leq |m_X|$ and $|\widetilde{m_Z}| \leq |m_Z|$.
The residual measurement fault satisfies
\begin{eqs}
    |m_X + \widetilde{m_X}| \leq 2|m_X|,\quad |m_Z + \widetilde{m_Z}| \leq 2|m_Z|.
\end{eqs}

Therefore for measuremnt faults $m_X,m_Z$ satisfying $2|m_X|<\gamma$ and $2|m_X| < \gamma$, the residual measurement faults $m_X+ \widetilde{m}_X$ and $m_Z+ \widetilde{m}_Z$ are valid stabilizer syndromes. The soundness gives the minimum-weight recovery $e_Z$ and $e_X$ of Pauli syndrome $m_X+ \widetilde{m_X}$ and $m_Z+ \widetilde{m_Z}$ respectively, they satisfy
\begin{eqs}
    &|e_Z| \leq f(|m_X+ \widetilde{m_X}|) \leq f(2|m_X|),\\
    &|e_X|  \leq f(|m_Z+ \widetilde{m_Z}|) \leq f(2|m_Z|).
\end{eqs}

Define the generalized inverse of the soundness function \(f\) by
\begin{eqs}
    f_{\leq }^{-1}(t):= \max \left \{x \in \mathbb{N}: f(x) \leq t \right\}.
\end{eqs}
Then the largest weight of residual measurement faults $m_X +\widetilde{m_X}$ and $m_Z +\widetilde{m_Z}$ whose minimum-weight Pauli recovery is guaranteed to be correctable is then
\begin{eqs}
    x^\star=\min \left\{\gamma-1, f_{\leq}^{-1}(t)\right\}.
\end{eqs}

Since the residual measurement fault can have weight as large as twice the original measurement fault weight, the bulk effective distance is 
\begin{eqs}
    d_{\mathrm{bulk}}= \left\lfloor \frac{x^\star}{2}  \right\rfloor .
\end{eqs}

In the presence of both data errors \(e_Z, e_X\) and measurement errors $m_X, m_Z$, we may use the two-step recovery in the proof of Theorem~\ref{thm:ECCP} and obtain the sufficient conditions of having ECCP property are
\begin{eqs}\label{eq:single_shot_correctness_condition}
\begin{cases}
    |e_Z| + f(2|m_X|) \leq t,\\
    |e_X| + f(2|m_Z|) \leq t,\\
    2|m_X|\leq \gamma-1,\\
    2|m_Z|\leq \gamma-1,
\end{cases}
\end{eqs}
because the output residual fault is written as $\mathbf{0} \oplus \mathbf{0} \oplus \varepsilon_{\mathrm{out}}(e_Z, m_X)$ and $\mathbf{0} \oplus \mathbf{0} \oplus \varepsilon_{\mathrm{out}}(e_X, m_Z)$ with $|\varepsilon_{\mathrm{out}}(e_Z, m_X)| \leq |e_Z|+|f(2|m_X|)|$ and $|\varepsilon_{\mathrm{out}}(e_X, m_Z)| \leq |e_X|+|f(2|m_Z|)|$ in the $X$ and $Z$ sectors. If conditions in Eq.~\eqref{eq:single_shot_correctness_condition} are satisfied, then the output residual faults can be detected by stabilizer tubes and removed by minimum-weight recovery on the output boundary.

Depending on the explicit form of soundness function $f$, the single-shot error correction gadget may amplify the measurement faults $m_X$ and $m_Z$ to a recovery operations with weights up to $f(2|m_X|)$ and $f(2|m_Z|)$ acting on the output boundary. By assuming the soundness function being linear $f(x)=\frac{1}{2}cx$, the correctness condition of a single-shot error correction gadget reduces to the case in Lemma~\ref{lem:amplified_ECCP} which is the ECCP condition of $(t,c)$-fault camplifying syndrome extraction.

\section{Fault complex from spacetime complex}

It has been widely believed that CSS Clifford circuits, fault complexes associated with bipartite graph states, and CSS (phase-free) ZX diagrams are closely related descriptions of the same underlying fault tolerance structure in spacetime. For instance, repeated syndrome extraction circuit represented in the ZX-calculus \cite{bombin2024unifying} and the corresponding foliated code \cite{foliation} exhibit similar geometric structures and are generally regarded as equivalent. However, to our knowledge, a rigorous derivation establishing the unification of all three objects is lacking.

In this section, we show how a fault complex \cite{foliation,faultcomplex,xu2026framework} and its corresponding cluster state MBQC scheme emerge from a circuit-based CSS protocol, as well as a ZX diagram, by explicitly constructing the chain map between spacetime complex and fault complex. This consequence is two fold: 1. at the algebraic level, we show the fault complex emerges from the factorization of the $X$ and $Z$ sectors of spacetime complex; 2. at the graphical level, we show the fault complex and bipartite cluster state inherits the graph structure because the spacetime complex can also be viewed as a bipartite graph formed by $X$ and $Z$ tensors. In the first subsection, we briefly review the $X$ and $Z$ tensors which are the fundamental building block of a CSS circuit. In the second subsection, we cover the high-level chain complex analysis to show why the fault complex serves as an emergent object for a CSS protocol whose spacetime complex admits a direct sum structure and have bipartite graph structure formed by $X$ and $Z$ tensors, by showing it can be mapped to a fault complex by applying a chain map. In the third subsection, we provide a microscopic analysis of such chain map and explicitly show the correspondance between $X/Z$ detectors in spacetime complex and primal/dual detectors in fault complex.

\subsection{\texorpdfstring{$X$}{} and \texorpdfstring{$Z$}{} tensors}

A CSS Clifford circuit can be represented as a tensor network constructed by contracting \(X\) and \(Z\) tensors. An $n$-leg $Z$ tensor has gauge group
\begin{eqs}\label{eq:Rz}
    \mathcal{R}_Z =\langle \bigotimes_{i=1}^n X_i, Z_1 Z_2,\ldots, Z_{n-1} Z_n \rangle.
\end{eqs}
Thus, a \(Z\) tensor has one \(X\)-type gauge generator and a $n-1$ weight-two \(Z\)-type gauge generators.

Similarly, an $n$-leg $X$ tensor has gauge group
\begin{eqs}\label{eq:Rx}
    \mathcal{R}_X=\langle \bigotimes_{i=1}^n Z_i, X_1 X_2, \ldots, X_{n-1} X_n \rangle.
\end{eqs}
An \(X\) tensor has one \(Z\)-type gauge generator and $n-1$ weight-two \(X\)-type gauge generators.

The $X$ and $Z$ tensors are also called $X$ and $Z$ spiders in the language of ZX calculus \cite{bombin2024unifying,rodatz2025fault}. We use $\mathcal{R}$ to denote the collection of local gauge groups of individual tensors
\begin{eqs}
    \mathcal{R}=\langle  \mathcal{R}_1,\ldots, \mathcal{R}_L \rangle,
\end{eqs}
for a circuit contains $L$ individual tensors and $\mathcal{R}_j$ denotes the gauge group of the $j$-th individual tensor. 

Thoroughout this section, we use the following notation to represents different types of gauge generator of tensors:
\begin{itemize}
    \item $g_{XX}$: \(X\)-type gauge generators of \(X\) tensors;
    \item $g_{ZZ}$: \(Z\)-type gauge generators of \(Z\) tensors;
    \item $g_{XZ}$: \(Z\)-type gauge generators of \(X\) tensors;
    \item $g_{ZX}$: \(X\)-type gauge generators of \(Z\) tensors;
\end{itemize}
In this notation, the first index specifies the tensor type and the second specifies the Pauli type.

\begin{figure}[h]
    \centering
    \includegraphics[width=0.28\textwidth]{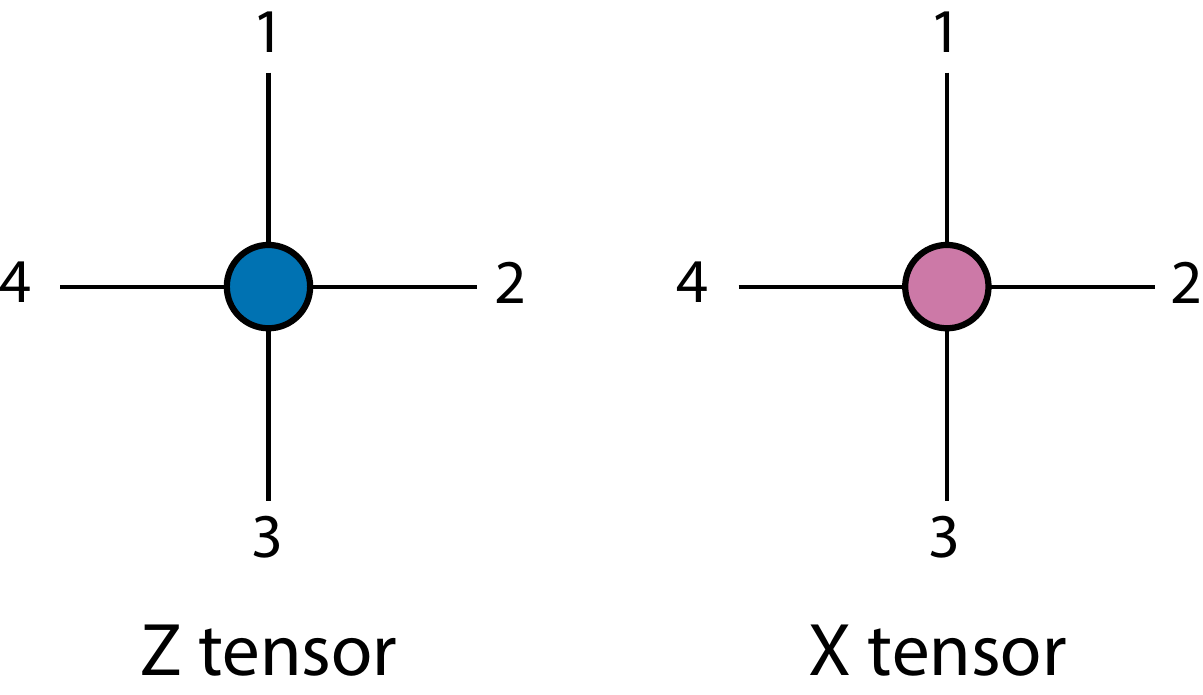}
    \caption{Illustration of the 4-leg $X$ and $Z$ tensors.}
    \label{fig:XZ_tensor}
\end{figure}

\subsection{Fault complex from spacetime complex}

\begin{theorem}
    For a CSS quantum circuit whose spacetime complex has a direct-sum structure, the circuit can be mapped to a bipartite graph state described by fault complex $\mathcal{F}$, namely a 4-term chain complex that describes the fault-tolerance properties of the protocol.
\end{theorem}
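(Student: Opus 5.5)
The plan is to build the fault complex $\mathcal{F}$ explicitly from the CSS spacetime complex $\mathcal{A}$ and then exhibit a chain map $\mathcal{A}\to\mathcal{F}$ that preserves (or at least faithfully transports) undetectable faults and detectors. First, I will use the direct-sum hypothesis to write the spacetime complex as $\mathcal{A}=\mathcal{A}^X\oplus\mathcal{A}^Z$. Because the circuit is CSS, every gauge generator in $\mathcal{S}_{\mathrm{in}},\mathcal{B},\mathcal{R},\mathcal{M}$ can be chosen purely $X$- or purely $Z$-type: these are $g_{XX},g_{ZX}$ versus $g_{ZZ},g_{XZ}$ together with the $X$/$Z$ bond gauges. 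Consequently the boundary maps $\partial_{2,A}=G^{\mathsf T}$ and $\partial_{1,A}=S\Omega$ are block diagonal after reordering. Concretely, $A_1=A_1^X\oplus A_1^Z$, where $Z$-type faults are detected by $X$-type stabilizers and vice versa. This gives two three-term complexes
\[
A_2^Z\to A_1^X\to A_0^X,\qquad A_2^X\to A_1^Z\to A_0^Z,
\]
with the symplectic pairing identifying $A_1^X$ and $A_1^Z$ as dual spaces on the same set of locations.

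Second, I will contract the bond gauges. Each bond $e$ carries two spacetime qubits, and by the quotient $\mathcal{P}_2/\mathcal{B}_e\cong\mathcal{P}_1$ noted earlier it represents a single fault position. I will therefore define a chain map that sums the two coordinates of each bond. This reduces $A_1$ to one coordinate per edge of the tensor network, so faults become edge variables. The tensors then play two roles: $Z$ tensors, via $g_{ZZ}$, give weight-two $Z$ gauges that are gauge-trivial, and via $g_{ZX}$ give one $X$ operator on all legs. $X$ tensors behave dually. The tensor network thus becomes a bipartite graph $\Gamma$ with $X$ vertices and $Z$ vertices. Following the standard cluster-state construction, I will place qubits on the vertices of $\Gamma$, with edges of the ZX graph collapsed through the weight-two gauges, so the graph state $\prod_{(u,v)\in\Gamma}\mathrm{CZ}_{uv}\ket{+}^{\otimes V}$ is the bipartite cluster state.

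The fault complex is then the 4-term complex
\[
F_3\to F_2\to F_1\to F_0,
\]
whose terms are the primal detectors, the $X$-vertex qubits, the $Z$-vertex qubits, and the dual detectors. The primal maps into the $X$-vertex qubits come from $X$-detectors $\mathcal{D}^X$; the map between the $X$- and $Z$-vertex qubits is the bipartite adjacency matrix of $\Gamma$, which is the bulk gauge map $G^{\mathsf T}_{\mathrm{bulk}}$ restricted through the contraction; and the dual detector map is the transpose of the map from $\mathcal{D}^Z$. I will check $\partial^2=0$ using $DG_{\mathrm{bulk}}^{\mathsf T}=0$, which holds because detectors lie in the center. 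I will then verify that the composite is a chain map, i.e.\ that the squares commute, using the bond-summation map defined above.

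The main obstacle is showing that this chain map induces an isomorphism on the relevant homology. Distances and logicals are transported only if $\ker\partial_{1,A}/\operatorname{im}\partial_{2,A}$ corresponds to $H_2(\mathcal{F})$ (or to $H_1$, depending on sector), and detectors correspond to primal and dual checks. I expect to prove this by showing that the kernel of the bond-summation map is spanned exactly by bond gauges and weight-two tensor gauges, both of which are in $\operatorname{im}\partial_{2,A}$, so quotienting by them does not change homology. Surjectivity onto $\mathcal{F}$-cycles would then follow from Lemma~\ref{lemma:outputequiv}-style propagation. Boundary terms ($\mathcal{S}_{\mathrm{in}}$, $\mathcal{M}$) may need to be treated as relative chains, so I would first restrict to the bulk complex and handle boundaries separately.
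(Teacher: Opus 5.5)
Your overall plan matches the paper's: split into $X$/$Z$ sectors, quotient out bond gauges and the weight-two tensor gauges so each spider contributes one qubit, put the bipartite adjacency matrix in the middle, and put detectors at the ends. But the target complex you write down attaches the detectors to the wrong vertex class. The primal/dual naming is immaterial; the pairing is not.

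With the collapse you describe, an $X$ fault on an edge is pushed by $g_{XX}$ to the $X$-spider end, and a $Z$ fault is pushed by $g_{ZZ}$ to the $Z$-spider end. An $X$-detector commutes with $g_{ZZ}$, $g_{XZ}$ and the $Z$-bonds. Hence its $X$-web is all-or-nothing at every $Z$ spider and has even incidence at every $X$ spider. It therefore becomes $X_{\alpha_{\mathbf h}}$ on a set $\alpha_{\mathbf h}\subseteq V_Z$ with $\partial_{2,F}^{\mathsf T}\alpha_{\mathbf h}=0$. So $X$-detectors furnish $\partial_{1,F}:F_1\to F_0$ on the $Z$-vertex qubits, which is where the $Z$ faults they detect live. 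Dually, $Z$-detectors furnish $\partial_{3,F}:F_3\to F_2$ into the $X$-vertex qubits.

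You instead have $X$-detectors mapping into the $X$-vertex qubits. An $X$-web has only even, not all-or-nothing, incidence at $X$ spiders, so there is no well-defined subset of $V_X$ to assign. Consequently your map $F_3\to F_2$, and the squares you intend to check, are not defined. This also clashes with your own sector $A_2^Z\to A_1^X\to A_0^X$: its surviving checks $g_{XZ}$ sit on $X$-vertices, while its syndromes are $X$-detector syndromes two degrees away, not one.

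The second missing ingredient is how a direct sum of two three-term complexes maps into one four-term complex. A degree-preserving chain map cannot send $A_1$ into a single $F_i$, because $Z$ faults land in $F_1$ and $X$ faults in $F_2$. Moreover, the $X$-vertex qubits play two roles: they label the $g_{XZ}$ checks in degree $2$ of the $Z$ sector, and they are the $X$-fault positions in degree $1$ of the $X$ sector.

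The paper handles this with a twisted target $F^1\oplus F_2\to F^2\oplus F_1\to F^3\oplus F_0$, with $\partial_{2,\mathbf F}=\mathrm{diag}(\partial_{2,F}^{\mathsf T},\partial_{2,F})$ and $\partial_{1,\mathbf F}=\mathrm{diag}(\partial_{3,F}^{\mathsf T},\partial_{1,F})$. The $X$ sector goes to the cochain piece, with $g_{ZX}\mapsto F^1$. The $Z$ sector goes to the chain piece, with $g_{XZ}\mapsto F_2$. Only afterwards is the cochain piece transposed and glued along $F_1,F_2$. The commuting squares are then exactly the Pauli-web incidence conditions, which your $DG^{\mathsf T}=0$ argument would supply once the pairing is corrected. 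Your mention of a transpose and of ``$H_2$ or $H_1$ depending on sector'' points toward this, but it has to be built into the construction.

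Your homology-isomorphism goal goes beyond what the paper proves, and it is easy once the construction is right:
\begin{itemize}
\item Because $f_0=\mathrm{id}$, every lift of an $\mathcal F$-cycle is a cycle, which gives surjectivity on homology.
\item Because $\ker f_1$ is spanned by bond and weight-two gauges, $\ker f_1$ lies in $\operatorname{im}\partial_{2,A}$, which gives injectivity.
\end{itemize}
So no output-boundary propagation (Lemma~\ref{lemma:outputequiv}) is needed. Note also that the paper assumes, rather than derives, that the $X$/$Z$ network is bipartite.
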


We consider a CSS circuit associated with a CSS code whose spacetime tensor network representation consists of individual $X$  and $Z$ tensors. We assume that this representation is bipartite: every internal edge connects an $X$ tensor to a $Z$ tensor, so tensors of the same type are never directly connected.

Before constructing the spacetime complex, we gauge fix and eliminate the quantum degrees of freedom at the input, output, and measurement ports to obtain a reduced spacetime tensor. Each such port of an $X$ tensor is fixed to the $Z=+1$ eigenstate, while each such port of a $Z$ tensor is fixed to the $X=+1$ eigenstate. Gauge fixing a port reduces an $n$-leg tensor to an $(n-1)$-leg tensor of the same type. The gauge generators, fault positions, and detectors of the resulting reduced spacetime tensor define the chain complex.

\begin{equation}
\begin{tikzcd}[row sep=1.2em, column sep=1.0em]
\mathcal{A}_\bullet:&
A_2 \arrow[r, "\partial_{2,A}"] &
A_1 \arrow[r, "\partial_{1,A}"] &
A_0, \\
{} &
{\scriptstyle \text{gauge checks}} &
{\scriptstyle \text{fault positions}} &
{\scriptstyle \text{detector syndromes}}
\end{tikzcd}
\end{equation}
where $\partial_2=\begin{pmatrix}
    G_X^\mathsf{T} & 0\\
    0& G_Z^\mathsf{T}
\end{pmatrix}$, $\partial_1=\begin{pmatrix}
    S_Z & 0\\
    0 & S_X
\end{pmatrix}$. 
The chain complex condition $\partial_{1,A} \partial_{2,A}=0$ requires that $S_Z G_X^\mathsf{T}=0$ and $S_X G_Z^\mathsf{T}=0$ implying that the $X$- and $Z$-detectors always commute with $Z$- and $X$-gauge checks.

We can see this chain complex $\mathcal{A}_\bullet$ can be decomposed into direct sum of two complexes $\mathcal{A}_\bullet =\mathcal{C}_\bullet  \oplus \mathcal{E}_\bullet $
\begin{equation}
\label{eq:factorized_spacetime_complex}
\begin{aligned}
 &\begin{tikzcd}[row sep=1.2em, column sep=1.0em]\mathcal{C}_\bullet:&
C_{2} \arrow[r, "G_X^\mathsf{T}"] &
C_{1} \arrow[r, "S_Z"] &
C_{0}, \\
{} &
{\scriptstyle \text{$X$ gauge checks}} &
{\scriptstyle \text{$X$ fault positions}} &
{\scriptstyle \text{$Z$ detector syndromes}}
\end{tikzcd}\\
&\begin{tikzcd}[row sep=1.2em, column sep=1.0em]\mathcal{E}_\bullet:&
E_{2} \arrow[r, "G_Z^\mathsf{T}"] &
E_{1} \arrow[r, "S_X"] &
E_{0}. \\
{} &
{\scriptstyle \text{$Z$ gauge checks}} &
{\scriptstyle \text{$Z$ fault positions}} &
{\scriptstyle \text{$X$ detector syndromes}}
\end{tikzcd}
\end{aligned}
\end{equation}
Obviously, the detectors $D_X, D_Z$ of spacetime complex $\mathcal{A}_\bullet $ corresponds to the primal and dual detectors $D_{\mathrm{primal}}, D_{\mathrm{dual}}$ in the fault complex.

We can map the spacetime complex to a `twisted' fault complex by following chain maps $f_2,f_1,f_0$
\begin{equation}
\label{eq:ST_complex_chain_map}
\begin{tikzcd}[row sep=2.0em, column sep=2.0em]
A_2 \arrow[r, "\partial_{2,A}"] \arrow[d, "f_2"] &
A_1 \arrow[r, "\partial_{1,A}"] \arrow[d, "f_1"]&
A_0 \arrow[d, "f_0"] \\
F^1 \oplus F_2 \arrow[r, "\partial_{2,\mathbf{F}}"] &
F^2 \oplus F_1 \arrow[r, "\partial_{1,\mathbf{F}}"] &
F^3 \oplus F_0,
\end{tikzcd}
\end{equation}
where the boundary maps of the twisted fault complex are $\partial_{2,\mathbf{F}} =\begin{pmatrix}
     \partial_{2,F}^\mathsf{T} & 0\\
    0 & \partial_{2,F}
\end{pmatrix}$, $\partial_{1,\mathbf{F}} =\begin{pmatrix}
    \partial_{3,F}^\mathsf{T} & 0\\
    0 & \partial_{1,F}
\end{pmatrix}$. And the boundary map conditions $\partial_{1,\mathbf{F}}\partial_{2,\mathbf{F}}=0$ is equivalent to the boundary map conditions of conventional fault complex $\partial_{1,F}\partial_{2,F}=0, \partial_{3,F}^\mathsf{T} \partial_{2,F}^\mathsf{T}=0$. Here $\partial_{2,F}$ is the connectivity matrix between dual and primal fault positions, and $\partial_{1,F}, \partial_{3,F}^\mathsf{T}$ are primal and dual detector matrices. The detailed form of primal and dual detector matrices will be provided in the next subsection.

The chain map $f_2, f_1, f_0$ can be summarized as
\begin{itemize}
    \item Action of $f_2$:  The map $f_2$ acts on gauge checks. It sends the following generators to zero
    \begin{itemize}
        \item the weight-two \(X\)-type generators \(g_{XX}\) of the \(X\) tensors;
        \item the weight-two \(Z\)-type generators \(g_{ZZ}\) of the \(Z\) tensors;
        \item all generators belonging to the bond gauge group \(\mathcal B\).
    \end{itemize}
    The remaining generators are mapped nontrivially:
    \begin{itemize}
    \item the \(X\)-type generators of the \(Z\) tensors \(g_{ZX}\) , are mapped to the primal positions in \(F^1\);
    \item the \(Z\)-type generators of the \(X\) tensors \(g_{XZ}\), are mapped to the dual positions in \(F_2\).
    \end{itemize}
    Equivalently, each \(Z\) tensor is associated with a primal qubit, and each \(X\) tensor is associated with a dual qubit.

    \item Action of $f_1$: The map \(f_1\) acts on individual fault positions. A pictorial description of $f_1$ and how it acts on the gauge checks is provided in Fig.~\ref{fig:chain_map}.
    
    An \(X\) fault on an edge is mapped to a \(Z\) fault on the dual qubit associated with the incident \(X\) tensor. A \(Z\) fault is mapped to a \(Z\) fault on the primal qubit associated with the incident \(Z\) tensor. For an edge shared by an \(X\) tensor and a \(Z\) tensor, the faults on its two half-edges satisfy 
    \begin{eqs}
        f_1(XI)=f_1(IX)= Z_{\mathrm{dual}}, \quad f_1(ZI)=f_1(IZ)= Z_{\mathrm{primal}}.
    \end{eqs}
    Consequently, every weight-two gauge generators, including $g_{XX}, g_{ZZ}$ and elements in $\mathcal{B}$, are mapped to the identity:
    \begin{eqs}
        f_1(XX)=Z_{\mathrm{dual}}^2= \mathbb{I}, \quad  f_1(ZZ)=Z_{\mathrm{primal}}^2= \mathbb{I}.
    \end{eqs}
    Therefore, only two types of local gauge generators survive: $g_{XZ}$ and $g_{ZX}$. These surviving generators become the \(Z\)-components of the cluster-state stabilizers associated with the primal and dual qubits.

    \item $f_0$: identity map where the primal and dual syndromes corresponds to the $X/Z$ detector syndromes.
\end{itemize}

\begin{figure}[t]
    \centering
    \includegraphics[width=1.0\textwidth]{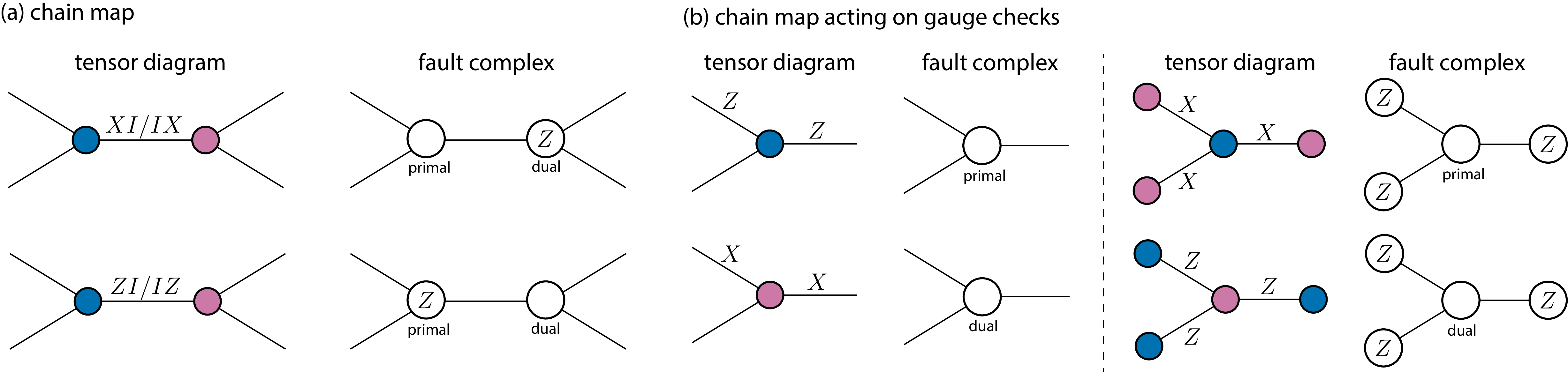}
    \caption{Pictorial representation of the chain map $f_1$ that maps fault positions in spacetime complex and tensor (ZX) diagram to fault positions fault complex. The chain map $f_1$ maps a $Z$ fault on an edge to the $Z$ fault acting on the primal qubit it connects to, and it also maps a $X$ fault on an edge to the $Z$ fault acting on the dual qubit it connects to. Note that chain map $f_2$ maps each $Z$ tensor to a primal qubit, and each $X$ tensor to a dual qubit. 
    (a) depicts the action of chain map $f_1$; (b) depicts how does chain map $f_1$ act on the gauge checks, which maps all the weight-2 gauge checks to identity, and only keep the $g_{XZ}$ and $g_{ZX}$. Each $X$ gauge acting on a $Z$ tensor is mapped to the $Z$ component of a cluster state stabilizer acting on the primal qubit given by $Z$ tensor, and similar for the $Z$ gauge acting on a $X$ tensor. }
    \label{fig:chain_map}
\end{figure}

Since the twisted fault complex has a direct sum structure, we can decompose it as

\begin{equation}
\label{eq:twisted_fault_complex_decomposition}
\begin{tikzcd}[row sep=1.0em, column sep=2.0em]
F^1 \arrow[r, "\partial_{2,F}^\mathsf{T}"]  &
F^2 \arrow[r, "\partial_{3,F}^\mathsf{T}"] &
F^3  \\
& \bigoplus &\\
F_2  \arrow[r, "\partial_{2,F}"] &
F_1  \arrow[r, "\partial_{1,F}"] &
F_0 .
\end{tikzcd}
\end{equation}
The first is part of the cochain complex, while the second is part of the chain complex.

Then we take the dual complex of the first cochain chain and combine with the second chain complex to assemble the conventional fault complex
\begin{equation}
\begin{tikzcd}[row sep=1.2em, column sep=1.0em]
F_3 \arrow[r, "\partial_{3,F}"]&
F_2 \arrow[r, "\partial_{2,F}"] &
F_1 \arrow[r, "\partial_{1,F}"] & 
F_0.\\
{\scriptstyle \text{dual detectors}} &
{\scriptstyle \text{dual fault positions}} &
{\scriptstyle \text{primal fault positions}} &
{\scriptstyle \text{primal syndromes}}
\end{tikzcd}
\end{equation}

\subsection{Microscopic analysis of the chain map}

In this subsection, we analyze the chain map microscopically and show how  it provides a one-to-one mapping between the spacetime detector (which are called circuit detectors in circuit analysis and Pauli web in ZX calculus) and detectors in fault complexes.

Note that a CSS quantum circuit can be treated as a tensor formed by contracting individual $X$ and $Z$ tensors, now we apply the chain map $f_2$ that map each $Z$ tensor to a primal qubit in $F_1$, and map each $X$ tensor to a dual qubit in $F_2$. This gives us a bipartite graph state whose connectivity between primal and dual qubits is given by the connectivity between the $X$ and $Z$ tensors. The input, output, and measurement ports are gauge fixed and traced out, because fault complex describes the fault tolerant property within the bulk of a protocol. This gives us a bipartite graph $(V_X \cup V_Z, E)$, which corresponds to following cluster state

\begin{equation}\label{eq:cluster_complex}
\begin{tikzcd}[row sep=1.2em, column sep=1.0em]
F_2 \arrow[r, "\partial_{2,F}"] &
F_1  ,& \\
{\scriptstyle \text{dual fault positions}} &
{\scriptstyle \text{primal fault positions}} &
\end{tikzcd}
\end{equation}
where $\mathrm{dim}(F_2)=|V_X|, \mathrm{dim}(F_1)=|V_Z|$, and $\partial_{2,F}$ is the adjacency matrix describing the connectivity. This gives us cluster state stabilizers $\langle X_{\beta } Z_{\partial_{2,F} \beta },X_{\alpha } Z_{\partial_{2,F}^\mathsf{T} \alpha } \rangle $ for all $\alpha \in F_1, \beta\in F_2$. Here, $\alpha$ and $\beta$ identify the collections of $Z$ and $X$ tensors which fully specifies the generator labels of $g_{ZX}$ and $g_{XZ}$. The map $f_2$ annihilates the generator labels of type $g_{XX}$ and $g_{ZZ}$, while the labels of type $g_{XZ}$ and $g_{ZX}$ specify the $X$-components of cluster state stabilizers, respectively. The map $f_1$, acting on the Pauli supports of these generators, produces the corresponding $Z$-components. Together, these maps establish the correspondence between gauge checks of individual tensors and cluster state stabilizers
\begin{eqs}\label{eq:gauge_cluster_mapping}
g_{XZ}(v_x)
&\longleftrightarrow
X_{{v_x}}Z_{\partial_{2,F}{v_x}},
\quad v_x\in V_X,\\
g_{ZX}(v_z)
&\longleftrightarrow
X_{{v_z}}Z_{\partial_{2,F}^{\mathsf T}{v_z}},
\quad v_z\in V_Z,
\end{eqs}
where $v_x$ and $v_z$ label the tensors supporting $g_{XZ}$ and $g_{ZX}$, respectively.
This mapping can be understood as a compression of the redundant fault positions in the spacetime subsystem code description. Modulo the local $g_{ZZ}$ generators, any $Z$-fault pattern on the legs of a $Z$ tensor is equivalent to either the identity or a single $Z$ fault, according to whether its weight is even or odd. Similarly, modulo the local $g_{XX}$ generators, any $X$-fault pattern on the legs of an $X$ tensor reduces to either the identity or a single $X$ fault. Thus each $Z$ tensor contributes one effective $Z$-fault position, and each $X$ tensor contributes one effective $X$-fault position. Under $f_1$, these effective faults are represented by $Z$ faults on the associated primal and dual cluster-state qubits, respectively.

\begin{lemma}\label{corollary: detector_support}
    The detectors in $\partial_{1,A}$ can be written as elements in $\mathcal{R}$, and they can also be written as elements in $\langle \mathcal{B} , \mathcal{M} \rangle $. A $X$-detector can be written as $\partial_{2,A} \mathbf{h}$ where $\mathbf{h}$ is a sum of two vectors    $\mathbf{h}=\mathbf{h}_X+\mathbf{h}_Z$ corresponding to the $g_{XX}$ and $g_{ZX}$, a $Z$-detector can be written as $\partial_{2,A} \mathbf{g}$ where $\mathbf{g}$ is a sum of two vectors $\mathbf{g}=\mathbf{g}_Z+\mathbf{g}_X$ corresponding to the $g_{ZZ}$ and $g_{XZ}$.  
\end{lemma}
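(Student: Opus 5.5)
The plan is to derive the claim directly from Definition~\ref{def:detector_stabilizer_tube} and Lemma~\ref{lemma:spacetimestabilizer}, specialized to the gauge-fixed, bipartite CSS tensor network of this section. There the input, output, and measurement ports have been fixed and absorbed, so the tensors are pure $X$ and $Z$ tensors and $\mathcal G$ is generated by $\mathcal R$ (the spider gauges $g_{XX},g_{ZZ},g_{XZ},g_{ZX}$) together with $\mathcal B$ and the absorbed boundary gauges.

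\textbf{Step 1: two representations.} A detector $s\in\mathcal D$ satisfies $s|_{\mathrm{input}}\in\mathcal S_{\mathrm{in}}$ and $s|_{\mathrm{output}}\in\mathcal M$. The proof of Lemma~\ref{lemma:spacetimestabilizer} shows that such an $s$ can be generated by $\mathcal S_{\mathrm{in}}\cup\mathcal R$, and also by $\mathcal B\cup\mathcal M$. After the input ports are gauge fixed, the $\mathcal S_{\mathrm{in}}$ factors are absorbed into the reduced tensors. This gives the first claim: $s\in\mathcal R$ and $s\in\langle\mathcal B,\mathcal M\rangle$.

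\textbf{Step 2: CSS splitting.} Because the complex splits as $\mathcal A_\bullet=\mathcal C_\bullet\oplus\mathcal E_\bullet$ in Eq.~\eqref{eq:factorized_spacetime_complex}, every detector generator can be taken to be purely $X$-type or purely $Z$-type. Take an $X$-detector and write it as $\partial_{2,A}\mathbf h$ with $\mathbf h$ supported on $\mathcal R$-generators, using Step~1. The $X$-type generators of $\mathcal R$ are exactly $g_{XX}$, from $X$ tensors, and $g_{ZX}$, from $Z$ tensors. The other generators, $g_{ZZ}$ and $g_{XZ}$, are $Z$-type, and $\mathcal R$ is a direct product of its $X$- and $Z$-type parts. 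Projecting the relation onto the $X$ sector therefore lets us drop any $Z$-type generators from $\mathbf h$ and keep the same operator, so $\mathbf h=\mathbf h_X+\mathbf h_Z$ with $\mathbf h_X$ supported on $g_{XX}$ labels and $\mathbf h_Z$ on $g_{ZX}$ labels. The $Z$-detector case is symmetric: one gets $\mathbf g=\mathbf g_Z+\mathbf g_X$, with $\mathbf g_Z$ on $g_{ZZ}$ labels and $\mathbf g_X$ on $g_{XZ}$ labels.

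\textbf{Main obstacle.} The delicate step is Step~1 after gauge fixing. One must check that absorbing the input-state stabilizers and measurement operators into reduced tensors keeps the $\mathcal R$-representation valid. For the $\mathcal R$ side this is a local statement: fixing a $Z$-tensor port to $X=+1$ turns the generator $\bigotimes_i X_i$ into the reduced tensor's $X$-string, and the $ZZ$ generators touching that port drop out. The matching statement for $\mathcal M$, that it folds into $\mathcal B$, I would verify leg by leg using the same spider-fusion rule.

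Step~2 also needs a minimal-rank check, namely that the map from generator labels to operators may be non-injective. This is harmless, because the claim only asks for some preimage $\mathbf h$, not a unique one.
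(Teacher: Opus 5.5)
Your proposal is correct and follows the reasoning the paper relies on. The paper states this lemma without a separate proof; it rests on the observation in the proof of Lemma~\ref{lemma:spacetimestabilizer} that a detector is generated both by $\mathcal{S}_{\mathrm{in}}\cup\mathcal{R}$ and by $\mathcal{B}\cup\mathcal{M}$ (with the port gauges absorbed by this section's gauge fixing), together with the pure-type structure of the spider generators. Your justification of the splits $\mathbf{h}=\mathbf{h}_X+\mathbf{h}_Z$ and $\mathbf{g}=\mathbf{g}_Z+\mathbf{g}_X$ is the needed step: in an $\mathcal{R}$-representation of a pure $X$-type detector, the $Z$-type generators must multiply to the identity and can be dropped, and symmetrically for the $Z$-detectors.
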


\begin{corollary}
    A $X$-detector $D_{X,\mathbf{h}}=X(\partial_{2,A} \mathbf{h})$ specified by $\mathbf{h}=\mathbf{h}_X+ \mathbf{h}_Z \in A_2$, it corresponds to a primal detectors $X_{f_2(\mathbf{h}_Z)}$. A $Z$-detector $D_{Z,\mathbf{g}}=X(\partial_{2,A} \mathbf{g})$ specified by $\mathbf{g}=\mathbf{g}_Z+ \mathbf{g}_X \in A_2$, it corresponds to a dual detectors $X_{f_2(\mathbf{g}_X)}$. Because $f_2(\mathbf{h}_X)= f_2(\mathbf{g}_Z)=0$. The $X$ and $Z$-detectors give the boundary map $\partial_{1,\mathbf{F}}$.
\end{corollary}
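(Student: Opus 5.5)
The plan is to check that the correspondence $D_{X,\mathbf h}\mapsto X_{f_2(\mathbf h_Z)}$ is a well-defined map into primal detectors, that it preserves syndromes, and that it is a bijection onto the primal detectors. The $Z$-detector case then follows by exchanging the roles of $X$ and $Z$ tensors.

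\textbf{Step 1: the image is a primal detector.} By Lemma~\ref{corollary: detector_support}, an $X$-detector can be written as $D_{X,\mathbf h}=X(\partial_{2,A}\mathbf h)$ with $\mathbf h=\mathbf h_X+\mathbf h_Z$. The same lemma says it also lies in $\langle\mathcal B,\mathcal M\rangle$. After gauge fixing the measurement ports it is therefore a product of bond gauges, so $f_1(\partial_{2,A}\mathbf h)=0$. The chain-map identity $f_1\partial_{2,A}=\partial_{2,\mathbf F}f_2$ then gives $\partial_{2,F}^{\mathsf T}f_2(\mathbf h)=0$. Since $f_2$ annihilates $g_{XX}$ and all bond generators, $f_2(\mathbf h)=f_2(\mathbf h_Z)$. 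Hence $\partial_{2,F}^{\mathsf T}f_2(\mathbf h_Z)=0$. By Eq.~\eqref{eq:gauge_cluster_mapping}, the product of the cluster stabilizers $X_{v_z}Z_{\partial_{2,F}^{\mathsf T}v_z}$ over $v_z\in f_2(\mathbf h_Z)$ has its $Z$-part cancel. What remains is the pure $X$-type operator $X_{f_2(\mathbf h_Z)}$, and its vector satisfies $f_2(\mathbf h_Z)^{\mathsf T}\partial_{2,F}=0$. This is exactly the defining property of a row of the primal detector matrix $\partial_{1,F}$.

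\textbf{Step 2: syndromes are preserved.} I would show $\langle\partial_{2,A}\mathbf h,e\rangle=\langle f_2(\mathbf h_Z),f_1(e)\rangle$ for every $Z$-fault $e\in E_1$. Since the detector commutes with all gauge operators, I may first use bond and $g_{ZZ}$ gauges to move $e$ onto legs of $Z$ tensors; this changes neither side, because $f_1$ kills weight-two gauges. A $Z$ fault on a leg of a $Z$ tensor $v_z$ overlaps the $X$-support of $\partial_{2,A}\mathbf h$ only through $g_{ZX}(v_z)$. The $g_{XX}$ generators live on legs of $X$ tensors and so do not contribute. The anticommutation bit is therefore $[\mathbf h_Z]_{v_z}=\langle f_2(\mathbf h_Z),f_1(e)\rangle$. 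Consequently the spacetime $X$-detectors implement $\partial_{1,\mathbf F}$ restricted to the primal sector, i.e.\ $f_0\partial_{1,A}=\partial_{1,\mathbf F}f_1$ on that sector.

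\textbf{Step 3: well-definedness and bijectivity (expected main obstacle).} For well-definedness, two choices $\mathbf h,\mathbf h'$ giving the same detector differ by an element of $\ker\partial_{2,A}$. I would argue that such a relation among local gauge generators, with ports fixed, forces the $g_{ZX}$ coefficients to vanish, so $f_2(\mathbf h_Z)=f_2(\mathbf h'_Z)$. For surjectivity, given a primal detector $r$ with $r^{\mathsf T}\partial_{2,F}=0$, I would lift it to $\prod_{v_z\in r}g_{ZX}(v_z)$. By bipartiteness this leaves $X$'s on the legs of each $X$ tensor, with an even number of them per tensor because $r^{\mathsf T}\partial_{2,F}=0$. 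These can be paired off with $g_{XX}$ chains together with bond $XX$ operators. The resulting element of $\mathcal R$ equals an element of $\langle\mathcal B,\mathcal M\rangle$ after gauge fixing the ports, so it is a detector $\partial_{2,A}\mathbf h$ with $f_2(\mathbf h_Z)=r$. Injectivity modulo trivial detectors follows from Step 2: a detector whose image vanishes has zero syndrome on all faults and so lies in $\operatorname{im}\partial_{2,A}$ among detectors.

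The delicate part is the port bookkeeping in the surjectivity and well-definedness arguments. The same reasoning with $X$ and $Z$ exchanged gives the dual detectors $X_{f_2(\mathbf g_X)}$, which assemble into $\partial_{3,F}^{\mathsf T}$, and together these yield $\partial_{1,\mathbf F}$.
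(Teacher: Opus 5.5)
Your proof is correct, but its key step differs from the paper's. The paper argues combinatorially. It rewrites $D_{X,\mathbf h}$ as a Pauli web $W_X(a_{\mathbf h})$, an $XX$ pattern on an edge set, and reads off the incidence conditions: even incidence at every $X$ tensor, full-or-none incidence at every $Z$ tensor. It then sets $\alpha_{\mathbf h}=f_2(\mathbf h)$ to be the fully incident $Z$ tensors and notes that even incidence at the $X$ tensors is literally $\partial_{2,F}^{\mathsf T}\alpha_{\mathbf h}=0$. Hence the product of cluster stabilizers over $\alpha_{\mathbf h}$ collapses to $X_{\alpha_{\mathbf h}}$. You instead push the two descriptions of the detector (in $\mathcal R$ and in $\langle\mathcal B\rangle$) through the first commuting square, $\partial_{2,\mathbf F}f_2(\mathbf h)=f_1\partial_{2,A}\mathbf h=f_1(\text{bond gauges})=0$. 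This shows that the Pauli-web parity conditions say exactly that $f_1$ kills bond gauges, though you lose the paper's explicit picture of which edges the detector occupies.

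The paper presents the chain-map identities as consequences of the incidence conditions. To avoid appearing circular, you should state that $f_1\partial_{2,A}=\partial_{2,\mathbf F}f_2$ is verified directly on generators. Concretely, $f_1$ sends $g_{ZX}(v_z)$ to $Z$ on $\partial_{2,F}^{\mathsf T}v_z$, sends $g_{XZ}(v_x)$ to $Z$ on $\partial_{2,F}v_x$, and sends every weight-two or bond generator to the identity; no incidence condition is needed.

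Your Steps 2 and 3 go beyond the paper. The paper only asserts the one-to-one correspondence and that $f_0=\mathrm{id}$ satisfies $f_0\partial_{1,A}=\partial_{1,\mathbf F}f_1$. You check the syndrome identity leg by leg and lift an arbitrary $r\in\ker\partial_{2,F}^{\mathsf T}$ back to a detector.

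One wording fix in the injectivity step: a detector with $f_2(\mathbf h_Z)=0$ commutes with all $Z$ faults by Step 2 and trivially with all $X$ faults, so it is the identity operator. That is the conclusion you need, not membership in $\operatorname{im}\partial_{2,A}$, which every detector already has.
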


Based on Lemma~\ref{corollary: detector_support}, we know the $X$- and $Z$-detectors are generated by the gauge group of individual tensors which are also required to commute with the entire gauge group, the detectors can be written as
\begin{eqs}
    D_{X,\mathbf{h}}=X(\partial_{2,A} \mathbf{h}) \equiv  W_{X}(a_{\mathbf{h}}),\quad D_{Z,\mathbf{g}}=Z(\partial_{2,A} \mathbf{g}) \equiv W_{Z}(b_{\mathbf{g}}),
\end{eqs}
where $\mathbf{h},\mathbf{g}$ are indices for gauge checks in $\mathcal{G}$, $a_{\mathbf{h}}, b_{\mathbf{g}} \subseteq E$ are subset of edges, and $W_{X}(a_\mathbf{h}), W_{Z}(b_\mathbf{g})$ represent the Pauli $X$ and $Z$ operators acting on the edges $a_{\mathbf{h}}, b_{\mathbf{g}}$. Note that $W_{X}(a_{\mathbf{h}}), W_{Z}(b_{\mathbf{g}})$ acts as $XX$ and $ZZ$ on the edges connecting to two tensors,. It does not have any single $X$ and $Z$ acting on the edge connecting to two tensors, because we have removed all the input, output, and measurement ports, and for contracted legs the single-qubit Pauli anticommute with the gauge group $\mathcal{B}$ hence it is impossible to appear.

The edge sets $a_\mathbf{h}, b_\mathbf{g}$ satisfy following incidence conditions with the vertices
\begin{eqs}\label{eq:incidence_conditions}
    \begin{cases}
    \mathcal{I} (v_x, a_h) \mod 2=0, \quad \forall v_x \in V_X,\\
    \mathcal{I} (v_z, a_h) \in \{0,\mathrm{deg}(v_z)\}, \quad \forall v_z \in V_Z,\\
    \mathcal{I} (v_z, b_g) \mod 2=0, \quad \forall v_z \in V_Z,\\
    \mathcal{I} (v_x, b_g) \in \{0,\mathrm{deg}(v_x)\}, \quad \forall v_x \in V_X,
    \end{cases}
\end{eqs}
where $\mathcal{I} (v, \tilde{E}) = |\{ e\in \tilde{E}: v\in \delta( e)\}|$ is the incidence number of a subset of edges $\tilde{E} \subseteq E$ on vertex $v$, here $\delta(e)$ denotes the endpoints of edge $e$ and $\mathrm{deg}(v )$ represents the degree of vertex $v$. These incidence constraints follow directly from the gauge groups of the individual $X$ and $Z$ tensors given in Eqs.~\eqref{eq:Rx} and \eqref{eq:Rz}. Those conditions are also named as Pauli web in the ZX literatures \cite{bombin2024unifying,rodatz2025fault}. These incidence conditions guarantees that $f_2, f_1,f_0$ are valid chain map, satisfying $f_1 \partial_{2,A}= \partial_{2,\mathbf{F}}f_2$ and $f_0 \partial_{1,A}= \partial_{1,\mathbf{F}}f_1$. 

We note that $a_h$ has either full or no incidence with any $Z$ tensor $v_z$, and $b_g$ has either full or no incidence with any $X$ tensor $v_x$, we can define $W_{X}, W_{Z}$ as a collection of $XX,ZZ$ operators defined on the center of $v_z, v_x$ and satisfy extra even incidence conditions with $v_x, v_z$ respectively. Above incidence conditions corresponds to the chain map $f_2$ removing all the entries corresponds to $g_{XX}$ and $g_{ZZ}$, and replace each $g_{ZX}$ and $g_{XZ}$ by a primal and dual fault position.

On the cluster state side we use $\alpha_{\mathbf{h}}\equiv f_2(\mathbf{h}), \beta_{\mathbf{g}} \equiv f_2(\mathbf{g})$ to denote the corresponding vertices that have full incidence with $a_h$ and $ b_g$
\begin{eqs}
    &\alpha_{\mathbf{h}}= f_2(\mathbf{h})=\{v_z \in V_Z: \mathcal{I}(v_z, a_\mathbf{h})=\mathrm{deg}(v_z)\} ,\\
    &\beta_{\mathbf{g}}=f_2(\mathbf{g})=\{v_x \in V_X: \mathcal{I}(v_x, b_\mathbf{g})=\mathrm{deg}(v_x)\} ,
\end{eqs}
such mapping is given by chain map $f_2$. Note that $\alpha_{\mathbf{h}}$ doesn't contain any vertices in $v_x$, and $\beta_{\mathbf{g}}$ doesn't contain any vertices in $v_z$ because of $f_2$ in Eq.~\eqref{eq:incidence_conditions} maps $g_{XX}, g_{ZZ}$ to zero.

After mapping to the cluster state, we apply single-qubit $X$ measurement to all the qubits. We find $W_{X}, W_{Z}$ gives us the primal and dual detectors which are cluster state stabilizers surviving after measurments
\begin{eqs}
    &W_{X}(a_\mathbf{h}) \longleftrightarrow  X_{f_2(\mathbf{h})} Z_{f_1(a_\mathbf{h})}=X_{f_2(\mathbf{h})} Z_{\partial_{2,F}^\mathsf{T} f_2(\mathbf{h})}=X_{\alpha_{\mathbf{h}}} Z_{\partial_{2,F}^\mathsf{T} \alpha_{\mathbf{h}}} =X_{\alpha_{\mathbf{h}}},\\
    &W_{Z}(b_\mathbf{g}) \longleftrightarrow X_{f_2(\mathbf{g})} Z_{f_1(b_\mathbf{g})}=X_{f_2(\mathbf{g})} Z_{\partial_{2,F} f_2(\mathbf{g})}=X_{\beta_{\mathbf{g}}} Z_{\partial_{2,F} \beta_{\mathbf{g}}} =X_{\beta_{\mathbf{g}}},
\end{eqs}
which are primal and dual detectors in $\partial_{1,F}$ and $\partial_{3,F}^\mathsf{T}$ acting on primal and dual qubits $\alpha_\mathbf{h}$ and $\beta_{\mathbf{g}}$ respectively. The primal and dual detectors are cluster state stabilizers centering on the primal qubits $\alpha_{\mathbf{h}}$ and dual qubits $\beta_{\mathbf{g}}$ respecitvely. The $Z_{\partial_{2,F}^\mathsf{T} \alpha_{\mathbf{h}}}=Z_{\partial_{2,F} \beta_{\mathbf{g}}}= \mathbb{I} $ because of the even incidence conditions in Eq.~\eqref{eq:incidence_conditions}, which give $\partial_{2,F}^\mathsf{T} \alpha_{\mathbf{h}} = \partial_{2,F} \beta_{\mathbf{g}}=0$. Equivalently, each $X$-/$Z$-detector in the ZX diagram maps to a primal/dual detector supported on the qubits corresponding to the $g_{XZ}$/$g_{ZX}$ gauge generators that constitute the original detector in the spacetime complex. Above equation also matches the chain map condition $f_1 \partial_{2,A} =\partial_{2,\mathbf{F}} f_2$. 

Hence $W_{X}(a_{\mathbf{h}})$ gives a primal detector acting on all the primal qubits which corresponds to the $Z$ tensors $\alpha_h$ on the support of $D_{X,\mathbf{h}}$, and $W_{Z}(b_{\mathbf{g}})$ gives a dual detector acting on all the dual qubits which corresponds to the $X$ tensors $\beta_{\mathbf{g}}$ on the support of $D_{Z,\mathbf{g}}$. The one-to-one correspondance between $X/Z$ detectors and primal/dual detectors automatically gives the identity $f_0$ between detector syndromes, and $f_0$ also satisfies the chain map condition $f_0 \partial_{1,A} =\partial_{1,\mathbf{F}} f_1$.

This gives a fault complex 

\begin{equation}
\begin{tikzcd}[row sep=1.2em, column sep=1.0em]
F_3 \arrow[r, "\partial_{3,F}"]&
F_2 \arrow[r, "\partial_{2,F}"] &
F_1 \arrow[r, "\partial_{1,F}"] & 
F_0,\\
{\scriptstyle \text{dual detectors}} &
{\scriptstyle \text{dual qubits}} &
{\scriptstyle \text{primal qubits}} &
{\scriptstyle \text{primal syndromes}}
\end{tikzcd}
\end{equation}
where the first and last boundary maps correspond to the dual and primal detector matrices $D_{\mathrm{dual}}^\mathsf{T}$ and $D_{\mathrm{primal}}$ respectively. 

This implies that the fault complex is an emergent object when we consider a CSS protocol where the $X$ and $Z$ tensor form a bipartite graph and the spacetime complexes possess a direct sum structure. However, for general fault tolerant protocol, such as the non-CSS protocols, we should consider the spacetime circuit complex instead. Because the non-CSS protocol only has one detector (hyper)graph, unlike the CSS protocol has factorized $X$- and $Z$- detector (hyper)graphs. This results allows us to map a CSS Clifford circuit or a CSS ZX diagram to a fault complex which simplifies the analysis and gives us a corresponding MBQC protocol.

\section{Discussion and outlook}

We established an algebraic formalism for spacetime fault tolerance in encoded Clifford protocols, unifying the analysis of different elements including quantum codes, gates, and measurements in the entire dynamical process. This formalism makes explicit the relationships among spacetime codes, circuit detectors, fault complexes, Gottesman's gadget framework, and the ZX calculus. Our criteria further relate spacetime fault distance and fault propagation to gadget correctness properties, connecting the algebraic description to the operational conditions used in standard gadget composition and threshold analyses~\cite{gottesman2024surviving}. This unified perspective suggests several directions worth further pursuing, which we outline below.

First, an important next step is to investigate conditions for positive noise thresholds in families of spacetime complexes. This calls for linking algebraic properties such as soundness or confinement to quantitative bounds on recovery failure probabilities and the residual noise passed between gadgets. Establishing such bounds under a certain noise model and showing that error suppression remains effective under repeated gadget composition would provide a route to useful spacetime threshold theorems.

Another valuable direction is to further develop algebraic constructions of low-overhead fault-tolerant protocols. 
Recent work on spacetime lifting gives fault complexes with almost-linear fault distance in the total spacetime cost, together with corresponding measurement-based realizations~\cite{xu2026framework}.
Advances in construction techniques for quantum LDPC codes~\cite{panteleev2022asymptotically,breuckmann2021balanced,dinur2024expansion} motivate extending these methods from static codes to spacetime complexes to further improve the achievable fault tolerance parameters. Relatedly, a useful goal is to characterize when algebraically defined spacetime complexes admit efficient circuit realizations and to develop systematic methods for constructing such realizations while preserving their fault tolerance properties.

From a practical perspective, our framework could guide the joint optimization of codes, circuits, and decoders under hardware and noise constraints. Building on detector-based methods, morphing circuits~\cite{mcewen2023relaxing,derks2025designing,shaw2025morphing,shaw2026optimising}, and ZX-based circuit transformations~\cite{rodatz2025fault,rusch2025completeness}, one could systematically explore more efficient implementations while keeping track of their fault-tolerance properties. It would also be valuable to investigate numerically how the algebraic structure of spacetime complexes influences logical error rates and decoding performance under circuit-level noise.

It is also likely fruitful to explore the connections between spacetime fault tolerance, many-body physics, and Hamiltonian complexity. A natural question is how the fault tolerance properties of a spacetime protocol are related to the energy barriers and thermal behaviors of an associated Hamiltonian~\cite{bravyi2011energy,finite_temperature_memory}. Such connections will shed light on the physical stability and computational properties of low-energy states, which are central objects of study in many-body physics and Hamiltonian complexity.  Also, drawing on ideas from self-correcting quantum memories~\cite{dennis2002topological,alicki2010thermal,balasubramanian2026passive}, one could further investigate what thermal stability implies for circuit-level fault tolerance and seek new approaches to protocol design and decoding.

To conclude, our theory provides an algebraic foundation for exploring the connections between spacetime fault tolerance, quantum coding theory, many-body physics, and Hamiltonian complexity, offering a broader perspective on the principles and physical realization of reliable quantum computation.

\section*{Acknowledgement }
Y.X.\ thanks Yujie Zhang and Yilun Li for hosting his research visits in Tokyo. Part of this work was done while Y.X.\ was attending the Fault Tolerant Quantum Technologies (FTQT) 2026 workshop hosted at the Benasque Science Center. Y.W is supported by a startup funding from SIMIS. Z.-W.L.\ is supported in part by NSFC under Grant No.~12475023, Dushi Program, and a startup funding from YMSC. 

\bibliography{biblo.bib}
\end{document}